\declaretheorem[numberwithin=section,refname={Theorem,Theorems},Refname={Theorem,Theorems}]{theorem}
\declaretheorem[numberlike=theorem]{lemma}
\declaretheorem[numberlike=theorem]{corollary}
\declaretheorem[numberlike=theorem,style=definition]{definition}
\declaretheorem[numberlike=theorem]{claim}
\declaretheorem[numberlike=theorem,style=remark]{remark}
\declaretheorem[numberlike=theorem,refname={Conjecture,Conjectures},Refname={Conjecture,Conjectures},name={Conjecture}]{conjecture}
\def\final{1}  %
\def\iflong{\iffalse}
\newcommand{\yonggang}[1]{{\color{blue}[{\tiny Yonggang: \bf #1}]\marginpar{*}}}
\newcommand{\danupon}[1]{{\color{red}[{\tiny Danupon: \bf #1}]\marginpar{\color{red}*}}}
\newcommand{\sagnik}[1]{{\color{green!50!black}[{\tiny Sagnik: \bf #1}]\marginpar{\color{green!50!black}*}}}
\newcommand{\todo}[1]{{\color{red}[{\tiny TODO: \bf #1}]\marginpar{\color{red}*}}}
\newcommand{\yuval}[1]{{\bf \color{red!50!black} YUVAL: #1}}
\newcommand{\jan}[1]{{\bf \color{green!50!black} Jan: #1}}
\newcommand{\blikstad}[1]{\textup{\color{magenta} [\textbf{Joakim}: #1]}}
\newcommand{\tawei}[1]{{\color{blue} [{\bf Ta-Wei:} #1]}}
\newcommand{\TODO}[1]{{\color{blue!50!black} [{\bf Todo:} #1]}}
\newcommand{\yonggang}[1]{}
\newcommand{\danupon}[1]{}
\newcommand{\sagnik}[1]{}
\newcommand{\todo}[1]{}
\newcommand{\yuval}[1]{}
\newcommand{\jan}[1]{}
\newcommand{\blikstad}[1]{}
\newcommand{\tawei}[1]{}
\newcommand{\TODO}[1]{}
\newcommand{\eps}{\varepsilon}
\newcommand{\polylog}{\mathrm{polylog}}
\newcommand{\poly}{\mathrm{poly}}
\newcommand{\set}[2][ ]{\{#2 \ifthenelse{\equal{#1}{ }}{ }{~|~#1}\}}
\newcommand{\N}{\mathbb{N}}
\newcommand{\tOh}{\widetilde{O}}
\newcommand{\The}[1]{\tilde{\Theta}\left(#1\right)}
\newcommand{\pset}{\mathcal{P}}
\newcommand{\aset}{\mathcal{A}}
\renewcommand{\L}{\mathcal{L}}
\newcommand{\I}{\mathcal{I}}
\renewcommand{\ss}{shortcut}
\newcommand{\ssss}[2]{$(#2,#1)$-shortcut}
\newcommand{\labcov}{{\sf LabelCover}}
\newcommand{\mlab}{multilabeling}
\newcommand{\os}{{\sf MinShC}}
\newcommand{\oss}[2]{$(#2,#1)$-\os{}}
\newcommand{\tc}{TC spanner}
\newcommand{\TC}[2]{$(#2,#1)$-{\sf MinTC}}
\newcommand{\TCs}[2]{$(#2,#1)$-\tc{}}
\newcommand{\conj}{PGC}
\newcommand{\distt}[2]{\text{dist}_{#1}\left(#2\right)}
\newcommand{\AB}{\Delta}
\newcommand{\gadget}[2]{$#2$-{\sf MinStShC}$\mid_{#1}$}
\newcommand{\ga}{{\sf MinStShC}}
\newcommand{\lan}{{c_{lrs}}}
\newcommand{\bufs}{{s'}}
\newcommand{\minre}{LabelCover}
\newcommand{\epsl}{\epsilon_{L}}
\renewcommand{\a}[1]{a^{(#1)}}
\renewcommand{\aa}[2]{a^{(#1)}_{#2}}
\newcommand{\aaa}[3]{\alpha^{(#1)}_{#2}[#3]}
\renewcommand{\b}[1]{b^{(#1)}}
\newcommand{\bb}[2]{b^{(#1)}_{#2}}
\newcommand{\bbb}[3]{\beta^{(#1)}_{#2}[#3]}
\newcommand{\dia}{\mathcal{\rho}}
\newcommand{\cs}{c_M}
\newcommand{\idx}{I}
\newcommand{\Idx}[1]{\idx{}[#1]}
\newcommand{\apxS}{\alpha_S}
\newcommand{\apxD}{\alpha_D}
\Crefname{algocf}{Algorithm}{Algorithms}
\newcommand\footnoteref[1]{\protected@xdef\@thefnmark{\ref{#1}}\@footnotemark}
\renewcommand{\paragraph}[1]{\medskip\noindent{\bf #1}\xspace}
\title{Shortcuts and Transitive-Closure Spanners Approximation}
\author{
Parinya Chalermsook \thanks{University of Sheffield, \texttt{chalermsook@gmail.com}}\and
Yonggang Jiang\thanks{MPI-INF, Germany, \texttt{yjiang@mpi-inf.mpg.de}} \and 
Sagnik Mukhopadhyay\thanks{University of Birmingham \texttt{s.mukhopadhyay@bham.ac.uk}} \and
Danupon Nanongkai\thanks{MPI-INF, Germany, \texttt{danupon@gmail.com}}
}
\date{}
\begin{document}
	
	\begin{titlepage}
		\maketitle 
         
        \pagenumbering{roman}

\begin{abstract}

We study polynomial-time approximation algorithms for two closely-related problems, namely computing shortcuts and transitive-closure spanners (TC spanners). 
For a directed unweighted graph $G=(V, E)$ and an integer $d$, a set of edges $E'\subseteq V\times V$ is called a $d$-TC spanner of $G$ if the graph $H:=(V, E')$ has  (i) the same transitive-closure as $G$ and (ii) diameter at most $d.$ The set $E''\subseteq V\times V$ is a $d$-shortcut of $G$ if $E\cup E''$ is a $d$-TC spanner of $G$. Our focus is on the following $(\alpha_D, \alpha_S)$-approximation algorithm: given a directed graph $G$ and integers $d$ and $s$ such that $G$ admits a $d$-shortcut (respectively $d$-TC spanner) of size $s$, find a $(d\alpha_D)$-shortcut (resp. $(d\alpha_D)$-TC spanner) with $s\alpha_S$ edges, for as small $\alpha_S$ and $\alpha_D$ as possible. These problems are important special cases of graph sparsification and arise naturally in the context of reachability problems across computational models. 

As our main result, we show that, under the Projection Game Conjecture (PGC), there exists a small constant $\epsilon>0$, such that no polynomial-time $(n^{\epsilon},n^{\epsilon})$-approximation algorithm exists for finding $d$-shortcuts as well as $d$-TC spanners of size $s$. Previously, super-constant lower bounds were known only for $d$-TC spanners with constant $d$ and ${\alpha_D}=1$ [Bhattacharyya, Grigorescu, Jung, Raskhodnikova, Woodruff 2009]. Similar lower bounds for super-constant $d$ were previously known only for a more general case of directed spanners [Elkin, Peleg 2000]. No hardness of approximation result was known for shortcuts prior to our result.

As a side contribution, we complement the above with an upper bound of the form $(n^{\gamma_D}, n^{\gamma_S})$-approximation which holds for $3\gamma_D + 2\gamma_S > 1$ (e.g., $(n^{1/5+o(1)}, n^{1/5+o(1)})$-approximation). The previous best approximation factor is obtained via a naive combination of known techniques from [Berman, Bhattacharyya, Makarychev, Raskhodnikova, Yaroslavtsev 2011] and [Kogan, Parter 2022], and can provide $(n^{\gamma_D}, n^{\gamma_S})$-approximation under  the condition $3\gamma_D + \gamma_S >1$; in particular, for a fixed value $\gamma_D$, our improvement is nearly quadratic.  
\end{abstract}

		\setcounter{tocdepth}{3}
		\newpage
		\tableofcontents
		\newpage
        
	\end{titlepage}
	
	\newpage
	\pagenumbering{arabic}

\section{Introduction} \label{sec:intro}

For a directed unweighted graph $G=(V,E)$, a \emph{$d$-shortcut} of $G$ is a set $E' \subseteq V \times V$ such that, for every pair of vertices $u,v \in V$,
\begin{enumerate}
  \item[(i)] if $u$ can reach $v$ in $E \cup E'$, then $u$ can reach $v$ in $G$, and
  \item[(ii)] if $u$ can reach $v$ in $G$, then $u$ can reach $v$ in $E \cup E'$ via a path of at most $d$ edges.
\end{enumerate}
We call $d$ the \emph{depth} of~$E'$.

Shortcuts are a crucial tool for parallel reachability\footnote{The reachability problem asks, given a directed graph $G=(V,E)$ and two vertices $s,t \in V$, whether $s$ can reach $t$, i.e., whether there is a directed path from $s$ to $t$.}: given a $d$-shortcut $E'$, it suffices to run BFS on $E \cup E'$ up to depth~$d$ to determine reachability, which requires linear work and depth~$d$. Indeed, all known work-efficient\footnote{A work-efficient parallel reachability algorithm has almost linear total work (i.e., the number of unit operations).} parallel algorithms rely on the efficient construction of shortcuts~\cite{Fineman18,JinST20,CaoFR20}.

Typically, a work-efficient parallel reachability algorithm is expected to construct a $d$-shortcut of almost-linear size\footnote{Throughout the paper, we use \emph{almost linear} to refer to $m^{1+o(1)}$ and \emph{near linear} to refer to $m \cdot \polylog(m)$, where $m$ is the number of edges. One might notice that most of the parallel reachability works rely on shortcuts of size $n \polylog(n)$ instead of $m \polylog(n)$. However, we clarify that constructing a shortcut of size $m \polylog(n)$ is a trivial upper bound for $n \polylog(n)$ and suffices for work efficiency.}, while keeping $d$ as small as possible. Can we hope for $d$ to be polylogarithmic, as conjectured by Thorup~\cite{Thorup92}? Unfortunately, several works have exhibited pathological worst-case graph families for which every almost-linear-size shortcut must have depth at least $n^{1/5}$~\cite{Hesse03,HuangP21,BodwinH23,WilliamsXX24}.

Real-world graphs, however, are seldom such worst-case instances; they usually exhibit additional structure that allows better shortcuts. For example, planar graphs admit shortcuts of near-linear size and polylogarithmic depth~\cite{Thorup95}. Motivated by this, this paper explores the following research question:
\begin{center}
  \itshape Is there an algorithm that can (approximately) compute the best shortcut for a given directed graph?
\end{center}
If shortcuts are indeed the only viable approach to work-efficient parallel reachability---as all existing algorithms suggest---then such an algorithm would lead to an algorithm that is beyond-worst-case optimal, in the sense that it is optimal among all shortcut-based algorithms.

\paragraph{Formalizing approximation.}
To “approximately” compute the best shortcut, suppose that $G$ admits a shortcut of almost-linear size $s = m^{1+o(1)}$ and depth~$d$. We naturally would like an algorithm that outputs a shortcut of almost-linear size as well, with depth $d \cdot \apxD$, where the slackness parameter $\apxD$ is small, ideally $n^{o(1)}$. Formally, the bicriteria approximation task is (see \Cref{sec:prelim} for details):
\begin{quote}
  \emph{Given a directed graph $G$ and integers $d$ and $s$ such that $G$ admits a $d$-shortcut of size $s$, find a $(d\apxD)$-shortcut with $s\apxS$ edges, where $\apxD,\apxS = n^{o(1)}$.}
\end{quote}
We call such algorithms \emph{$(\apxD,\apxS)$-approximation algorithms}. We are mainly interested in the case where $s \ge n$ and $d \gg \log n$, since it suffices in most applications.

\paragraph{Connection to TC-spanners.}
A set $E' \subseteq V \times V$ is a \emph{$d$-TC spanner}\footnote{TC stands for “transitive closure,” which is the edge set connecting every pair of vertices $(u,v)$ such that $u$ can reach $v$.} of $G$ if
\begin{enumerate}
  \item[(i)] $E'$ is a subgraph of the transitive closure of $G$, and
  \item[(ii)] for all reachable pairs $u,v$ in $G$, there is a path of at most $d$ edges from $u$ to $v$ in $E'$.
\end{enumerate}
TC-spanners have been studied much more extensively in the literature; see~\cite{Raskhodnikova10} for a comprehensive survey.

We define $(\apxD,\apxS)$-approximation algorithms for finding a TC-spanner in a manner analogous to shortcuts. One can readily see the similarity between the definitions of shortcuts and TC-spanners. Indeed, we can prove that they are essentially equally hard to approximate in the settings we are interested in (see \cref{lem:redShtoTC,lem:redTCtoSh} for details).\danupon{Is it possible to explain in a few lines what “equivalent” roughly means?}

\paragraph{Existing works (upper bounds).}
Studies of shortcut upper bounds have mainly focused on the worst-case size~\cite{KoganP22,KoganP-icalp22,KoganP23}. This line of work shows that there is a polynomial-time algorithm that computes an $n^{1/3}$-shortcut of size $\tilde{O}(n)$\footnote{We use $\tilde{O}(\cdot)$ to hide polylogarithmic factors.}. This trivially implies a $(n^{1/3},\polylog(n))$-approximation algorithm (since we assume $s \ge n$). Extending this argument to exploit the trade-off provided by~\cite{KoganP22} yields a $(n^{\gamma_D}, n^{\gamma_S})$-approximation factor for all $3\gamma_D + \gamma_S > 1$ (e.g.\ an $(n^{1/4+o(1)}, n^{1/4+o(1)})$-approximation factor).

For TC-spanners, \cite{BermanBMRY11} obtains a single-criterion $(1, O(\sqrt{n}\log n))$ approximation. One of our contributions is to show that, by combining \cite{BermanBMRY11} and \cite{KoganP22}, we can obtain a bicriteria algorithm for both TC-spanners and shortcuts that improves on \cite{KoganP22} (e.g.\ an $(n^{1/5+o(1)}, n^{1/5+o(1)})$-approximation factor), although the approximation factors remain polynomial. See \Cref{sec:upperbound} for more details.

In short, no existing algorithm achieves $\apxS,\apxD = n^{o(1)}$ simultaneously, leaving open whether a subpolynomial bicriteria approximation is possible.

\paragraph{Existing works (lower bounds).}
While strong inapproximability results are known for general spanners~\cite{ElkinP07}\footnote{By slightly adapting their lower bound graphs, a bi-criteria lower bound can also be proved for spanner (not TC-spanner). Thus, our lower bound is a breakthrough only for TC-spanner, not general spanner.}, analogous bounds for TC-spanners (and thus shortcuts) are scarce. Bhattacharyya et al.~\cite{BhattacharyyaGJRW12} showed a conditional lower bound stating that, for \emph{constant} $d$, no polynomial-time $(1, 2^{\log^{1-\varepsilon} n})$ approximation exists\footnote{They use a weaker conditional assumption so it results in $2^{\log^{1-\varepsilon} n}$. If they use the same assuption as us (PGC), then they will get $n^\eps$ as weill - so here we can view the lower bound as $(1,n^{\eps})$, which does not change our discussion.}. However, this result does not address the important regime where $d$ is superconstant, nor does it preclude a $(1, n^{o(1)})$ single-criterion approximation. Consequently, whether both $\apxD$ and $\apxS$ can be made subpolynomial remains widely open.

\paragraph{Our results.}
We prove a strong conditional lower bound that completely rules out the possibility of simultaneously subpolynomial $\apxD$ and $\apxS$.

\begin{theorem}[Conditional lower bound; informal]\label{thm:intro:lowerbound}
  Under the \emph{Projection Games Conjecture (PGC)}, there exists a constant $\varepsilon>0$ such that no polynomial-time $(n^{\varepsilon}, n^{\varepsilon})$-approximation algorithm can find $d$-shortcuts (resp.\ $d$-TC spanners) of size~$s$.\qedhere
\end{theorem}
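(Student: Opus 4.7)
The plan is to reduce Min-\labcov{}, which under \conj{} has an $n^{1-o(1)}$-approximation gap on instances of nearly linear size, to the $d$-TC spanner minimization problem; hardness for shortcuts then follows from the shortcut-to-TC-spanner reduction described in the footnote (\cref{lem:redTCtoSh}), which costs only an additive $+1$ in $\apxS$ and is therefore lossless at the $n^{\epsilon}$ scale.

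Given a projection-game instance $\L$, I would build a layered gadget graph $G_\L$ with three conceptual parts.  First, a \blue{} that represents every vertex-label pair $(u,\sigma)$ as a small cluster of vertices and encodes each constraint edge $(u,v)\in\L$ by paths that, under TC-closure demands, force any $d$-TC spanner to effectively commit to a labeling $\sigma(u),\sigma(v)$ satisfying $\pi_{uv}$.  Second, a \red{} consisting of many long directed source-sink paths of length $\Theta(d)$, calibrated so that their length strictly exceeds $d\apxD$: this ensures that even when the diameter is allowed to stretch by a factor $\apxD$, the only way to realize the induced reachability pairs is to traverse the label-encoding edges rather than an alternate route through the gadget.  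Third, a \fan{} that duplicates the \blue{} polynomially many times so that the TC-spanner cost scales linearly in the number of labels used by the implicit labeling, giving the needed linear gap transfer from \labcov{}.

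For completeness, a labeling that covers all constraints yields a $d$-TC spanner of size $\tOh{s}$ for an appropriate planted parameter $s$.  For soundness, I would show that any $(d\apxD)$-TC spanner of size at most $s\apxS$ implicitly defines a labeling of $\L$ that covers at least an $\Omega(1/(\apxS\apxD^{c}))$ fraction of constraints, where $c$ is a fixed constant coming from the stretch-path layering.  Combining with the $n^{1-o(1)}$ gap from \conj{} forces $\apxS\cdot\apxD^{c}\ge N^{\Omega(1)}$ on the output graph, and choosing gadget sizes polynomial in $|\L|$ yields $\max(\apxS,\apxD)\ge n^{\epsilon}$ for some small constant $\epsilon>0$.

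The main obstacle is the bi-criteria soundness.  The TC-spanner reduction of \cite{BhattacharyyaGJRW12} only rules out $\apxD=1$, because in their gadget even a tiny diameter slack opens shortcut routes that are not controlled by the \labcov{} constraints, collapsing soundness.  Extending to $\apxD>1$ therefore demands carefully padded stretch paths whose alternate, non-label-respecting routes remain strictly longer than $d\apxD$, while simultaneously preserving the linear dependence of spanner size on the underlying labeling (so completeness is not broken) and working for super-constant $d$ where transitive-closure preservation is much more fragile than in the directed-spanner setting of \cite{ElkinP07}.  Striking this balance---long enough to defeat diameter slack, tight enough to keep completeness linear, and transitive-closure-preserving for $d\gg\log n$---is the technically delicate core of the reduction.
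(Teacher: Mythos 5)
Your skeleton (a \labcov{} reduction, completeness/soundness, and the recognition that bicriteria soundness is the crux) matches the paper's starting point, but the two devices you rely on --- ``fan duplication'' for the size side and ``padded stretch paths longer than $d\apxD$'' for the diameter side --- do not do the work, and they are exactly where the paper's actual technical contribution lies. On the size side: a $d$-TC spanner must contain essentially a transitive reduction of the gadget, so its cost is $\Omega(m')$ where $m'$ is the number of gadget edges; unless the label-encoding (``canonical'') edges outnumber the gadget edges, both completeness and soundness costs are $\Theta(m')$ and the $n^{\epsilon}$ size gap disappears. Duplicating the \labcov{} part polynomially many times multiplies canonical edges and graph edges by the same factor, so the density is unchanged. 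This is deficiency (i) of \Cref{sec:overview}, and it is equally present whether you prove TC-spanner hardness directly (your route, then \cref{lem:redTCtoSh}) or shortcut hardness first (the paper's route, which needs $s=\Omega(m^{1+\epsilon})$ for \cref{lem:redShtoTC}); the paper resolves it only via the density-boosting composition of \cref{lem:larges} with the Hesse/Huang--Pettie geometric graph of \cref{lem:geograph}, whose size is merely polynomial in the boosting factor (the butterfly-based boosting of \cite{BhattacharyyaGJRW12} grows exponentially in $d$, which is precisely why that approach stops at constant $d$). Nothing in your plan plays this role.

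On the diameter side, lengthening padding or alternate routes inside an Elkin--Peleg-style gadget cannot give $\apxD=n^{\epsilon}$: in the completeness case every reachable pair, including pairs along the padding paths themselves, must already be within distance $d$, so the base construction has diameter $O(d)$, and then a constant-factor diameter slack is satisfied by the empty shortcut set --- soundness already fails at $\apxD\ge 3$, as the paper notes for its own warm-up construction. What is needed is a construction in which a small canonical shortcut set reduces the diameter by a polynomial factor; the paper gets this by introducing the Steiner intermediate problem \ga{} (\cref{def:gadget}) and chaining polynomially many copies of its hard instance along the layers of the geometric graph (\cref{lem:usegadget}), so that a critical pair sits at base distance roughly (number of layers)$\,\times\,\rho$, drops to $O(\rho)$ in completeness by shortcutting each copy locally, and in soundness the uniqueness and pairwise near-edge-disjointness of the critical paths support a counting argument forcing a cheap, well-covering multilabeling inside a constant fraction of the copies --- this is also what controls shortcut edges that jump across many segments, which your decoding sketch does not address. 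Your central soundness claim (any $(d\apxD)$-TC spanner of size $s\apxS$ yields a labeling covering an $\Omega(1/(\apxS\apxD^{c}))$ fraction of constraints) is asserted rather than derived, and with only ``longer padding'' there is no mechanism behind it: once the diameter budget exceeds the padding length, labels become reachable without any shortcut and the critical-edge accounting that bounds the decoded multilabeling's cost collapses. So the proposal is missing the two ideas (canonical-density boosting and the chained Steiner composition with the unique-path counting argument) that constitute the paper's proof.
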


Our result relies on a weak variant of the \emph{Projection Games Conjecture} (PGC)~\cite{Moshkovitz15}, an important conjecture in the field of hardness of approximation. PGC underlies the inapproximability of many central problems whose hardness factors are derived from Label Cover, including directed multicut and sparsest cut~\cite{chuzhoy2009polynomial}, set cover~\cite{Moshkovitz15}, and spanners~\cite{DinitzKR16}.
PGC remains consistent with all known algorithmic techniques and has emerged as a central barrier to improving approximation algorithms for many fundamental problems (although not mentioned explicitly by this name).\footnote{Some of these results predate the formal statement of PGC by Moshkovitz, and therefore do not reference it explicitly. For example, Chuzhoy and Khanna~\cite{chuzhoy2009polynomial} attribute their stronger hardness result to a specific PCP, which is in fact equivalent to the weak variant of PGC that we use in this paper.}

Our lower bound overcomes the long-standing difficulty of proving strong lower bounds for TC-spanners, as suggested by~\cite{Ras10,BhattacharyyaGJRW12}. This could shed light on lower-bound techniques for shortcut- and TC-spanner-related problems. Beyond that, some interesting consequences follow from our original motivation for studying these problems.

Our results suggest that it might be impossible to efficiently find an almost-best shortcut for a given graph, thus eliminating the hope of having an “almost optimal” shortcut-based parallel reachability algorithm for every input graph structure. This indicates that we should shift our focus to the following two directions in pursuit of beyond-worst-case parallel algorithms for reachability:

\begin{description}
    \item[Efficient algorithms for specific graph families.] Although we cannot hope for a universal algorithm that finds the best shortcut for every input graph, we may hope for an algorithm that finds the best shortcut for specific graph families, as has been done for planar graphs~\cite{Thorup95}. Other graph classes, such as low-treewidth graphs, may also admit efficient shortcut algorithms and remain largely unexplored.

    \item[Improving the polynomial factor for approximation.] Although our lower bound suggests that an almost-optimal approximation algorithm does not exist, it does not rule out the possibility of reasonably good bicriteria algorithms, such as a $(n^{0.1}, n^{0.1})$-approximation, which could be useful in many applications. We emphasize that although the $\varepsilon$ in our lower bound may be very small, we do not aim to optimize the polynomial factor; our focus is solely to rule out subpolynomial hardness. Closing the gap between the upper and lower bounds remains a difficult but interesting open question.
\end{description}

\paragraph{Related Work.}
In a recent work \cite{DinitzKN25}, the authors studied \textbf{single-criteria} settings (in our terminology, the case where $\alpha_D = 1$), providing both upper and lower bounds for \textbf{shortcuts and hopsets}, but not addressing transitive-closure spanners. They established several interesting upper bounds for single-criteria shortcut and hopset approximations, which are incomparable to our upper bounds (presented in the appendix), as our focus is primarily on the bi-criteria setting.
Regarding lower bounds, their results for single-criteria shortcuts and hopsets are adapted from the bounds in \cite{ElkinP07}, and correspond to our warm-up section (\Cref{subsec:warmup}). The main challenge and novelty of our work lie in extending these lower bounds to the bi-criteria setting (\Cref{subsec:bicriterialower}), and further, to transitive-closure spanners (\Cref{subsec:largebicriteria}).

\section{Preliminaries} \label{sec:prelim}

\paragraph{Graph terminology.} Given a directed graph (digraph) $G=(V,E)$, a directed edge, denoted by an ordered pair of vertices $(u,v)\in E$, is called an in-edge for $v$ and an out-edge for $u$. The vertices $u$ and $v$ are called the endpoints of the edge $(u,v)$. The out-degree of a node $u\in V$ denotes the number of out-edges $(u,v)\in E$, and the in-degree denotes the number of in-edges $(v,u)\in E$. The maximum in-degree and out-degree of $G$ are the highest in-degree and out-degree values among all nodes, respectively.

A path $p$ of length $\ell$ is a vertex sequence $(v_0,v_1,...,v_\ell)$ where $(v_i,v_{i+1})\in E$ for any $0\le i<\ell$. The following phrases are equivalent and used interchangeably: $p$ is a path from $v_0$ to $v_\ell$, $v_0$ has a path to $v_\ell$, $v_0$ can reach $v_\ell$, $v_\ell$ is reachable from $v_0$, and $(v_0,v_\ell)$ is a reachable (ordered) pair. The distance between a pair of vertices $(v_0, v_\ell)$ is defined as the minimum length of paths from $v_0$ to $v_\ell$, denoted by $\distt{G}{v_0,v_\ell}$. The diameter of graph $G$ is defined as the maximum distance between any two reachable vertex pairs.

Given a digraph $G$, we use $G^T = (V, E^T )$ to represent the transitive closure of $G$. In other words, $(u, v) \in E^T $ if and only if vertex $u$ can reach vertex $v$ in $G$. For an edge $(u, v) \in E$, vertices $u$ and $v$ are referred to as the endpoints of the edge $(u, v)$. A subgraph of $G$ is a graph $G' = (V', E')$, where $E' \subseteq E$ and $V'$ contain all vertices that are endpoints of edges in $E'$. We will slightly stretch the terminology and use $E'$ to also denote the subgraph $G'$. For a given set of vertices $V' \subseteq V$, the subgraph induced by $V'$ is denoted by $G[V']$. 

\subsection{Shortcut and \tc{}}\label{subsec:shortcut}

We first introduce the definition of \tc{}.

\begin{definition}[\tc{}]
For a digraph $G=(V,E)$, any subset of edges $E' \subseteq E^T$ that has the same transitive closure of $G$ is called a \emph{\tc{}} of $G$. If $E'$ has diameter at most $d$ and size at most $s$, we call $E'$ as \TCs{s}{d}.

\end{definition}

The following problem is about \tc{} approximation.

\begin{definition}[\TC{\apxS}{\apxD}]
Given a directed graph $G$ and integers $d$ and $s$, such that $G$ admits a \TCs{s}{d}, the goal is to find a \TCs{\apxS s}{\apxD d}.
\end{definition}

\begin{definition}[Shortcut]
For a digraph $G=(V,E)$, any subset of edges $E' \subseteq E^T \setminus E$ is called a \emph{\ss{}} of $G$. For $s, d \in \N$, we say $E'$ is a \emph{\ssss{s}{d}} of $G$ if $|E'| \le s$ and the diameter of $E\cup E'$ is at most $d$.
\end{definition}

The following problem is about \ss{} approximation.

\begin{definition}[\oss{\apxS}{\apxD}]
Given a directed graph $G$ and integers $d$ and $s$, such that $G$ admits a \ssss{s}{d}, the goal is to find an \ssss{\apxS s}{\apxD d}.
\end{definition}

When we write $\apxS,\apxD$ as a function of $n$, the variable $n$ corresponds to the number of nodes present in graph $G$.

\subsection{Reductions between Shortcut and \tc{}}\label{subsec:reductions}

In this section, we prove an approximate equivalence between the two problems. 

\begin{lemma}\label{lem:redTCtoSh}
	If there is a polynomial time algorithm solving \oss{\apxS}{\apxD}, then there is a polynomial time algorithm solving \TC{\apxS+1}{\apxD}.  
\end{lemma}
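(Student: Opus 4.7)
The plan is to reduce an instance $(G,s,d)$ of \TC{\apxS+1}{\apxD} to a shortcut instance on a sparsified subgraph of $G$, exploiting the fact that any TC spanner decomposes naturally into a ``base'' set of edges that preserves reachability and a ``shortcut'' set that drives the diameter down. Concretely, from $G=(V,E)$ I will first compute $E_0\subseteq E$ with the same transitive closure as $G$ using the classical greedy transitive-reduction procedure of~\cite{AhoGU72}: repeatedly delete any edge $e$ whose removal does not change the transitive closure of the current graph, until no such edge remains. Each deletion test runs in polynomial time (via a reachability recomputation), so $E_0$ is obtained in polynomial time.

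Next I observe that $G':=(V,E_0)$ admits a $d$-shortcut of size at most $s$, which legitimizes calling the assumed shortcut approximator on $(G',d,s)$. Indeed, let $E^*$ be the promised $d$-TC spanner of $G$ with $|E^*|=s$. Then $E_0 \cup E^*$ has transitive closure equal to that of $G$ (since $E_0$ already does and $E^*\subseteq E^T$) and diameter at most $d$ (since it contains $E^*$), so $E^*\setminus E_0$ is a $d$-shortcut of $G'$ of cardinality at most $s$. Invoking the $(\apxD,\apxS)$-approximation for \os{} on $(G',d,s)$ therefore produces $E'''$ with $|E'''|\le \apxS\cdot s$ such that the diameter of $E_0\cup E'''$ is at most $\apxD\cdot d$. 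I then output $E_0\cup E'''$ as the TC spanner of $G$: its transitive closure equals $TC(G)$ (because $E_0$ preserves it and $E'''\subseteq E^T$), its diameter is at most $\apxD d$, and its size is at most $|E_0|+|E'''|\le s+\apxS s = (\apxS+1)s$, matching the guarantee required by \TC{\apxS+1}{\apxD}.

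The delicate step, and the one I expect to be the main obstacle, is justifying $|E_0|\le s$. For a DAG this is immediate: the greedy procedure returns the (unique) transitive reduction, which is both a subgraph of $E$ and the minimum cardinality subset of $V\times V$ preserving $TC(G)$, hence has size at most $|E^*|=s$. For a general digraph a greedy minimal subgraph can strictly exceed the minimum equivalent digraph, so a more careful argument is needed---for instance, handling each strongly connected component separately and bounding the SCC-internal contribution of $E_0$ against the at-least-$|S|$ edges that any TC spanner must spend on an SCC $S$, or by exploiting the regime $s\ge n$ emphasized in the introduction. Once this structural bound on $|E_0|$ is in place, the rest of the reduction is essentially bookkeeping.
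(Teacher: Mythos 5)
Your skeleton (compute a reachability-preserving reduction, call the \oss{\apxS}{\apxD} oracle on it, return the union) is the same as the paper's, but you perform the reduction inside the original edge set $E$, and the step you flag as delicate --- the bound $|E_0|\le s$ --- is not deferrable bookkeeping: it is false in general, and it is exactly the content of the lemma. Concretely, take $G$ on vertices $\{1,2,3,4\}$ with edges $(1,2),(2,1),(1,3),(3,1),(3,4),(4,3)$. This graph is strongly connected and deleting any single edge changes its transitive closure, so your greedy procedure returns $E_0=E$ with $|E_0|=6$; yet the closure of $G$ is the complete digraph on four vertices, so the cycle $1\to 2\to 3\to 4\to 1$ is a \TCs{4}{3} of $G$, and the instance may legitimately come with $s=4<|E_0|$. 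Your suggested repairs (per-SCC accounting, or invoking $s\ge n$ together with the fact that minimal strongly connected spanning digraphs have at most $2n-2$ arcs) would only bound $|E_0|$ by $s$ plus $O(n)$, yielding some \TC{\apxS+O(1)}{\apxD} rather than the stated \TC{\apxS+1}{\apxD}, and in any case none of this is carried out; as written, the proposal does not prove the lemma for graphs with cycles.

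The paper sidesteps this entirely by reducing inside the transitive closure rather than inside $E$: it starts from $G^T$ and appeals to Theorem~1 of~\cite{AhoGU72} to obtain in polynomial time a canonical reachability-preserving subgraph $H\subseteq E^T$ of minimum cardinality (for a transitively closed graph this is computable: take the transitive reduction of the condensation plus a Hamiltonian cycle inside each SCC, which exists since each SCC of $G^T$ is bidirectionally complete). The decisive point is that any \TCs{s}{d} of $G$ is itself a subgraph of $G^T$ with the same closure, so minimality of $H$ immediately gives $|E(H)|\le s$, with no acyclicity assumption. The remaining steps (the optimal TC spanner is a \ssss{s}{d} of $H$, so the oracle returns $E'$ with $|E'|\le\apxS s$ such that $E'\cup E(H)$ has diameter at most $\apxD d$, hence is the desired \TCs{(\apxS+1)s}{\apxD d}) coincide with yours. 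So the missing idea is precisely to perform the reduction within $E^T$, where the reduced graph can be compared edge-for-edge against the optimal TC spanner; reducing within $E$ cannot give that comparison.
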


\begin{proof}
	Given a graph $G$ that admits a \TCs{s}{d}, we will show how to find a \TCs{(\apxS+1) s}{\apxD d}.
	
	Firstly, we will try to find a minimum subgraph of $G$ with the same transitive closure as $G$. This is well-studied in \cite{AhoGU72} and can be done in polynomial time. To be precise, according to Theorem 2~\cite{AhoGU72}, we can assume $G$ is a DAG. According to Theorem 1 \cite{AhoGU72}, a unique subgraph $H$ exists such that deleting any edge of $H$ will make the transitive closure of $G^T$ and $H$ unequal. Therefore, we can find $H$ in polynomial time: start with $G^T$, and repeatedly delete edges if deleting an edge will not change the transitive closure until no edge can be deleted. Since $G$ admits a \TCs{s}{d}, and a \tc{} must have the same transitive closure as $G$, we have $|E(H)|\le s$. %
 
 After finding $H$, we use the \ssss{\apxS}{\apxD} algorithm with input $H$ and $s,d$ to find a \ssss{\apxS s}{\apxD d} $E'$. $H$ admits a \ssss{s}{d} because the \TCs{s}{d} of $G$ is also a \ssss{s}{d} of $H$. We claim that the graph $(V(H), E'\cup E(H))$ is a \TCs{(\apxS+1) s}{\apxD d}. According to the definition of shortcut, $(V(H), E'\cup E(H))$ has a diameter at most $\apxD d$. The size is at most $\apxS s+|E(H)|\le (\apxS+1) s$ since $|E(H)|\le s$.
\end{proof}

\begin{lemma}\label{lem:redShtoTC}
	If there is a polynomial time algorithm solving \TC{\apxS}{\apxD}, then there is a polynomial time algorithm solving \oss{2\apxS}{\apxD} with input $s$ restricted to $s\ge m$.
\end{lemma}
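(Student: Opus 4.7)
The plan is to reduce shortcut approximation to TC spanner approximation by absorbing the original edge set $E$ into the TC spanner's size budget. The assumption $s \geq m$ is exactly what ensures this absorption costs only a factor of two on the size parameter.

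First, I would observe that if $G=(V,E)$ with $|E|=m$ admits a $(d,s)$-shortcut $E''$, then by definition $E \cup E''$ is a $d$-TC spanner of $G$ of size at most $m + s \leq 2s$ (using $s \geq m$). Hence $G$ itself admits a $(d, 2s)$-TC spanner, which is exactly the input guarantee needed to invoke the assumed algorithm for \TC{\apxS}{\apxD}.

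Next, I would run the assumed $(\apxD,\apxS)$-approximation algorithm for TC spanner on input graph $G$ with size parameter $2s$ and diameter parameter $d$, obtaining in polynomial time a set $F \subseteq V \times V$ such that $(V, F)$ has the same transitive closure as $G$ and has diameter at most $\apxD d$, with $|F| \leq 2\apxS s$.

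Finally, I would output $F$ itself as the shortcut. To verify correctness, note that $(V, E \cup F)$ has diameter at most $\apxD d$, since adding the edges of $E$ to $F$ can only shorten distances; and it has the same transitive closure as $G$, since $F$ alone already does. Therefore $F$ is an $(\apxD d)$-shortcut of $G$ of size at most $2\apxS s$, giving the promised $(2\apxS,\apxD)$-approximation for \os{}. The whole argument is essentially a one-line reduction; the only subtlety is recognizing that the hypothesis $s \geq m$ is precisely what keeps the size blow-up bounded by a constant, so I do not anticipate any further obstacle.
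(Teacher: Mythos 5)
Your proposal is correct and follows essentially the same route as the paper: show that the assumed $(d,s)$-shortcut together with $E$ gives a $(d,2s)$-TC spanner (using $s\ge m$), invoke the \TC{\apxS}{\apxD} algorithm with parameters $2s,d$, and observe that its output is itself a valid $(\apxD d)$-shortcut of size at most $2\apxS s$. Your extra verification that adding $E$ back only shortens distances is exactly the implicit step the paper summarizes by saying the returned TC spanner ``is also a'' shortcut.
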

\begin{proof}
	Given a graph $G$ with $m$ edges and $s\ge m,d$ such that $G$ admits a \ssss{s}{d}, we will show how to find a \ssss{2\apxS s}{\apxD d}. 
 
        We first prove that $G$ admits a \TCs{2s}{d}. Let the \ssss{s}{d} of $G$ be $E'$, then $E'\cup E$ has size at most $2s$ since $s\ge m$, and has diameter $d$. Thus, $E'\cup E$ is a \TCs{2s}{d} of $G$. 
        
        We apply the \TC{\apxS}{\apxD} algorithm on $G$ with inputs $2s,d$ to get a \TCs{2\apxS s}{\apxD d}, which is also a \ssss{2\apxS s}{\apxD d}. 
\end{proof}

\subsection{Label cover and PGC.}\label{subsec:labcov}

In this section, we define the label cover problem and state the projection games conjecture formally. Notice that this is not new, but already appears in several previous works as standard tools \cite{ElkinP07,Kortsarz01}.

\begin{definition}[\labcov{}]\label{def:labcov}
	The \labcov{} problem takes as input an instance $\I=(A, B,E,\L,$ $(\pi_e)_{e\in E})$ described as follows:
	\begin{itemize}
		\item $(A \cup B, E)$ is a bipartite regular graph with partitions $|A|=|B|$, 
		\item a label set $\L$ and a non-empty acceptable label pair associated with each $e=(u,v)\in E$ denoted by $\pi_e\subseteq \L\times \L$.
	\end{itemize}
	
	A \emph{labeling} $\psi: V \to \cL$ is a function that gives a label to each vertex $\psi(v)\in\L$. $\psi$ is said to cover an edge $e=(u,v)\in E$ if $(\psi(u),\psi(v))\in\pi_e$. The goal of the problem is to find a labeling that covers the most number of edges.
	
\end{definition}

\begin{conjecture}[The Weak Variant of PGC]\label{con:pgc}
	There exists a universal constant $\epsilon >0$ such that, given a \labcov{} instance on input of size $N$, it is hard to distinguish between the following two cases: 
	\begin{itemize}
		\item (Completeness:) There is a labeling that covers every edge. 
		
		\item (Soundness:) Any labeling covers at most $N^{-\epsilon}$ fraction of the edges. 
	\end{itemize}
\end{conjecture} 

We will use a slightly stronger hardness result where the soundness case is allowed to assign multiple labels. The hardness of this variant is a simple implication of the PGC.  A \mlab{} $\psi:V \to 2^\cL$ gives a set of labels  $\psi(v)\subseteq\L$ to a vertex $v$. Such a multilabeling  is said to cover an edge $e=(u,v)\in E$ if there exists $a\in\psi(u),b\in\psi(v)$ such that $(a,b)\in\pi_e$. 
The \textit{cost} of $\psi$ is denoted by $\sum_{u \in A \cup B} |\psi(u)|$. Notice that a valid labeling is a multi-labeling of cost $|A|+|B|$.

\begin{lemma}\label{lem:pgc}
	Assuming PGC, 
	there exists a sufficiently small constant $\epsl{} >0$ such that, given a \labcov{} instance of size $N$, there is no randomized polynomial time algorithm that distinguishes between the following two cases: 
	\begin{itemize}
		\item (Completeness:) There is a labeling that covers every edge. 
		
		\item (Soundness:) Any multilabeling of cost at most $N^{\epsl{}}(|A|+|B|)$ covers at most $N^{-\epsl{}}$ fraction of edges. 
	\end{itemize}
\end{lemma}
\begin{proof}
    Suppose there is an algorithm that can distinguish the two cases described in~\cref{lem:pgc}, we will show that it also distinguishes the two cases in~\cref{con:pgc}. Completeness in~\cref{con:pgc} trivially implies completeness in~\cref{lem:pgc}, we only need to show that it also holds for soundness. In other words, we want to show that if there exists a \mlab{} $\phi$ of cost at most $N^{\epsl}(|A|+|B|)$ that covers more than $N^{-\epsl}$ fraction of edges for sufficiently small constant $\epsl$, then there exists a labeling $\phi'$ that covers more than $N^{-\epsilon}$ fraction of edges.

    To construct $\phi'$, we uniformly at random sample $1$ label from $\phi(v)$ for every $v\in A\cup B$ as the label $\phi'(v)$. For each edge $(u,v)$ covered by $\phi$, the probability that $(u,v)$ is covered by $\phi'$ is at least $1/(|\phi(u)|\cdot |\phi(v)|)$. Let $P$ contain all edges $(u,v)$ covered by $\phi$ such that $|\phi(u)|<N^{3\epsl}$ and $|\phi(v)|<N^{3\epsl}$. Since $\sum_{u\in A\cup B}|\phi(u)|\le N^{\epsl}(|A|+|B|)$, the number of nodes $u\in A\cup B$ that $|\phi(u)|\ge N^{3\epsl}$ is bounded by $(|A|+|B|)/N^{2\epsl}$. Remember that the bipartite graph is regular, so the number of edges $(u,v)$ with $|\phi(u)|\ge N^{3\epsl}$ or $|\phi(v)|\ge N^{3\epsl}$ is at most 
    \[\frac{2|E|}{|A|+|B|}\cdot \frac{|A|+|B|}{N^{2\epsl}}=\frac{2|E|}{N^{2\epsl}}\]

    Since the number of edges covered by $\phi$ is at least $N^{-\epsl}|E|$, we have
    \[|P|\ge \frac{|E|}{N^{\epsl}}-\frac{2|E|}{N^{2\epsl}}\ge \frac{|E|}{N^{2\epsl}}\]
    The expectation of the number of covered edges by $\phi'$ is at least
    \[\sum_{(u,v)\in P}\frac{1}{|\phi(u)|\cdot |\phi(v)|}\ge \frac{|E|}{N^{8\epsl}}\]

    By setting $\epsl<\epsilon/8$, the expectation is at least $\frac{|E|}{N^{\epsilon}}$, which means there must exist a labeling $\phi'$ that covers more than $N^{-\epsilon}$ fraction of edges.
\end{proof}

\subsection{Main Results}\label{subsec:mainresults}

We restate the main results in the introduction in a formal way.

\begin{theorem}[Lower bound]\label{thm:main}
		Under \conj{}, there exists a small constant $\epsilon>0$, such that no polynomial-time algorithm exists for solving \oss{n^{\epsilon}}{n^{\epsilon}} as well as \TC{n^{\epsilon}}{n^{\epsilon}}, even if we restricted the input $s=\Omega(m^{1+\epsilon}),d=\Omega(n^{\epsilon})$.
\end{theorem}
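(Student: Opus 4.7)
}
The plan is to reduce from \labcov{} under PGC (\Cref{lem:pgc}) in three stages, following the pipeline summarized in \Cref{fig:reduction}.
By the equivalence established in \Cref{lem:redShtoTC}, it suffices to prove the hardness for \os{} with $s = \Omega(m^{1+\epsilon})$ and $d = \Omega(n^\epsilon)$; the TC-spanner hardness then follows because the reduction preserves the approximation factor up to constants, and the condition $s \geq m$ needed for \Cref{lem:redShtoTC} is automatic from $s = \Omega(m^{1+\epsilon})$.

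First I would build a \textbf{base instance} $H_{\I,\rho}$ from a \labcov{} instance $\I = (A,B,E,\L,\pi)$ and a diameter parameter $\rho$. The idea, inspired by the directed spanner hardness of \cite{ElkinP07}, is to arrange a source layer, an $\L$-labeled intermediate gadget encoding the two sides of the bipartite graph, and a sink layer, connected by a zigzag of paths of length $\rho$ so that shortening the distance from the source layer to the sink layer across an edge $e = (u,v) \in E$ to at most $\rho+1$ requires the inclusion of a ``canonical'' shortcut pair corresponding to a labeling that covers $e$. The canonical set $C_H$ of such distinguished pairs is in bijection with multilabelings of $\I$, and (i) a total labeling gives a feasible $(\rho+1)$-shortcut of cost $|A|+|B|$, while (ii) any $(\rho+1)$-shortcut of cost $s'$ can be decoded, by a pigeonholing argument over the path zigzag, into a multilabeling of cost $O(s')$ that covers a $|\I|^{-\epsilon}$ fraction of edges. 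Under \Cref{lem:pgc} this already gives an $(O(1), |\I|^{\Omega(\epsilon)})$-hardness. However, as noted in \Cref{subsec:warmup}, two obstacles remain: $|C_H| = o(|E(H)|)$ (so $s$ is too small relative to $m$), and $H$ itself already has diameter $\Theta(\rho)$ so one cannot hope to blow up $\alpha_D$ by a factor of $n^\epsilon$.

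Next I would introduce the intermediate \textbf{Steiner variant} \ga{} (\Cref{def:gadget}): the input specifies, in addition to $H$, two sets $L,R \subseteq V(H)$ and a pair set $P \subseteq L \times R$, and one only needs to reduce distances of pairs in $P$. Working with this variant gives enough flexibility to perform \textbf{canonical-density boosting} without having to noise-resilient-ify the \labcov{} instance as in \cite{BhattacharyyaGJRW12}. The boosting is carried out by composing $H_{\I,\rho}$ with a combinatorial object $O^*$ inspired by the Hesse \cite{Hesse03} and Huang--Pettie \cite{HuangP21} constructions: $O^*$ has a large family of ``canonical'' long directed paths that pairwise share few edges, so that any Steiner shortcut of the composed instance $H \odot O^*$ is forced to use many canonical pairs, while $|O^*|$ is only polynomial in the boosting factor $\gamma_{O^*}$. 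Choosing $\gamma_{O^*}$ appropriately lets $|C_{H \odot O^*}|/|E(H \odot O^*)|$ be $\omega(1)$, and a careful bookkeeping of completeness (a good labeling lifts to a shortcut of the claimed size) and soundness (any shortcut descends, via a counting over the canonical paths of $O^*$, to a multilabeling of $\I$ of comparable cost covering a large fraction of edges) yields a gap version of \ga{} hardness with $s = \Omega(m^{1+\epsilon})$.

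Finally I would apply a second composition $J \otimes O^*$ to convert the \ga{} hardness into a genuine \os{} hardness and simultaneously boost the diameter approximation factor. The object $O^*$ plays the dual role here of replacing each ``designated pair'' $(u,v) \in P$ by a long path (of length $\Theta(n^\epsilon)$) so that (a) distances between non-$P$ pairs in the ambient graph become irrelevant, and (b) the diameter of the ambient graph grows to $\Theta(d n^\epsilon)$, giving the $\alpha_D = n^\epsilon$ bicriteria slack. Completeness and soundness transfer because any \os{} solution on $J \otimes O^*$ of size $s \alpha_S$ reducing the diameter to $\alpha_D d$ must, on each substituted path, induce a Steiner-shortcut solution of the same size that reduces the distance between the endpoints to $d/3$, contradicting the soundness of the \ga{} instance. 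Applying \Cref{lem:pgc} one last time, and then \Cref{lem:redShtoTC} for the TC-spanner conclusion, finishes the argument. The main obstacle will be the construction and analysis of $O^*$: we need simultaneously a high canonical-pair density, compact total size, low diameter between designated sources and sinks, and a ``rigidity'' guarantee that any shortcut set faithfully decomposes into contributions to each canonical path. Getting all four properties out of a single graph, and ensuring that both the $\odot$ and $\otimes$ compositions preserve them, is where the technical heart of the proof lies.
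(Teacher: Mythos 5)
Your proposal follows essentially the same route as the paper: the LabelCover-graph base construction, the Steiner intermediate problem \ga{} of \cref{def:gadget}, density boosting of the canonical set via a Huang--Pettie-style geometric graph (the paper's \cref{lem:geograph}, used in \cref{lem:larges}), a second composition with the same object to obtain the $(n^{\epsilon},n^{\epsilon})$ bicriteria hardness with $s=\Omega(m^{1+\epsilon})$ (\cref{lem:usegadget}), and finally \cref{lem:redShtoTC} for the TC-spanner conclusion. The remaining work you flag (constructing $O^*$ and verifying the two compositions) is exactly what the paper carries out in \cref{subsec:bicriterialower,subsec:largebicriteria}, so the plan is sound as stated.
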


Recall that the Las Vegas algorithm is one that either generates a correct output or, with probability at most $1/2$, outputs $\bot$ to indicate its failure. 

\begin{restatable}[Upper bound]{theorem}{UpperBoundThm}\label{thm:upperbound}

  There is a polynomial time Las Vegas algorithm solving \oss{\apxS}{\apxD} whenever the inputs $G,s,d$ satisfy the following conditions.
  \begin{enumerate}
      \item $s\ge |V(G)|$  and %
      \item $\apxS\ge\frac{Cn\log^2 n}{\sqrt{sd^2\apxD^3}}$ for sufficiently large constant $C$.
  \end{enumerate}

Moreover, if we further require $\apxS\ge 2$, then there is a polynomial time  Las Vegas algorithm solving \TC{\apxS}{\apxD}.  
\end{restatable}

Notice that the upper bound result is stated here in a slightly different form, as this form will be more convenient to derive formally. It is easy to show that Theorem~\ref{thm:upperbound} implies $(n^{\gamma_D}, n^{\gamma_S})$-approximation for $3\gamma_D + 2\gamma_S >1$ (using the fact that $s \geq n$ and $d \geq 1$.)   

\section{Overview of Techniques}
\label{sec:overview}

\danupon{To do: Define things not defined in the intro. If there are too many things, consider moving this after the prelim.}

Here we give a brief overview of our approach and techniques to prove \Cref{thm:main}. For convenience, we refer to the problem of approximating shortcut of a given directed graph as \os{}. Indeed, our goal is to prove that \os{} is $(n^{\epsilon}, n^{\epsilon})$-hard to approximate even when assuming that $s=\Omega(m)$ and diameter $d=n^{\epsilon}$ for some small constant $\epsilon$.\footnote{Notice that since we are proving a lower bound, restricting the class of inputs only strengthens the result. This restriction is necessary in order for our bound to also imply a lower bound for transitive-closure spanners (TC-spanners).} As implied by \Cref{lem:redTCtoSh}, the hardness result for this parameter range would imply the same hardness for TC-spanners as well.

\paragraph{Label Cover.} Our hardness result is based on a reduction from the well-known \labcov{} problem  to \os{}. We now briefly recall what \labcov{} is: Given an input instance denoted by $\I = (G,\L, \pi)$, where $G=(A \cup B,E)$ is a bipartite undirected graph, $\cL$ is an alphabet, and $\pi = \{\pi_e \subseteq  \L \times \L\}_{e \in E}$ is a family of relations defined for each edge $e \in E$, we say, for any assignment (labeling)  $\psi: A \cup B \rightarrow \L$, that $e=(u,v)$ is covered if $(\psi(u), \psi(v)) \in \pi_e$.

The objective of this problem is to compute a labeling $\psi$ that covers as many edges as possible. We sometimes allow the labeling function to assign more than one label to vertices, i.e., a multi-labeling $\psi: A \cup B \rightarrow 2^{\L}$ covers edge $(u,v)$ if $\psi(u) \times \psi(v)$ contains at least one pair of labels in $\pi_{(u,v)}$. The cost of such multi-labeling is the total number of labels assigned, that is, $\sum_{u \in A \cup B}|\psi(u)|$. 
Assuming the projection games conjecture (PGC)~\cite{Moshkovitz15}, it is hard to distinguish between the following two cases: In the completeness case, there exists an assignment $\psi$ that covers every edge,  while in the soundness case, we cannot cover more than $|\I|^{-\epsilon}$  fraction of edges even when using a multi-labeling $\psi:A \cup B \rightarrow 2^{\L}$ of cost $|\I|^{\epsilon}(|A|+|B|)$.

\paragraph{Initial attempt (\cref{subsec:warmup}).} We start from a simple (base) reduction that takes a label cover instance $\I$ and a target diameter $\rho$  as input and produces a directed graph $H_{\I, \rho}$, or $H$ is short,  with $|V(H)| = \poly(\I)$, such that the following happens (see~\cref{fig:minrepgraph}):
\begin{enumerate}
    \item[-] \textbf{Bijection:} There is a special set of pairs of vertices $C_H \subseteq V(H) \times V(H)$, denoted as the \textit{canonical set}, such that any subset of $C_H$ corresponds to a multi-labeling in $\I$ and vice versa.

    \item[-] \textbf{Completeness:} A multi-labeling in $\I$ would precisely correspond to a shortcut solution that chooses only the pairs in $C_H$; moreover, if the multi-labeling covers all edges, then the corresponding shortcut will reduce the diameter of $H$ to be at most $d=\rho+1$\yonggang{Is it a good idea to define $d=\rho+1$? We also use $d_H$ to refer to the density later.}\yonggang{It would be good to just say $\rho$, but in the construction of $H_{I,\rho}$, the truth is $d=\rho+1$. Now I realize that I should really define $H_{I,\rho}=H_{I,\rho-1}$}. This means any optimal perfectly covering labeling $\psi$ for $\I$ corresponds to feasible shortcuts $C_{\psi} \subseteq C_H$ of exactly the same cost. 

    \item[-] \textbf{Soundness:} Conversely, we prove that any shortcut solution $C'$ that reduces the diameter of $H$ to $d$ corresponds to a multi-labeling of roughly the same size (thereby, corresponds to a \textit{different} shortcut within $C_H$ of roughly the same size), while covering at least $|\I|^{-\epsilon}$ fraction of edges in $\I$.
\end{enumerate}

To summarize, this reduction establishes that the optimal shortcut value in $H$ is roughly the same as the optimal cost of multi-labeling in $\I$. In the completeness case, we therefore are guaranteed to have a shortcut set of cost $|A|+|B|$, while in the soundness case, there is no feasible shortcut set of cost less than $|\I|^{\epsilon}(|A|+|B|)$. This would imply a factor of $|\I|^{\epsilon} \approx |V(H)|^{\Omega(\epsilon)}$ hardness.

The construction is similar to the previous work for proving the hardness of approximating directed spanner~\cite{ElkinP07}. 
However, it has two major deficiencies. \begin{enumerate}[(i)]
    \item \textbf{Small shortcut set:} Recall from the statement of \Cref{thm:intro:lowerbound} that we need $s = \omega(|E(H)|)$. This is crucial to transfer our shortcut lower bound to TC-spanner lower bound (see \Cref{lem:redShtoTC}). However, by construction, the canonical set $C_H$, and thereby the shortcut set, is much smaller than $|E(H)|$, i.e., $s \leq |C_H| = o(|E(H)|)$. Phrased somewhat differently, by construction, the density of the canonical set, $d_H  =|C_H|/|E(H)| = o(1)$ is vanishing which we cannot tolerate.
    
    \item \textbf{Small diameter approximation:} The reduction only rules out $(O(1), n^{\epsilon})$-approximation algorithms under PGC (i.e. $\alpha_D$ is very small). That is unavoidable if we use  constructions similar to $H$ (see~\cref{fig:minrepgraph}), because $H$ itself has diameter $O(d)$ without adding any shortcut. To get a hardness result for much larger $\apxD$, we need a construction where the shortcuts can reduce the diameter significantly. 
\end{enumerate} 
Overcoming these two deficiencies is the main technical contribution of this paper.

\paragraph{Boosting canonical set density (\cref{subsec:largebicriteria}).} In order to overcome the first drawback, we want, as a necessary condition, a construction that boosts the density $d_H$ of the canonical solution. One natural idea to this end (which is also implicitly used by~\cite{BhattacharyyaGJRW12}) is to ``compose'' the base construction $H$ with some ``combinatorial object'' $O$ that has desirable properties (for instance, one such property is that the density of canonical pairs inside $O$ should be high, that is, $\omega(1)$.).
In particular, the reduction takes \labcov{} instance $\I$ and outputs $H^* = H \odot O$ (where $\odot$ represents some kind of composition between $H$ and $O$ that would not be made explicit here). 
The object $O$ would be chosen so that the density $d_{H^*} = |C_{H^*}|/|E(H^*)|$ can be made $\omega(1)$. In this way, we have $|C_{H^*}|=\omega(|E(H^*)|)$ which means $s$ can be $\omega(m)$.\footnote{In fact, the value of $s$ can be much less than the size of the canonical set. For convenience of discussion, this overview focuses on boosting the size of the canonical set (which is, strictly speaking, still not sufficient to prove our hardness result.)} %

Let us denote by $\gamma_O = d_{H\odot O}/d_H$ the (effective) \textbf{boosting factor} of the object $O$; this parameter reflects how much we can increase the contribution of the canonical set after composing the base construction with $O$. If we can make $\gamma_O$ to be sufficiently high, this would allow us to obtain the desired result for large shortcut set.  %

In the case of~\cite{BhattacharyyaGJRW12}, the properties of the combinatorial object they need are provided by the butterfly graph (denoted by $O_B$), whose effective boosting factor decreases exponentially in the target diameter $d$, that is $\gamma_{O_B} = O(n^{1/d})$ (or equivalently, the butterflies grow exponentially, i.e., $|O_B|=\gamma_{O_B}^{\Omega(d)}$), and therefore, when $d = \Omega(\log n)$, such a construction faces its theoretical limit (the boosting factor is a constant), obtaining only an NP-hardness result. We circumvent this barrier via two new ideas: 

\begin{itemize}
    
    \item \textbf{Intermediate problem(see~\cref{def:gadget})}: We introduce a ``Steiner'' variant of the \os{} problem as an intermediate problem (that we call the \textit{minimum Steiner shortcut problem} ({\sf MinStShC})). In this variant, we are additionally given, as input, two disjoint sets $L, R \subseteq V(H): L \cap R = \emptyset$, and a set of pairs $P\subseteq L\times R$. Our goal is to distinguish whether (i) adding a size $s$ shortcut reduces the distance between any pairs $(u,v) \in L \times R$ to constant, or (ii) adding a size $sn^{\epsilon}$ shortcut cannot reduce the distance of even $o(1)$ fraction of pairs in $P$ to $d/3$. We shortly explain why we consider this variant.

    \item \textbf{More efficient combinatorial object}: Our goal now turns into canonical boosting for {\sf MinStShC} (instead of the \os{} problem) by composition with a combinatorial object having the desired properties. Our combinatorial object is inspired by the techniques of Hesse~\cite{Hesse03}, Huang and Pettie~\cite{HuangP21}\footnote{We note here that even though \cite{BhattacharyyaGJRW12} realised the upshot of using the gadget from \cite{Hesse03} regarding canonical boosting, they need a gadget with more structure in order for their technique to work, which is the butterfly graph. Because of our use of the intermediate problem, we do not face this bottleneck.}. It ensures weaker combinatorial properties than the butterflies but would still be sufficient for our purpose (crucially because we work with the intermediate problem). We denote our object by $O^*$. The main advantage of $O^*$ is its relatively compact size, i.e., to obtain a boosting factor of $\gamma_{O^*}$, the size of $O^*$ required is only $|O^*| \leq {\rm poly}(\gamma_{O^*})$. %

\end{itemize}

We briefly explain our rationale why we work with the intermediate problem {\sf MinStShC} now. Our use of {\sf MinStShC} allows a \textit{simpler and cleaner canonical boosting process}. For example, \cite{BhattacharyyaGJRW12} needs to modify the \labcov{} instance to be ``noise-resilient''  (and therefore their reduction adds an extra pre-processing step that turns any \labcov{} instance $\I$ into a noise-resilient instance $\I'$ which is suitable for performing canonical boosting). In contrast, our reduction works with  any given \labcov{} instance $\I$ in a blackbox fashion. \yonggang{I dont't understand this paragraph up to here}%
We remark that, unlike \cite{BhattacharyyaGJRW12}, our canonical boosting process would not work with the original \os{} problem, so the use of the intermediate problem is really crucial for us. 

As our ultimate goal is to obtain hardness for \os{}, we somehow need to convert the hardness of {\sf MinStShC} to \os (We explain more on this in the next paragraph.). To accomplish this, we need \textit{the gap version} of {\sf MinStShC} hardness (for technical reasons). This is not a significant concern in the reduction from $\labcov{}$ to {\sf MinStShC}, because the hardness for \labcov{} is itself a gap version.

\paragraph{From {\sf MinStShC} to $(n^{\epsilon}, n^{\epsilon})$ hardness for \os{} (\cref{subsec:bicriterialower}).}
It turns out that our object $O^*$ can be used in two crucial ways: (i) to perform canonical boosting and (ii) to serve as a gadget that turns the hardness of the Steiner variant into the hardness of \os{} in a way that preserves all important parameters (the hardness factor and the relative size of the canonical set) while obtaining the $(n^{\epsilon}, n^{\epsilon})$-bicriteria hardness (boosting the value of $\alpha_D$ from $O(1)$ to $n^{\epsilon}$ as a by-product of this composition). 
In particular, the hard instance $J$ of {\sf MinStShC} can be composed with $O^*$ to obtain the final instance $J \oplus O^*$ for \os{}. 
This  step bears a certain similarity with some known constructions in the literature of hardness of approximation, e.g.,~\cite{guruswami1999near,chuzhoy2009polynomial} for directed disjoint path problems and the use of graph products~\cite{chalermsook2014pre,chalermsook2013graph}. However, the combinatorial objects necessary (for the composition step) are often problem-specific, and therefore these works are all technically very different  from this paper.

To summarize, our reduction takes \labcov{} instance $\I$ and produces a base instance $H$ of the Steiner shortcut problem. After that, we perform the canonical boosting by composing $H$ with $O^*$, obtaining the hard instance $H \odot O^*$ of the {\sf MinStShC} with our desired density parameter. Finally, we perform another composition to turn the instance $H \odot O^*$ into the instance $(H\odot O^*) \otimes O^*$ of \os{} (again $\otimes$ denotes a certain composition between instances that would not be made explicit in this section).   
 See Figure~\ref{fig:reduction} for illustration.

\begin{figure}[H]
    \centering
    \includegraphics[width=0.9\textwidth]{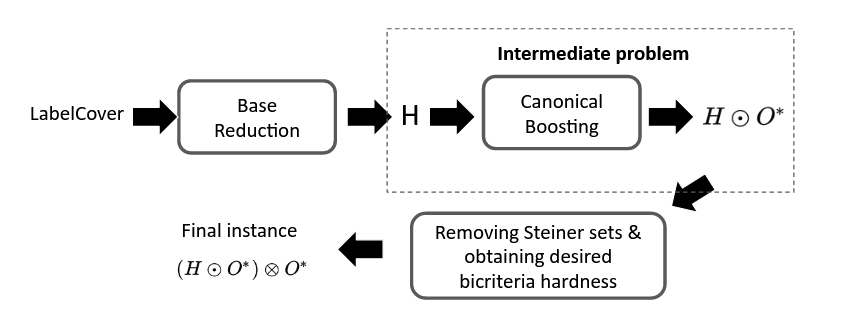}
    \caption{A high-level structure of our reduction. The boosting step  handles the instances of the intermediate problem. }
    \label{fig:reduction}
\end{figure}

	\section{Lower Bounds}\label{sec:lowerbounds}

This section contains three parts.

In~\cref{subsec:warmup}, we will prove that there is no polynomial algorithm solving \oss{n^\epsilon}{1}. The technique is similar to the construction used by Elkin and Peleg~\cite{ElkinP07} to prove the hardness of approximating directed spanner. We will also define \ga{}, as discussed in~\cref{sec:overview}, and prove the hardness for \ga{} when $s=o(m)$.

In~\cref{subsec:bicriterialower}, we will show how to prove the lower bound for \oss{n^\epsilon}{n^\epsilon}, using the lower bound for \ga{} while preserving the relative size of $s$ versus $m$. It will only prove that \oss{n^\epsilon}{n^\epsilon} is hard for some input $s=o(m)$ by combining with the result in~\cref{subsec:warmup}, which is still not what we want in~\cref{thm:main}.

In~\cref{subsec:largebicriteria}, we boost $s$ to be $\omega(m)$ in the lower bound proof of \ga{}. Combined with the lemma proved in~\cref{subsec:bicriterialower}, this will show us \oss{n^\epsilon}{n^\epsilon} is hard for some input $s=\Omega(m^{1+\epsilon})$, which is what we want in~\cref{thm:main}.

\subsection{Warm up: Lower Bounds when \texorpdfstring{$\apxD=1$}{apxD1}}\label{subsec:warmup}

This section will use a simple construction to prove the following lemma.

\begin{lemma}\label{lem:onecriteria}
	Assuming PGC (\cref{con:pgc}), there are no polynomial time algorithm solving \oss{n^{\epsilon}}{1} for some small constant $\epsilon$. 
	
\end{lemma}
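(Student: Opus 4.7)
The plan is to reduce from \labcov{} (whose hardness, under \conj{}, is given in \cref{lem:pgc}) to \oss{N^{\epsl}}{1}. Given a \labcov{} instance $\I=(A,B,E,\L,(\pi_e)_{e\in E})$ of size $N$, I will build a directed graph $H=H_{\I,\rho}$ together with parameters $s$ and $d$ so that (i) in the completeness case $H$ admits an \ssss{s}{d}, and (ii) in the soundness case every \ssss{s\cdot N^{\epsl}}{d} of $H$ induces a multilabeling of cost $N^{\epsl}(|A|+|B|)$ that covers a $>N^{-\epsl}$ fraction of $E$, contradicting \cref{lem:pgc}. Taking $n=|V(H)|=\poly(N)$ and $\epsilon$ a sufficiently small fraction of $\epsl$ then yields the claim.

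The construction follows the Elkin--Peleg template. For every $a\in A$ and every label $\ell\in\L$, create a vertex $a_\ell$ together with a directed path $P_{a,\ell}$ of length $\rho$ from a new source $s_a$ to $a_\ell$. Symmetrically, for every $b\in B$ and every $\ell'\in\L$, create $b_{\ell'}$ and a directed path $Q_{b,\ell'}$ of length $\rho$ from $b_{\ell'}$ to a new sink $t_b$. For each edge $e=(a,b)\in E$ and each acceptable pair $(\ell,\ell')\in\pi_e$, insert the single directed edge $(a_\ell,b_{\ell'})$. The canonical set is $C_H=\{(s_a,a_\ell)\}\cup\{(b_{\ell'},t_b)\}$; any subset $S\subseteq C_H$ encodes a multilabeling via $\psi(a)=\{\ell:(s_a,a_\ell)\in S\}$ and $\psi(b)=\{\ell':(b_{\ell'},t_b)\in S\}$, with $|\psi(a)|+|\psi(b)|$ equal to the number of chosen canonical edges incident to $a,b$. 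I set $s:=|A|+|B|$ and $d:=3$ (so shortest $s_a\to t_b$ routes must be $s_a\to a_\ell\to b_{\ell'}\to t_b$ using canonical endpoints).

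Completeness is immediate: a perfect labeling $\psi$ gives the canonical shortcut $\{(s_a,a_{\psi(a)})\}\cup\{(b_{\psi(b)},t_b)\}$ of size $s$; for every edge $e=(a,b)\in E$ covered by $\psi$ the path $s_a\to a_{\psi(a)}\to b_{\psi(b)}\to t_b$ has length $3=d$, and all other reachable pairs have distance $\le 3$ by construction. For soundness, given an arbitrary \ssss{s\cdot N^{\epsl}}{d} $E'$, I project each shortcut edge to a canonical one: an edge $(u,v)\in E'$ is attributed to $s_a$-side (resp.\ $t_b$-side) if $u$ lies on some path $P_{a,\ell}$ (resp.\ $v$ on some $Q_{b,\ell'}$), and contributes label $\ell$ to $\psi(a)$ (resp.\ $\ell'$ to $\psi(b)$); edges that do not touch any $P$- or $Q$-path are discarded. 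This produces a multilabeling $\psi$ with $\sum_{u}|\psi(u)|\le|E'|\le N^{\epsl}(|A|+|B|)$. The key distance argument: because $d=3$ and the only way a path $s_a\rightsquigarrow t_b$ can be this short is to use an edge that first jumps from some $P_{a,\ell}$-vertex to some $Q_{b,\ell'}$-vertex (possibly via one original crossing edge $(a_{\ell},b_{\ell'})$ with $(\ell,\ell')\in\pi_e$), every covered reachable pair $(s_a,t_b)$ witnesses a pair of labels $\ell\in\psi(a)$, $\ell'\in\psi(b)$ with $(\ell,\ell')\in\pi_e$, i.e.\ $e$ is covered by $\psi$. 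Since all edges of $\I$ correspond to such reachable pairs and $E'$ has diameter $\le d$, \emph{every} edge is covered, contradicting \cref{lem:pgc}.

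The main technical obstacle is the soundness reduction from non-canonical shortcuts: I must rule out or charge away shortcuts whose endpoints are interior vertices of the subdivided paths $P_{a,\ell},Q_{b,\ell'}$, or that jump between two $P$-paths (or two $Q$-paths). A shortcut with endpoint $u\in P_{a,\ell}$ only shortens an $s_a\to t_b$ route if, together with at most $d-1$ other edges, it completes a path to $t_b$; careful case analysis with $d=3$ forces the useful intermediate hop to land in some $Q_{b,\ell'}$, which is exactly the canonical attribution. A more delicate variant of this bookkeeping (which I expect to need an extra factor accounted for in the setting of $\epsilon$ relative to $\epsl$) handles shortcuts that skip over the single crossing edge $(a_\ell,b_{\ell'})$ by routing through a nearby path, and this is the step whose parameters determine the final exponent $\epsilon>0$ in the lemma.
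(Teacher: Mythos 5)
There is a genuine gap, and it is in the completeness direction, not in the soundness bookkeeping you flag as the main obstacle. With subdivision paths of length $\rho$ and target diameter $d=3$, your claim that ``all other reachable pairs have distance $\le 3$ by construction'' is false: a $d$-shortcut must bring \emph{every} reachable pair within distance $d$, and your graph has many reachable pairs at distance up to $2\rho+1$ that the canonical solution does not touch. For instance, $(s_a,a_\ell)$ for a non-chosen label $\ell$ has distance $\rho$, and an interior vertex of $P_{a,\ell}$ must still traverse the rest of its path, one crossing edge, and an entire $Q$-path to reach $t_b$; with only $|A|+|B|$ added edges almost all of the $\Theta(|A||\L|\rho)$ interior vertices are not incident to any shortcut edge and their $3$-step out-neighborhoods stay inside their own path, so no $(3,|A|+|B|)$-shortcut exists even in the completeness case and the gap instance collapses. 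Shrinking $\rho$ to a constant does not rescue the plan either: then the instance has small diameter (or near-trivial shortcuts) to begin with and the soundness direction, which needs long paths so that each label choice is charged to a distinct shortcut edge, loses its gap. The tension between ``$d$ small for completeness'' and ``$\rho$ large for soundness'' is exactly what the paper's construction is built to resolve, and your proposal is missing that ingredient.

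The paper's fix (\cref{def:minrepgraph}) is to add hub back-edges and to aim for a \emph{large} target diameter: every path vertex $\alpha^{(i)}_j[k]$ and every $a^{(i)}_j$ gets an edge back to the hub $a^{(i)}$, and symmetrically $b^{(i)}$ gets edges to every $\beta^{(i)}_j[k]$ and $b^{(i)}_j$; the shortcut instance is then run with $d=\rho+1$, not $3$. In the completeness case the canonical $2\Delta$ edges bring the hub-to-hub pairs $(a^{(i)},b^{(j)})$ down to distance $3$, and the back-edges let every other vertex piggyback on its hub so that all remaining pairs are within $\rho+1$ (this is the case analysis in \cref{clai:caseI}). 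In the soundness case the hub-to-hub pairs originally have distance $2\rho+1>\rho+1$, so any feasible shortcut must still ``choose labels,'' and the decomposition $|p_A|+1+|p_B|\le\rho+1$ forces both sides below $\rho$, yielding the multilabeling; the critical-edge argument bounds its cost, and shortcut edges that jump directly from the $A$-block to the $B$-block (your discarded/non-canonical case) are handled by counting them as at most $|E'|=o(|E|)$ ``crossed'' LabelCover edges, using the preliminary assumption $|E|>2\Delta n^{\epsilon_L}$ — another point your sketch leaves open. Without the hub edges and the $d=\rho+1$ target, your reduction as stated does not produce a valid gap instance.
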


The construction relies on the following \minre{} graph.

\begin{definition}[\labcov{} graph]\label{def:minrepgraph}
	Given a \labcov{} instance $\I=(A,B,E,\L,(\pi_e)_{e\in E})$ described in~\cref{def:labcov} and a parameter $\rho$, we define the \labcov{} graph $G_{\I,\rho}$ as a directed graph defined as follows (see \cref{fig:minrepgraph}):
	\begin{itemize}
		\item Suppose $A=\{1,...,|A|\},B=\{1,...,|B|\},\L=\{1,2,...,|\L|\}$. In $G_{\I,\rho}$ we have vertices $\{\aa{i}{j},\bb{i}{j}\mid 1\le i\le |A|=|B|, 1\le j\le |\L|\}$ and edges $\{(\aa{i}{j},\bb{i'}{j'})\mid (i,i')\in E, (j,j')\in\pi_{(i,i')}\}$.
		
		\item In addition, $G_{\I,\rho}$ also contains vertices $\{\a{i},\b{i}\mid 1\le i,j\le |A|=|B|\}$, where $\a{i}$ has a directed path with length $\rho$ to each vertex in $\{\aa{i}{j}\mid 1\le j\le |\L|\}$ denoted by $\alpha^{(i)}_j$, the vertices along the path are $\{\aaa{i}{j}{k}\mid 1\le k\le \rho-1\}$; similarly, $\b{i}$ has a directed path with length $\rho$ from each vertex in $\{\bb{i}{j}\mid 1\le j\le |\L|\}$ denoted by $\beta^{(i)}_j$, the vertices along the path are $\{\bbb{i}{j}{k}\mid 1\le k\le \rho-1\}$. 
		
		\item Finally, we add edges $(\aaa{i}{j}{k},\a{i}),(\b{i},\bbb{i}{j}{k}),(\aa{i}{j},\a{i})$, $(\b{i},\bb{i}{j})$ for any possible $i,j,k$.
	\end{itemize}
	
\end{definition}

\begin{figure}[H]
	\centering
	\includegraphics[scale=1]{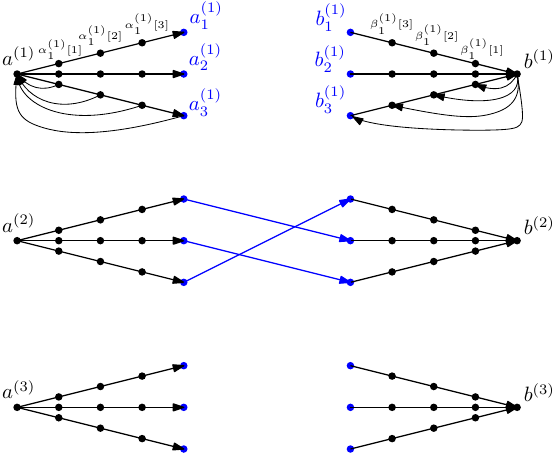}
	\setcaptionwidth{0.95\textwidth}
	\caption{\small In this example, $|A|=|B|=|\mathcal{L}|=\rho-1=3$. Suppose $A=\{1,2,3\},B=\{1,2,3\},\mathcal{L}=\{1,2,3\}$, then in this example we have $E=\{(2,2)\}$ and $\pi_{(2,2)}=\{(1,2),(2,3),(3,1)\}$, which corresponds to the three edges in the middle. }\label{fig:minrepgraph}	
\end{figure}

\begin{proof}[Proof of~\cref{lem:onecriteria}]
	Given a \labcov{} instance $\I=(A,B,E,\L,(\pi_e)_{e\in E})$, we use $\AB$ to denote the size of $A$ (and also $B$). We now describe how to distinguish between the case. Notice that $N$ is the input size of the \labcov{} instance $\I$.
	\begin{itemize}
		
		\item (Completeness:) There is a labeling that covers every edge. 
		
		\item (Soundness:) Any multilabeling of cost at most $N^{\epsl{}}(|A|+|B|)$ covers at most $N^{-\epsl{}}$ fraction of edges.             
	\end{itemize}
	According to~\cref{lem:pgc}, this violates PGC. 
	
	If $|E|\le 2\AB n^{\epsl}$, then it cannot be the Soundness case, so we assume $|E|>2\AB n^{\epsl}$. 
	
	First, we use polynomial time to compute the \labcov{} graph $G_{\I,\dia{}}$ with $n$ vertices, where $\dia{}$ is polynomial in $N$ and can be arbitrarily large. 
	Then we run the \oss{n^\epsilon}{1} algorithm with input $G_{\I,\dia{}}$ and parameters $s=2\AB,d=\dia{}+1$. We will prove the following two claims, which will lead to the solution to the \labcov{} problem.
	\begin{claim}\label{clai:caseI}
		If the \labcov{} instance is in case (completeness), then $G_{\I,\dia{}}$ has a \ssss{2\AB }{\dia{}+1}. Moreover, for any $(i,j)\in E$, $(\a{i},\b{j})$ has distance at most $3$ after adding this \ss{}. 
	\end{claim}
	\begin{proof}
		Suppose the labeling is $\psi$. For any $i\in A$, we include the edge $(\a{i},\aa{i}{\psi(i)})$ in the \ss{}; for any $i\in B$, we include the edge $(\bb{i}{\psi(i)},\b{i})$ in the \ss{}. The \ss{} has size $2\AB $. Since $(i,j)$ is covered by $\psi$, there exists an edge $(\aa{i}{i'},\bb{j}{j'})$ such that $i'=\psi(i),j'=\psi(j)$. In that case, we have the length 3 path $(\a{i},\aa{i}{i'},\bb{j}{j'},\b{j})$ after adding the \ss{}. This proves the second statement of this lemma. Now we verify the distances between reachable pairs are at most $\dia{}+1$ one by one.
		\begin{enumerate}
			\item (start from $\a{i}$) $\a{i}$ can reach any $\aa{i}{j},\aaa{i}{j}{k}$ with distance at most $\dia{}$, $\a{i}$ can reach any $\b{j}$ with $(i,i')\in E$ with distance $3$, which means $\a{i}$ can reach any $\bbb{i'}{j}{k},\bb{i'}{j}$ with distance $4$.
			
			\item (start from $\aaa{i}{j}{k},\aa{i}{j}$) $\aaa{i}{j}{k}$ or $\aa{i}{j}$ has an edge to $\a{i}$, so they can reach any node that $\a{i}$ can reach with distance at most $\dia{}+1$. 
			
			\item (start from $\bbb{i}{j}{k},\bb{i}{j},\b{i}$) These nodes can reach any reachable nodes with distance at most $\dia{}+1$ in $G_{\I,\dia{}}$. 
		\end{enumerate}
		
	\end{proof}
	
	\begin{claim}\label{clai:caseII}
		If the \minre{} instance is in case (soundness), then $G_{\I,\dia{}}$ does not have a \ssss{2\AB\cdot n^{\epsilon}}{\dia{}+1} for sufficiently small constant $\epsilon$. Moreover, by adding any shortcut with size at most $2\AB\cdot n^{\epsilon}$, at most $o(1)$ fraction of pairs in $\{(a^{(i)},b^{(j)})\mid (i,j)\in E\}$ will have distance less than $\dia{}+1$. 
	\end{claim}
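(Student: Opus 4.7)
The plan is to prove both statements by the contrapositive of the ``moreover'' part. I will suppose some shortcut $E' \subseteq E(G_{\I,\dia}^T)$ with $|E'| \le 2\AB \cdot n^{\epsilon}$ makes more than an $o(1)$ fraction of the special pairs $\{(\a{i},\b{j}) : (i,j) \in E\}$ have distance strictly less than $\dia+1$ in $G_{\I,\dia} \cup E'$, and then extract a multilabeling $\psi$ of $\I$ whose cost is at most $N^{\epsl}(|A|+|B|)$ but which covers more than an $N^{-\epsl}$ fraction of $E$, contradicting \cref{lem:pgc}. To define $\psi$, I assign each vertex $\aa{i}{k}$ and $\aaa{i}{k}{t}$ the label $k$ on the $a$-side (symmetrically for $\bb{j}{k'}$ and $\bbb{j}{k'}{t}$); the roots $\a{i}$ and $\b{j}$ stay unlabeled. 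Set $\psi(i)=\{k:\text{some edge of }E'\text{ is incident to a }k\text{-labeled vertex of the }\a{i}\text{-subtree}\}$ and similarly $\psi(j)$ for the $b$-side. Since each shortcut edge contributes at most one label per endpoint, $\sum_{v \in A\cup B}|\psi(v)| \le 2|E'| \le 4\AB n^{\epsilon}$; because $n=\poly(N)$, choosing $\epsilon$ small enough relative to $\epsl$ makes this at most $N^{\epsl}(|A|+|B|)$, and then \cref{lem:pgc} guarantees that $\psi$ covers at most $N^{-\epsl}|E|$ edges of $\I$.

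Next I will classify every pair $(i,j) \in E$ with $\dist_{G_{\I,\dia}\cup E'}(\a{i},\b{j}) \le \dia$. Since $G_{\I,\dia}$ has no edges from the $b$-side to the $a$-side and shortcuts respect reachability in $G_{\I,\dia}^T$, any shortest $\a{i} \to \b{j}$ path stays in the $\a{i}$-subtree until a unique ``crossing'' edge that enters the $\b{j}$-subtree; I can write its length as $p_a+1+p_b \le \dia$. In Case~A the crossing is a regular edge $(\aa{i}{k},\bb{j}{k'})$, which automatically satisfies $(k,k') \in \pi_{(i,j)}$; using $p_a,p_b \ge 1$ we get $p_a,p_b \le \dia-1 < \dia$. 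Tracing the $a$-side portion backwards from $\aa{i}{k}$, the only regular in-edges into $k$-labeled vertices lie on the $\alpha^{(i)}_k$ chain, so $\dia$ consecutive regular in-edges would reach $\a{i}$ and force $p_a \ge \dia$; since $p_a < \dia$, the traceback must encounter a shortcut of $E'$, which is then incident to a $k$-labeled vertex of the $\a{i}$-subtree, giving $k \in \psi(i)$. The symmetric argument on the $b$-side yields $k' \in \psi(j)$, so $\psi$ covers $(i,j)$ and the number of Case~A pairs is at most $N^{-\epsl}|E|$. In Case~B the crossing is itself a shortcut edge $(u,v) \in E'$ with $u$ on the $a$-side and $v$ on the $b$-side; since $u$ lies in a unique $\a{i}$-subtree and $v$ in a unique $\b{j}$-subtree, the pair $(i,j)$ is uniquely determined by $(u,v)$, bounding the number of Case~B pairs by $|E'| \le 2\AB n^{\epsilon}$.

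Summing, the number of special pairs at distance $\le \dia$ is at most $N^{-\epsl}|E|+2\AB n^{\epsilon}$; using the inherited bound $|E|>2\AB n^{\epsl}$ and picking $\epsilon < \epsl$, both terms are $o(|E|)$, establishing the ``moreover'' part. Repeating the same reasoning with the looser budget $p_a+1+p_b \le \dia+1$ still yields $p_a,p_b \le \dia-1 < \dia$ in Case~A (because $p_a,p_b \ge 1$), so at most $N^{-\epsl}|E|+2\AB n^{\epsilon} < |E|$ special pairs have distance $\le \dia+1$; in particular some $(\a{i},\b{j})$ with $(i,j)\in E$ has distance $>\dia+1$, refuting the existence of a $(2\AB\cdot n^{\epsilon},\dia+1)$-shortcut and thereby establishing the first statement. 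The most delicate step will be the backwards-traceback in Case~A: the dangerous configurations to rule out are shortest paths whose shortcut edges are incident only to the unlabeled root $\a{i}$ or to vertices of other $\alpha^{(i)}_{k''}$ chains with $k'' \ne k$; the traceback starting from the terminal vertex $\aa{i}{k}$ is precisely what forces at least one shortcut endpoint to lie on a $k$-labeled vertex of the $\a{i}$-subtree.
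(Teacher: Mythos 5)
Your proposal is correct, and its skeleton matches the paper's own proof: extract a multilabeling from the shortcut $E'$, bound its cost by $O(|E'|)$, charge the pairs whose shortest path crosses from the $A$-side to the $B$-side via a shortcut edge to $|E'|$ (your Case~B is exactly the paper's ``crossed'' edges), show every remaining short pair is covered, and contradict the soundness guarantee of \cref{lem:pgc}. The genuine difference is the definition of $\psi$: the paper sets $\psi(i)=\{k:\distt{(V,E\cup E')}{\a{i},\aa{i}{k}}\le\dia-1\}$, which makes the coverage step immediate but requires the ``critical edge'' injection to bound the cost by $|E'|$, whereas you define $\psi(i)$ by incidence of shortcut endpoints with the chain $\alpha^{(i)}_k$, which makes the cost bound $\sum_u|\psi(u)|\le 2|E'|$ trivial and moves the combinatorial work into your traceback showing that an $A$-side (resp.\ $B$-side) portion of length less than $\dia$ must meet a shortcut endpoint on the $k$-chain (resp.\ $k'$-chain). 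Both arguments ultimately exploit the same structural fact---the unique all-regular route from $\a{i}$ to $\aa{i}{k}$, and from $\bb{j}{k'}$ to $\b{j}$, has length exactly $\dia$ and regular in-/out-edges along these chains are unique---so the approaches are essentially dual, trading where the nontrivial step sits. A minor point in your favor: you explicitly rerun the counting at threshold $\dia+1$ (using $p_a,p_b\ge 1$ to keep both side-portions below $\dia$) to obtain the first statement, the nonexistence of a \ssss{2\AB\cdot n^{\epsilon}}{\dia{}+1}, whereas the paper's write-up phrases its supposition only for distances strictly below $\dia+1$ and leaves this small extension implicit.
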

	\begin{proof}
		Suppose $G_{\I,\dia{}}$ has a shortcut $E'$ adding which reduces the diameter between $\Omega(1)$ fraction of pairs in $\{(a^{(i)},b^{(j)})\mid (i,j)\in E\}$ to less than $\dia{}+1$. We will try to get a contradiction. We first use $E'$ to build a \mlab{} $\psi$ in the following way: for any $1\le i\le |A|$, we let $\psi(i)$ to be the set of vertices among $\{\aa{i}{j}\mid 1\le j\le \L\}$ that $\a{i}$ has distance at most $\dia{}-1$ to after adding the \ss{}. 
		
		\paragraph{$\psi$ covers more than $N^{-\epsl}$ fraction of edges.}	We first argue that $\psi$ covers $\Omega(1)$ fraction of the edges. Write $A_i=\{\a{i},\aa{i}{j},\aaa{i}{j}{k}\mid 1\le j\le \L,1\le k\le \dia{}-1\}, B_i=\{\b{i},\bb{i}{j},\bbb{i}{j}{k}\mid 1\le j\le \L,1\le k\le \dia{}-1\}$. For an edges $(i,j)\in E$ (recall that $E$ is the edge set in the \labcov{} instance $\I$), we say it is \emph{crossed} if there is an edge $(u,v)\in E'$ with $u\in A_i,v\in B_j$ in $E'$. Remember that we assumed $|E|\ge \Delta\cdot N^{\epsl{}}$, otherwise, the case (soundness) can never happen. Therefore, the number of crossed edges is at most $2\AB\cdot n^{\epsilon}=o(|E|)$ for sufficiently small $\epsilon$. Now We prove that for any non-crossed edge $(i,j)\in E$ such that $(a^{(i)},b^{(j)})$ has distance less than $\dia{}+1$ after adding $E'$, $(i,j)$ is covered by $\psi$. If we can prove this, then at least $\Omega(1)$ fraction of edges in $E$ are covered. To prove this, notice that if $(i,j)$ is not covered, then consider the shortest path $p$ from $\a{i}$ to $\b{j}$ after adding the \ss{}, we write the part where $p$ is inside $A_i$ as $p_A$, and the part where $p$ is inside $B_j$ as $p_B$. if both $p_A,p_B$ have length at most $\dia{}-1$, then $(i,j)$ is covered. Thus, one of $p_A$ or $p_B$ has length at least $\dia{}$. Then we have $|p|=|p_A|+1+|p_B|\ge \dia{}+1+1\ge \dia{}+2$, which is a contradiction. 
		
		\paragraph{$\psi$ has cost at most $N^{\epsl}(|A|+|B|)$.} Then we argue that $\sum_{u\in A\cup B}|\psi(u)|\le |E'|\le 2\AB \cdot n^{\epsilon}$ (which will give us $\sum_{u\in A\cup B}|\psi(u)|\le N^{\epsl{}}(|A|+|B|)$ for sufficiently small $\epsilon$). For a label $j\in\psi(i)$, we have $\distt{(V,E\cup E')}{\a{i},\aa{i}{j}}<\dia{}$. Thus, there exists a path $p=(v_0,...,v_{\ell})$ from $\a{i}$ to $\aa{i}{j}$ with length at most $\dia{}-1$. Let $p'=(\a{i},\aaa{i}{j}{1},\aaa{i}{j}{2},...,\aa{i}{j})$. Let $k$ be the last index such that $v_k\not\in p'$ and $v_{k+1}\in p'$. Since $v_{\ell}$ in $p'$ and the length of $p'$ equals $\dia{}$, such $k$ must exist. Since $(v_k,v_{k+1})\not\in E$, we have $(v_k,v_{k+1})\in E'$. We call this edge $(v_k,v_{k+1})$ a \emph{critical edge} of $(i,j)$. For different $i$ and $j$ with $j\in\psi(i)$, they have different critical edges in $E'$ because the corresponding $v_{k+1}$ is always different. Thus, $\sum_{u\in A\cup B}|\psi(u)|\le |E'|\le 2\AB \cdot n^{\epsilon}$.
		
	\end{proof}
	
	If the \labcov{} instance is in case (completeness), by~\cref{clai:caseI}, $G_{\I,\dia{}}$ admits a \ssss{2\Delta}{\dia{}+1}, which means the output by the \oss{n^{\epsilon}}{1} algorithm will be a \ssss{2\Delta\cdot n^{\epsilon}}{\dia{}+1}; on the other hand, if the \labcov{} instance is in case (soundness), by~\cref{clai:caseII}, the output cannot be a \ssss{2\Delta\cdot n^{\epsilon}}{\dia{}+1} since $G_{\I,\dia{}}$ does not have one. Since checking whether the output is \ssss{2\Delta\cdot n^{\epsilon}}{\dia{}+1} or not is in polynomial time, \labcov{} is solved in polynomial time.
 
\end{proof}

In the following definition, we will abstract all the necessary properties of graph $G_{\I,\rho}$ in a black box that we are going to use to prove the hardness for \oss{n^{\epsilon}}{n^\epsilon}.

\begin{definition}\label{def:gadget}
	For $1>\epsilon,\cs{}>0$, the \gadget{\cs{}}{\epsilon} problem has inputs
	\begin{enumerate}
		\item a directed connected graph $G=(V,E)$ with $m$ edges, $n$ nodes and diameter polynomial in $m$, let the diameter be $d$,
		\item two sets $L,R\subseteq V$ with $L\cap R=\emptyset$, where $|L|$ is polynomial in $m$,
		\item a set of reachable vertex pairs $P\subseteq L\times R$,
		\item a positive integer $s=\Omega(m^{\cs{}})$.
	\end{enumerate}
	The problem asks to distinguish the following two types of graphs.
	\begin{description}
		\item[Type 1.] There exists a \ss{} $E'$ of $G$ with size $s$ such that all reachable pairs $(u,v)\in L\times R$ have distance $O(1)$ after adding $E'$ to $G$.
		\item[Type 2.] By adding any \ss{} with size $s\cdot n^\epsilon$, at most $o(1)$ fraction of the pairs in $P$ have distance at most $d/3$.
	\end{description}
	
\end{definition}

The following lemma shows that~\cref{fig:minrepgraph} is a hard instance for \gadget{\cs{}}{\epsilon}.
\begin{lemma}\label{lem:smallgadget}
	Under PGC (\cref{con:pgc}), there exist constants $\epsilon,\cs{}$ such that \gadget{\cs{}}{\epsilon} cannot be solved in polynomial time.
	
\end{lemma}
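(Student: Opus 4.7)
The plan is to re-use the LabelCover graph $G_{\I,\rho}$ of \cref{def:minrepgraph} essentially as a black box, repackaging the argument of \cref{lem:onecriteria} into the gadget formulation. Given a LabelCover instance $\I = (A,B,E,\L,(\pi_e)_{e\in E})$ of size $N$, I would construct $G = G_{\I,\rho}$ for a sufficiently large polynomial $\rho = \rho(N)$, and set
\[
L = \{\a{i} : 1 \le i \le |A|\}, \quad R = \{\b{j} : 1 \le j \le |B|\}, \quad P = \{(\a{i},\b{j}) : (i,j) \in E\}, \quad s = 2\Delta,
\]
where $\Delta = |A| = |B|$. By inspection of the construction, $\a{i} \in L$ can reach $\b{j} \in R$ if and only if $(i,j)\in E$, so the reachable pairs in $L\times R$ coincide with $P$. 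The diameter of $G$ satisfies $d \le 2\rho + O(1)$, hence by choosing $\rho$ sufficiently large we ensure $d/3 < \rho + 1$.

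In the completeness case of \cref{lem:pgc}, \cref{clai:caseI} supplies a shortcut of size exactly $2\Delta = s$ under which every pair in $P$ (equivalently, every reachable pair in $L\times R$) has distance at most $3 = O(1)$, certifying Type~1. In the soundness case, \cref{clai:caseII} states that any shortcut of size $2\Delta \cdot n^{\epsilon}$ leaves at most $o(1)$ fraction of $P$ at distance less than $\rho+1$; since $d/3 < \rho+1$, the set of pairs at distance at most $d/3$ is contained in the set of pairs at distance less than $\rho+1$, so at most $o(1)$ fraction of $P$ reach distance $\le d/3$, certifying Type~2. Thus any polynomial-time algorithm for \gadget{\cs{}}{\epsilon} would distinguish the two PGC cases in polynomial time.

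It remains to verify $s = \Omega(m^{\cs{}})$ for some fixed constant $\cs{} > 0$. The graph $G_{\I,\rho}$ has $m = \Theta(|E| + \Delta \cdot |\L| \cdot \rho)$ edges; since $|E|$, $|\L|$, and $\rho$ are polynomial in $N = \poly(\Delta)$, we have $m \le \Delta^c$ for some constant $c$. Hence $s = 2\Delta \ge m^{1/c}$, and we set $\cs{} = 1/c$. The conditions that $d$ and $|L|$ be polynomial in $m$ are immediate from $d = \Theta(\rho)$, $|L| = \Delta$, and the fact that both are polynomial in $N$.

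The principal ``obstacle'' is mostly bookkeeping: \cref{clai:caseI,clai:caseII} already do the substantive work, and the task is to check that the diameter condition ($d/3 < \rho+1$) and the size condition ($s = \Omega(m^{\cs{}})$) can be enforced simultaneously by a single polynomial choice of $\rho$. A genuine limitation, not something to be addressed in this lemma but flagged as motivation for the following subsection, is that this reduction only yields $s = o(m)$ (in fact $s = \Theta(m^{\cs{}})$ with $\cs{} < 1$), whereas \cref{thm:main} needs $s = \Omega(m^{1+\epsilon})$. Closing this gap is precisely what the canonical-boosting argument in \cref{subsec:largebicriteria} is designed to do.
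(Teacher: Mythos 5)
Your proposal matches the paper's own proof essentially verbatim: it builds the gadget instance from the same LabelCover graph $G_{\I,\rho}$ with $L=\{\a{i}\}$, $R=\{\b{j}\}$, $P=\{(\a{i},\b{j}):(i,j)\in E\}$ and $s=|A|+|B|$, and invokes \cref{clai:caseI,clai:caseII} exactly as the paper does to map completeness/soundness to Type~1/Type~2. The extra bookkeeping you add (checking $d/3<\rho+1$ and $s=\Omega(m^{\cs})$, the latter under the same implicit assumption as the paper that $\Delta$ is polynomially related to the instance size) is consistent with, and if anything slightly more explicit than, the paper's argument.
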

\begin{proof}
	Given any \labcov{} instance $\I=(A,B,E,\L,\pi)$ (we write $|A|=\AB$), we construct a \gadget{\cs{}}{\epsilon} instance with the following inputs.
	\begin{enumerate}
		\item The graph is $G_{\I,\dia{}}$ (see~\cref{def:minrepgraph}) for $\dia{}$ arbitrarily large such that $\dia{}$ is polynomial in $N$ (the bit length of instance $\I$). $G_{\I,d}$ has diameter $d=2\dia{}+1$.
		\item $L=\{a_i\mid 1\le i\le \AB\},R=\{b_i\mid 1\le i\le \AB\}$, clearly $|L|=|R|\le m$ and $|L|$ is polynomial in $m$.
		\item $P=\{(a_i,b_j)\mid (i,j)\in E\}$.
		\item $s=|A|+|B|$, it is polynomial in $m$.
	\end{enumerate}
	
	Now we prove that if we have a polynomial time algorithm to distinguish the two types of graphs as described in~\cref{def:gadget}, then we can solve \labcov{} in polynomial time.
	
	If $\I$ is in case (completeness), according to~\cref{clai:caseI}, by adding a \ss{} with size $s=(|A|+|B|)$, all reachable pairs $(a_i,b_j)$ (with $(i,j)\in E$) has distance at most 3. 
 
 If $\I$ is in case (soundness), according to~\cref{clai:caseII}, by adding a \ss{} with size less than $s\cdot n^{\epsilon}$ for sufficiently small $\epsilon$, at most $o(1)$ fraction of pairs in $P$ has distance less than $(2d+1)/3<\dia{}$. 
\end{proof}

\subsection{Lower Bound when \texorpdfstring{$\apxD>1$}{apxDge1} and \texorpdfstring{$s=o(m)$}{0<cs<1}}\label{subsec:bicriterialower}
In this section, we prove the following lemma, which shows how to use the hardness of \gadget{\cs{}}{\epsilon} to get lower bounds for $\apxD>1$.
\begin{lemma}\label{lem:usegadget}
	For any constant $1>\epsilon,\cs{}>0$, if there is no polynomial algorithm solving \gadget{\cs{}}{\epsilon}, then for sufficiently small constant $\gamma$, there is no polynomial algorithm solving \oss{n^{\gamma\epsilon}}{n^{\gamma\epsilon}} even if the input is restricted to $s=\Omega(m^{1+\gamma(\cs{}-1)})$.
	
\end{lemma}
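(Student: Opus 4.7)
The plan is to give a polynomial-time reduction from \gadget{\cs{}}{\epsilon} to \oss{n^{\gamma\epsilon}}{n^{\gamma\epsilon}} that takes an instance $J=(G,L,R,P,s)$ with $n$ vertices, $m$ edges, and diameter $d$ (which, via the freedom in \cref{lem:smallgadget}, I may take to be any polynomial in $m$), and produces an \os{} instance $(G^{*},s^{*},d^{*})$ via a composition of $G$ with the shortcut-resistant combinatorial object $O^{*}$ sketched in \Cref{subsec:largebicriteria}. Concretely, I will set $k:=m^{(1-\gamma)/\gamma}$, take $k$ disjoint copies $G^{(1)},\dots,G^{(k)}$ of $G$, and wire them together via a low-depth (e.g., $\log k$-depth) directed backbone built from $O^{*}$: a super-source $s^{*}$ reaches $L^{(1)},\dots,L^{(k)}$ through an in-tree, and a super-sink $t^{*}$ is reached from $R^{(1)},\dots,R^{(k)}$ through a symmetric out-tree. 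Then $G^{*}$ has $m^{*}=\Theta(km)$ edges and $n^{*}=\mathrm{poly}(m)$ vertices. I set $s^{*}:=ks$ and $d^{*}:=\Theta(\log k)$; a direct calculation using $s=\Omega(m^{\cs{}})$ verifies $s^{*}=\Omega((m^{*})^{1+\gamma(\cs{}-1)})$ as required.

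Completeness is immediate: a Type 1 certificate for $J$ consists of a size-$s$ shortcut that makes every intra-copy $L\times R$ distance $O(1)$; applied inside each of the $k$ copies, it yields a shortcut of size $ks=s^{*}$ that makes the $G^{*}$-diameter $O(\log k)=d^{*}$. For soundness, I argue by contradiction: suppose $E^{*}\subseteq V(G^{*})\times V(G^{*})$ with $|E^{*}|\leq s^{*}(n^{*})^{\gamma\epsilon}$ achieves $G^{*}$-diameter $\leq d^{*}(n^{*})^{\gamma\epsilon}$. The key structural fact about the directed backbone is an ``intra-copy isolation'' property: every intra-copy pair $(u,v)\in V(G^{(i)})\times V(G^{(i)})$ can only be helped by shortcut edges with both endpoints inside $G^{(i)}$, because once a directed path leaves $G^{(i)}$ through the out-tree there is no way back except via $t^{*}$, which is a sink. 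Charging each edge of $E^{*}$ to at most one copy, the number of copies receiving more than $s\cdot n^{\epsilon}$ edges is at most $k\cdot (n^{*})^{\gamma\epsilon}/n^{\epsilon}=o(k)$, leaving $\Omega(k)$ underserved copies. By Type 2 of $J$, each underserved copy retains a $1-o(1)$-fraction of its $P$-pairs at intra-copy distance $>d/3$, which lower-bounds the diameter of $G^{*}\cup E^{*}$ by $d/3$. Since $d$ can be chosen so that $d=m^{\Omega(1)}$ while $d^{*}(n^{*})^{\gamma\epsilon}=O(\log m)\cdot m^{O(\epsilon)}$, this yields a contradiction for sufficiently small $\gamma$ and $\epsilon$, violating the assumed hardness of \gadget{\cs{}}{\epsilon}.

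The main obstacle---and the reason the composition must use $O^{*}$ rather than a bare backbone---is establishing the intra-copy isolation property rigorously. A naive chaining of copies allows a single inter-copy shortcut to skip over many copies and cheaply bypass their intra-copy residual distances, which breaks the pigeonhole. The shortcut-resistance of $O^{*}$ (in the spirit of~\cite{Hesse03,HuangP21}) is what ensures that any such ``detour'' through the backbone must be accompanied by sufficiently many supporting edges inside $O^{*}$, so that cross-copy edges cannot collaborate to shorten intra-copy distances in underserved copies. Once this property is in hand, matching the calibration of $k$, the polynomial degree of $d$, and the size of $O^{*}$ so that simultaneously $s^{*}=\Omega((m^{*})^{1+\gamma(\cs{}-1)})$ and $d^{*}(n^{*})^{\gamma\epsilon}\ll d/3$ is a mechanical (if tedious) parameter-tracking exercise, completing the reduction.
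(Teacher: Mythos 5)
Your reduction breaks at the completeness step, and the failure is exactly the ``small diameter approximation'' deficiency that this lemma is designed to overcome. Type 1 of \gadget{\cs{}}{\epsilon} only promises a size-$s$ shortcut making the pairs in $L\times R$ have distance $O(1)$; it says nothing about the other reachable pairs of $G$, and by \cref{def:gadget} the diameter of $G$ is polynomial in $m$. So after you add the per-copy shortcuts, an intra-copy pair not in $L\times R$ (say two vertices in the middle of the long paths of the label-cover graph) still has distance up to $d=m^{\Omega(1)}\gg\Theta(\log k)=d^{*}$. Hence in the Type 1 case $G^{*}$ does \emph{not} admit a $(d^{*},s^{*})$-shortcut, the promise given to the \oss{n^{\gamma\epsilon}}{n^{\gamma\epsilon}} oracle is vacuous, and the reduction gives no information. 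If you repair this by setting $d^{*}=\Theta(d)$, the soundness side collapses instead: your copies sit in parallel behind a shallow backbone, so the residual distance guaranteed by Type 2 in an underserved copy is only $>d/3$, which is within a constant factor of the completeness diameter $\Theta(d)$. A constant-factor gap can at best rule out $(O(1),n^{\gamma\epsilon})$-approximation, never the $(n^{\gamma\epsilon},n^{\gamma\epsilon})$ bicriteria hardness claimed by the lemma. Relatedly, your appeal to the shortcut-resistance of $O^{*}$ for ``intra-copy isolation'' is misplaced: for a source/sink tree backbone with disjoint copies, isolation is trivial, and isolation is not where the difficulty lies.

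The paper's proof composes the copies \emph{in series}, not in parallel, precisely to manufacture the diameter gap. It replaces the even-layer vertices of the $\Delta^{\lan}$-layer geometric graph of \cref{lem:geograph} (Huang--Pettie) by copies of the instance graph, with the index function routing each inter-copy edge into a specific entry vertex $a_x$ and out of a specific exit vertex $b_y$. Completeness then gives diameter $O(\rho)$ (every path decomposes into at most $\Delta^{\lan}$ segments, each an $L\times R$ pair shortcut to $O(1)$), while soundness uses the critical pairs, whose unique paths pairwise share at most one edge and traverse one copy per layer at some $(x,y)\in P'$: any short $s_i$-to-$t_j$ path forces the pair $(x,y)$ to be brought below $\rho/3$ in at least half of the $\Delta^{\lan}$ copies it visits, cross-copy shortcut edges are first converted into within-copy edges at a $\Delta^{\lan}$-factor loss, and a double count over the disjoint cover sets $T_{i,j}$ combined with Type 2 shows the diameter stays above $\rho\Delta^{\lan}/9\gg n^{\gamma\epsilon}\rho$. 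This serial structure plus the unique-path/index bookkeeping is the missing idea in your write-up; without it the calibration in your last paragraph cannot be made to work.
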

By combining \cref{lem:smallgadget,lem:usegadget}, we can get the following corollary. Notice that this corollary is not the same as~\cref{thm:main}, since it does not restrict $s$ to be $\omega(m)$.

\begin{corollary}\label{cor:smallmain}
	Under \conj{} (\cref{con:pgc}), there is no polynomial algorithm solving \oss{n^{\epsilon}}{n^{\epsilon}} for sufficiently small constant $\epsilon$.
\end{corollary}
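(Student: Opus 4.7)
My plan is to reduce \gadget{\cs{}}{\epsilon} to \oss{n^{\gamma\epsilon}}{n^{\gamma\epsilon}} via a graph composition. Set $k = n^{\gamma\epsilon}$. The key ingredient is a combinatorial ``amplifier'' gadget $O^*$, of the Hesse~\cite{Hesse03} / Huang--Pettie~\cite{HuangP21} flavor (and reused in~\cref{subsec:largebicriteria} for canonical boosting), on $N^* = \poly(k)$ vertices, with a designated source $s_{O^*}$ and sink $t_{O^*}$, satisfying: (i) there is a ``canonical'' shortcut of $O(N^*)$ edges reducing its diameter to $O(1)$, and (ii) any shortcut of sublinear size leaves the $s_{O^*}\!\to\! t_{O^*}$ distance at least $\Omega(k)$. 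Such a gadget can be constructed with $|O^*|$ polynomial in $k$.

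Given an instance $(G,L,R,P,s)$ of \gadget{\cs{}}{\epsilon}, I build $G^*$ by prepending a fresh copy of $O^*$ before each $l\in L$ (identifying $t_{O^*}$ with $l$ and introducing a fresh external source $s_l$), and appending a fresh copy of $O^*$ after each $r\in R$ (identifying $s_{O^*}$ with $r$ and introducing a fresh external sink $t_r$). I also add light auxiliary structure so that every reachable pair in $G^*$ effectively routes through some path of the form $s_l \to l \to \cdots \to r \to t_r$; concretely, vertices of $G$ outside $L\cup R$ are equipped with cheap side-paths that are trivially shortcuttable with $O(1)$ edges each. Let $n' = |V(G^*)| = \Theta(n \cdot N^*)$ and $m' = |E(G^*)| = \Theta(m + (|L|+|R|)\,N^*)$.

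For completeness, combining a Type-1 shortcut $E'$ of $G$ of size $s$ with the canonical shortcuts on all $|L|+|R|$ copies of $O^*$ gives a shortcut of $G^*$ of size $s^* = s + O((|L|+|R|)\,N^*)$ and diameter $d^* = O(1)$. For soundness, suppose $G^*$ admits an $(s^*\cdot (n')^{\gamma\epsilon},\, d^* \cdot (n')^{\gamma\epsilon})$-shortcut $F$. I project $F$ back to a shortcut of $G$ and derive a contradiction with Type 2. The rigidity property (ii) of $O^*$ implies that only a small fraction of the $|L|+|R|$ gadget copies can receive enough of the shortcut budget to have their $s_l\!\to\! l$ (resp.\ $r\!\to\! t_r$) distance meaningfully reduced; hence the bulk of $F$ must live inside $G$ itself. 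After standard charging, the resulting $G$-shortcut has size $\le s \cdot n^\epsilon$ yet reduces the distance of an $\Omega(1)$ fraction of pairs in $P$ to $d/3$, contradicting Type~2.

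The main obstacle is the quantitative alignment of three parameters: the rigidity threshold of $O^*$, the bicriteria factor $(n')^{\gamma\epsilon}$ we seek to refute on the \oss{} side, and the gap factor $n^\epsilon$ from \gadget{}. Choosing $\gamma$ sufficiently small and $|O^*| = \poly(n^{\gamma\epsilon})$ ensures both that $(n')^{\gamma\epsilon} = n^{O(\gamma\epsilon)}$ translates to a factor of at most $n^\epsilon$ blowup on the \gadget{} side, and that $s^* = \Omega((m')^{1+\gamma(\cs-1)})$, matching the size restriction claimed in the lemma.
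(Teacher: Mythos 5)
There is a genuine gap, in fact two. First, the amplifier gadget $O^*$ you postulate cannot exist as stated: property (ii) asks that no sublinear-size \ss{} reduce the distance of the \emph{single} designated pair $(s_{O^*},t_{O^*})$ below $\Omega(k)$, but one shortcut edge $(s_{O^*},t_{O^*})$ (which lies in the transitive closure) already reduces it to $1$. The Hesse~\cite{Hesse03} and Huang--Pettie~\cite{HuangP21} lower bounds are statements about \emph{many} pairs simultaneously, never about one pair. Even if you reinterpret each copy as having many terminal pairs, attaching the copies in parallel (one fresh copy per terminal of $L\cup R$) provides no obstruction: $|L|+|R|$ extra edges of the form $(s_l,l)$ and $(r,t_r)$ bypass every copy at negligible cost, so in the soundness direction nothing forces the bulk of $F$ to live inside $G$, and the charging step that is supposed to produce a $G$-shortcut of size at most $s\cdot n^{\epsilon}$ has no basis (note also that the allowed budget $s^*\cdot (n')^{\gamma\epsilon}$ with $s^*\supseteq (|L|+|R|)N^*$ can exceed $s\cdot n^{\epsilon}$ already for edges placed inside $G$).

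Second, and more fundamentally, your composition does not amplify the diameter gap, which is the whole content of the bicriteria claim. In Type~1 of \gadget{\cs{}}{\epsilon} (\cref{def:gadget}) only the pairs in $L\times R$ are promised distance $O(1)$; the pairs internal to $G$ (which must be left intact for Type~2 to apply) retain distance $\Theta(d)$, where $d$ is the diameter of $G$. Hence your completeness instance has diameter $\Theta(d)$, not $O(1)$ --- the ``cheap side-paths, $O(1)$ edges each'' cannot repair this without modifying $G$ in a way that would equally shrink distances in the soundness case. Since Type~2 only forbids reducing pairs of $P$ below $d/3$, the yes/no diameter gap in your instance is a constant factor, which at best re-derives the warm-up hardness with $\apxD=O(1)$ (\cref{lem:onecriteria}), not \oss{n^{\epsilon}}{n^{\epsilon}}. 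The paper's route (\cref{lem:usegadget}) avoids this by chaining $\Delta^{\lan{}}$ copies of the gadget instance \emph{in series} along the unique, nearly edge-disjoint critical paths of the layered graph of \cref{lem:geograph} (even layers replaced by copies of $G_{inr}$, odd layers by stars, terminals wired via the index function), so that completeness keeps diameter $O(\rho)$ while soundness forces distance $\Omega(\rho\Delta^{\lan{}})$ for most critical pairs; the near-disjointness of the paths is what makes the counting argument transfer the size bound. Your parallel attachment has no analogue of either ingredient.
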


The following geometric graph~\cite{HuangP21} is a crucial part of our proof for~\cref{lem:usegadget}. We say a graph $G=(V,E)$ is a $k$-layered directed graph if $V=V_1\uplus V_2\uplus...\uplus V_k$ ($V$ is partitioned into $V_1,...,V_k$), such that for any edge $(u,v)\in E$, there exists $1\le i<k$ and $u\in V_i,v\in V_{i+1}$. $V_i$ is called the $i$-th layer of $G$.

\begin{lemma}[Section 2.2~\cite{HuangP21}]\label{lem:geograph}
	For any $\AB\in\mathbb{N}^+$, for arbitrary constant $0<\lan{}< 1$, we can compute in polynomial time a $\AB^\lan$ layered graph $G=(V,E)$, a set of pairs $P$ and an indexing $\idx{}$ satisfying the following properties.
 
	\begin{enumerate}
            \item Each layer of $G$ contains $\The{\AB^{10}}$ vertices and has a maximum in-degree and out-degree of at most $\AB$. Let the first layer be denoted as $V_1 = \{s_1, s_2, \dots, s_{K_1}\}$ and the last layer as $V_2 = \{t_1, t_2, \dots, t_{K_2}\}$. \label{geoitem1}

		\item $P \subseteq [K_1] \times [K_2]$. For every $i\in[K_1]$, there are $\Omega(\Delta^2)$ pairs $(i,j)\in P$. For every $(i,j)\in P$, there is a unique path from $s_i$ to $t_j$ in $G$, denoted by $p_{i,j}$. $p_{i,j}$ and $p_{i',j'}$ shares at most one edge for different pairs $(i,j)\not=(i',j')$. \label{geoitem3}
		\item The function $\idx{}$ assigns a value from $[\Delta]$ to each $(v, e)$ pair, where $e$ is either an in-edge or an out-edge of vertex $v$. For a given vertex $v$, $\idx{}$ assigns distinct values to its different in-edges, and similarly, $\idx{}$ assigns distinct values to its different out-edges.  For any $(x,y)\in[\AB]\times[\AB]$, there exists $\Omega(\Delta^{10})$ paths $p_{i,j}$ (where $(i,j)\in P$) such that for any edge $(u,v)$ on $p_{i,j}$,\label{geoitem4}
                \begin{itemize}
                    \item if $v$ is in the even layer, then $\Idx{v,(u,v)}=x$, 
                    \item if $u$ is in the even layer, then $\Idx{u,(u,v)}=y$.
                \end{itemize}
	\end{enumerate}
 
\end{lemma}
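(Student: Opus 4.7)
The plan is to recover the construction of Huang and Pettie, which refines Hesse's earlier geometric lower-bound graphs. I would fix an integer dimension $k$ chosen so that $k \cdot \lan \approx 10$, a suitable prime $p = \Theta(\AB^{1/k})$, and work on the lattice $V \subseteq [p]^k$. Define a height function $h(v) = \sum_i v_i$ so that vertices partition into layers by height; the choice of $k$ and $p$ yields roughly $\AB^{\lan}$ layers, each of size $\The{\AB^{10}}$. The first and last layers of the construction are then renamed $V_1$ and $V_2$ to match the lemma statement.

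Next, I would design the edge set from a carefully chosen family $D = \{d_1, \ldots, d_{\AB}\} \subset \mathbb{Z}^k$ of $\AB$ direction vectors with positive coordinate sum. Each edge goes from $v$ to $v + d_i$ for some $d_i \in D$, staying inside the grid and raising the height by one; the max in/out-degree bound of $\AB$ is then immediate. To satisfy the unique-path and edge-sharing conditions of property~\ref{geoitem3}, I would take $D$ in ``general position'' (e.g.\ arising from a Behrend-type $3$-AP-free set or a finite affine geometry) so that any two distinct straight lines of the grid share at most one vertex, hence at most one edge. The set $P$ would then consist of pairs $(i,j)$ for which $s_i$ and $t_j$ are joined by such a line lying entirely in $V$; a direct counting argument gives $\Omega(\AB^2)$ such pairs per source.

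For the indexing function $I$, I would set $\Idx{v,(u,v)} = x$ and $\Idx{u,(u,v)} = y$ whenever $v - u$ equals the corresponding direction vector $d_x$ (resp.\ $d_y$). Because at any fixed vertex the in-edges and out-edges correspond to distinct direction vectors, this assignment automatically gives distinct values to distinct in-edges (resp.\ out-edges) at the same vertex, which is exactly what property~3 demands.

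The main obstacle is property~\ref{geoitem4}: for every $(x,y)\in[\AB]^2$ one must exhibit $\Omega(\AB^{10})$ paths in $P$ whose \emph{every} even-layer in-edge is indexed by $x$ and every even-layer out-edge by $y$. Pinning down the even-layer directions so rigidly kills most paths, so naively one would not have $\AB^{10}$ survivors. I would handle this by splitting the direction family $D$ into two decoupled subfamilies used at even and odd layers respectively, and then fix the single even-layer in-direction $d_x$ and single even-layer out-direction $d_y$ while leaving the odd-layer choices entirely free. The balance to be struck is that there must be enough free odd-layer moves to deliver $\Omega(\AB^{10})$ distinct surviving lines while keeping the ``at most one shared edge'' guarantee; this is exactly the quantitative refinement that Huang and Pettie provide over Hesse, and I expect this parameter balance to be the technically most delicate step, for which I would import their choices directly.
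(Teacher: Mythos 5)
There are genuine gaps here. First, your parameters do not fit the statement: with $k\approx 10/\lan$ and $p=\Theta(\AB^{1/k})$ the whole lattice $[p]^k$ has only $\Theta(\AB)$ vertices, so it cannot have $\AB^{\lan}$ height-layers each of size $\The{\AB^{10}}$; the $\AB^{10}$ in the lemma comes from the specific two-block product structure of Huang--Pettie (two planar coordinates, each ranging over a ball of area $\approx\AB^5$, with $\Theta(\AB)$ extreme directions obtained from $r_1=\AB^{3/2}$), not from a single $k$-dimensional grid. Second, your argument for property~2 does not establish uniqueness of the path: ``any two distinct straight lines share at most one vertex'' is about pairs of canonical paths, whereas the lemma needs that $p_{i,j}$ is the \emph{only} $s_i\to t_j$ path in $G$, which requires the directions to be in strictly convex position (extreme points of a sumset), so that a walk mixing different directions can never reach the same endpoint -- this is the heart of Hesse/Huang--Pettie and is exactly the step you never supply (and once you switch to ``decoupled subfamilies'' the relevant paths are alternating zig-zags, not straight lines, so your line-intersection argument no longer even applies; the one-shared-edge claim for those paths is a separate lemma in \cite{HuangP21}).

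Third, your treatment of property~3 rests on a misreading. In a layered graph every edge joins consecutive layers, hence is incident to exactly one even-layer vertex, so fixing the even-layer in-index $x$ and out-index $y$ constrains \emph{every} edge of the path: from a given first-layer source the path is forced to alternate the two fixed moves and is completely determined. There are no ``free odd-layer choices'' to generate survivors; the $\Omega(\AB^{10})$ count comes from ranging over the $\The{\AB^{10}}$ sources in the first layer, one forced path per source. Relatedly, property~2's requirement of $\Omega(\AB^2)$ pairs per source is what forces the two-coordinate product (even moves shift one coordinate block, odd moves the other, giving a pair $(w_x,w_y)$ per critical pair); you gesture at this with the subfamily split but never re-verify uniqueness, edge-sharing, or the pair count for that modified construction. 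The paper sidesteps all of this by taking the Huang--Pettie graph $G_1\otimes G_2$ (with $d_1=d_2=2$, $r_1=\AB^{3/2}$) as a black box, truncating to the first $\AB^{\lan}$ layers, inheriting uniqueness and the one-shared-edge property from their Lemma~2.6, and defining $\idx$ simply by the fixed ordering of the direction set $\mathcal{V}_2(r_1)$; if you intend to ``import their choices directly'' you should do so at the level of the whole construction rather than rebuilding it with incompatible parameters.
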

\begin{proof}[Proof of~\cref{lem:geograph}]
	The general idea is to use the graph described in Section 2.2~\cite{HuangP21} with max degree $\Delta$, and cut the first $\Delta^{\lan{}}$ layers to get our desired graph. Readers are recommended to read the construction in Section 2.2~\cite{HuangP21}, and the construction for~\cref{lem:geograph} is followed straightforwardly. 
	
    To be more specific, we provide a brief guide on how to modify Section 2.2 of \cite{HuangP21} to obtain the desired lemma. We adopt the notations introduced in Section 2.2 of \cite{HuangP21}; therefore, readers are encouraged to refer to that section for a more detailed exposition. Moreover, the properties stated in \Cref{lem:geograph} are standard, so once Section 2.2 of \cite{HuangP21} has been reviewed, the validity of the lemma should be evident.
    
    Let the graph $G_1\otimes G_2$ described in Section 2.2~\cite{HuangP21} with parameters $d_1=d_2=2,r_1=\Delta^{3/2}$ be $G_{\Delta}$. $G_{\Delta}$ has $\Omega(\Delta)$ layers, where each layer has $\The{\Delta^{10}}$ vertices. The max in/out-degree is $\Delta$. We use the subgraph of $G_{\Delta}$ induced by the first $\Delta^{\lan{}}$ layers
	to get the desired graph described in~\cref{lem:geograph}, denoted by $G'_{\Delta}$. %
	We construct $P$ in the following way. According to Section 2.2~\cite{HuangP21}, there exist $\Omega(\Delta^{12})$ \emph{Critical Pairs} in $G_{\Delta}$ between the first and last layer nodes where each of them has a unique path connecting them. We cut each unique path at the first $\Delta^{\lan{}}$ layers, resulting in $\Omega(\Delta^{12})$ pairs between the first and last layer in $G'_{\Delta}$. %
	Now we verify the properties of $P$ one by one.
	\begin{enumerate}
		\item In Section 2.2~\cite{HuangP21}, each critical pairs is $((a,b,0),(a+Dv,b+Dw,2D))$ for some $v,w\in\mathcal{V}_2(r_1)$. There are $\Delta$ possible choices for $v$ or $w$. Thus, for every $a$, there are $\Omega(\Delta^2)$ choices of $b$ such that $(a,b)\in P$.
		\item If the path between $(a,b)\in P$ is not unique, then the path between two critical pair in $G_{\Delta}$ is also not unique.
		\item Since every two paths between two critical pairs in $G_\Delta$ share at most one edge according to Lemma 2.6~\cite{HuangP21}, our path segments in the first $\Delta^{\lan{}}$ layers also share at most one edge.
	\end{enumerate}

        Now we construct $\idx{}$. We fix an order for the set $\mathcal{V}_2(r_1)=\{w_1,w_2,...,w_\Delta\}$. Each node in the even layer $(a,b,i)$ has out-edges $\{(a+w_j,b,i+1)\left((a,b,i),(a+w_j,b,i+1)\right)\mid j\in[\Delta]\}$. $\idx{}$ assign index $j$ to the edge $\left((a,b,i),(a+w_j,b,i+1)\right)$. $(a,b,i)$ has in-edges $\{\left((a,b-w_j,i-1),(a,b,i)\right)\mid j\in[\Delta],b-w_j\in B_2(R_1+\left\lceil (i-1)/2\right\rceil r_1)\}$. $\idx{}$ assign $j$ to the edge $\left((a,b-w_j,i-1),(a,b,i)\right)$. Now for any $(x,y)\in[\Delta]\times[\Delta]$, we consider all the unique paths from $(a,b,0)$ to $(a+\Delta^{\lan{}}x,b+\Delta^{\lan{}}y,\Delta^{\lan{}}-1)$. All edges in this path are either $\left((a+ix,b+iy,2i),(a+(i+1)x,b+iy,2i+1)\right)$, which is the in-edge of $(a+(i+1)x,b+iy,2i+1)$ assigned as $x$, or $((a+(i+1)x,b+iy,2i+1)$,$(a+(i+1)x,b+(i+1)y,2i+2))$, which is the out-edge of $(a+(i+1)x,b+iy,2i+1)$ assigned as $y$.\footnote{To avoid confusion, notice that in Section 2.2~\cite{HuangP21}, the node $(x,y,i)$ is in the $(i+1)$-th layer, because $i$ is indexed from $0$. }
\end{proof}
Now we are ready to prove the main lemma in this section.
\begin{proof}[Proof of~\cref{lem:usegadget}]
	Suppose $\mathcal{A}$ is the polynomial time algorithm for \oss{n^{\gamma\epsilon}}{n^{\gamma\epsilon}} where input must satisfy $s=\Omega(m^{1+\gamma(\cs{}-1)})$. Given a \gadget{\cs{}}{\epsilon} instance $G_{inr},(L,R),P',\bufs$ (see~\cref{def:gadget}, we use $G_{inr},P',s'$ instead of $P,s$ to avoid conflicting of notations), we will show how to use $\mathcal{A}$ as an oracle to solve it in polynomial time.
	
	\paragraph{Definition of $G$.} Let $\Delta=|L|=|R|$, let $M$ denote the number of edges in $G_{inr}$ and let the diameter of $G_{inr}$ be $\rho{}$. We first construct a graph $G$ in the following way. Let the graph, pairs, and indexing described in~\cref{lem:geograph} with parameter $\AB$ and sufficiently small constant $\lan$ be $G_{geo},P,I$. Without loss of generality, we assume $\AB^\lan$ is odd. 
	$G$ contains the following parts. See~\cref{fig:origin} as an example.
	
	\begin{figure}
		\centering
		\includegraphics[scale=0.8]{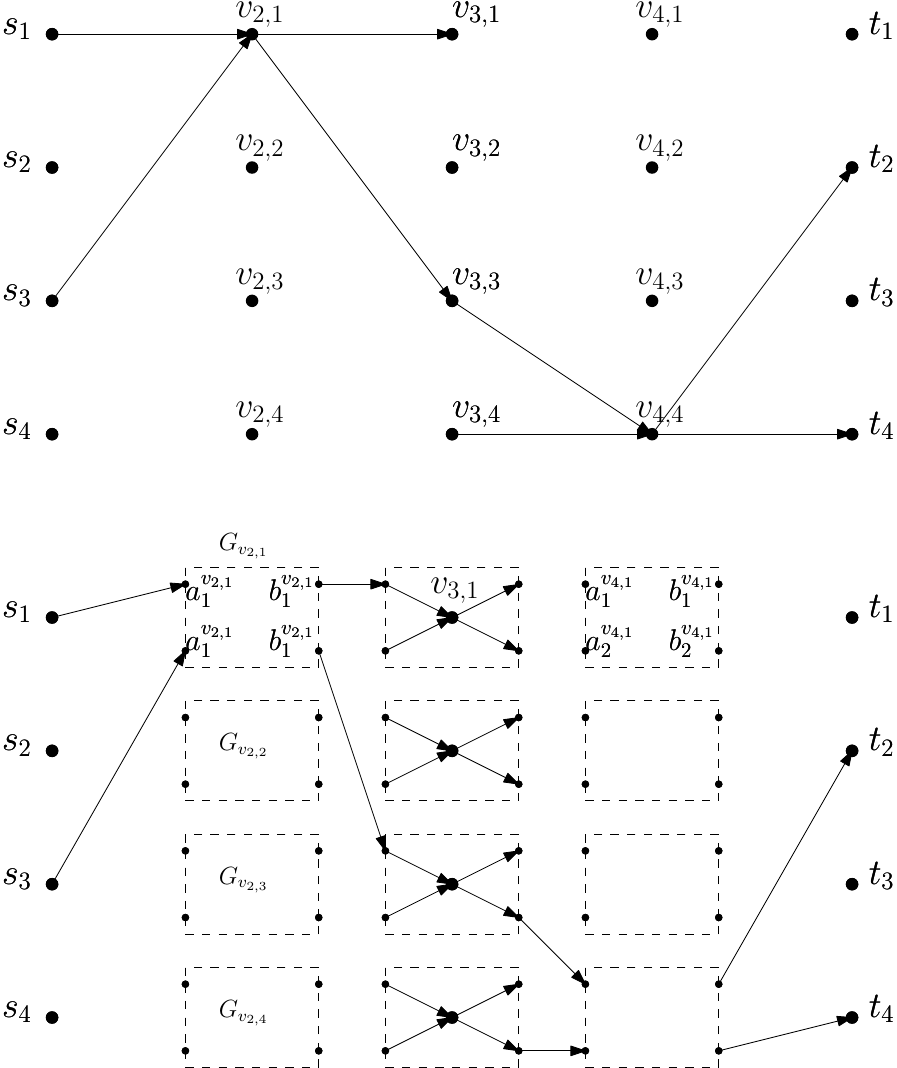}
		\setcaptionwidth{0.95\textwidth}
		\caption{Suppose the graph above is the graph $G_{geo}$ with $\AB=2$ (notice that for ease of explanation, the graph does not satisfy properties described in~\cref{lem:geograph}). The graph below shows how we substitute each node $v$ by $G_v$. If $v$ is in the even layer, then $G_v$ is a copy of $G_{inr}$; otherwise, if $v$ is not in the first or last layer, $G_v$ is a star graph. }\label{fig:origin}	
	\end{figure}
	
	\begin{enumerate}
		\item (Substitute nodes in even layers by copies of $G_{geo}$) for each vertex $v$ in the even layers of $G_{geo}$, create a copy of graph $G_{inr}$, denoted as $G_v$ as part of $G$. Suppose $L=\{a_1,...,a_{\Delta}\},R=\{b_1,...,b_\Delta\}$ (remember that $L,R$ are inputs to \gadget{\cs{}}{\epsilon}). $a_i$ and $b_i$ are denoted by $a^v_i,b^v_i$ in $G_v$. 
		\item (Substitute nodes in odd layers by star graphs) for each vertex $v$ in the odd layers except the first and last layer of $G_{geo}$, create vertices $s_1,...,s_{\Delta},t_1,...,t_\Delta,v$, and create edges $(s_1,v),,...,(s_\Delta,v),(v,t_1),...,(v,t_\Delta)$.
		\item (Keep first and last layer) remember that $I$ is the input to \gadget{\cs{}}{\epsilon}. $G$ includes all nodes in the first and last layer of $G_{geo}$, denoted by $s_1,s_2,...,s_{K_1}$ and $t_1,t_2,...,t_{K_2}$.
		\item (Edges connecting different parts) for each edge $e=(u,v)$ in $G_{geo}$, create an edge $(b^u_{\Idx{u,e}},a^v_{\Idx{v,e}})$ in $G$. If $u$ is in the first layer or $v$ is in the last layer, just create edge $(u,a^v_{\Idx{v,e}})$ or $(b^u_{\Idx{u,e}},v)$ instead.
	\end{enumerate}
	
	Remember that $G_{geo}$ has $\Delta^{\lan{}}$ layers, each with $\Delta^{10}$ vertices, and the max degree of $G_{geo}$ is $\Delta$. The number of edges $m$ in $G$ can be calculated as
	\[m=\The{\Delta^{10+\lan{}}\cdot M}+\The{\Delta^{10+\lan{}}\cdot \Delta}+O(\Delta^{10+\lan{}}\cdot \Delta)=\The{\Delta^{10+\lan{}}\cdot(\Delta+M)}\]
	
	\paragraph{Use $\mathcal{A}$ to solve \gadget{\cs{}}{\epsilon}.} After constructing $G$, we apply $\mathcal{A}$ on $G$ with parameter $s=\AB^{10+\lan}\cdot \bufs, d=C\rho$ for sufficiently large constant $C$. We first argue that $s=\Omega(m^{1+\gamma(\cs{}-1)})$ for sufficiently small $\gamma$ so this is a valid input. Remember that $\bufs$ is one of the inputs to \gadget{\cs{}}{\epsilon} such that $s'=\Omega(M^{\cs{}})$. Also remember in~\cref{def:gadget}, we require $|L|=|R|=\Delta\le M$ to be polynomial in $M$. Denote $M=\Delta^{c_{\Delta}}$. Then we have $m=\The{\Delta^{10+\lan{}+c_\Delta}}$ and $s=\Delta^{10+\lan{}+c_{\Delta}\cs{}}$. Now we have $s/m=\The{\Delta^{c_\Delta(\cs{}-1)}}$, where $\Delta^{c_\Delta}=m^{\gamma}$ for some constant $\gamma$. Thus, we have $s=\Omega(m^{1+\gamma(\cs{}-1)})$.
	
Remember that $\rho$ is the diameter of $G_{inr}$. According to~\cref{def:gadget}, $\rho{}$ is polynomial in $M$. Let $\rho{}= M^{c_D}$. %
Remember our goal is to distinguish the following two types of \gadget{\cs{}}{\epsilon} instances.

\begin{description}
	\item[Type 1.] There exists a \ss{} $E'$ of $G_{inr}$ with size $O(s')$ such that all reachable pairs $(u,v)\in L\times R$ have distance $O(1)$ after adding $E'$ to $G_{inr}$.
	\item[Type 2.] By adding any \ss{} with size $\bufs\cdot M^{\epsilon}$, at most $o(1)$ fraction of the pairs in $P'$ have distance at most $\rho{}/3$.
\end{description}

We prove the following two lemmas to show the output of $\mathcal{A}$ suffices to distinguish whether the \gadget{\cs{}}{\epsilon} instance is in type 1 or type 2. %

\begin{lemma}\label{lem:type1}
	If the \gadget{\cs{}}{\epsilon} instance is type 1, then $G$ has a \ssss{s=\AB^{10+\lan}\cdot \bufs}{O(\rho{})}.
\end{lemma}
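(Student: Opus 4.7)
The plan is to build a shortcut $E^*$ for $G$ by taking, for each even-layer vertex $v$ of $G_{geo}$, a private copy of the type-1 shortcut $E'$ of $G_{inr}$ and placing it inside the corresponding copy $G_v$. The size bookkeeping is immediate: there are $\Theta(\Delta^{\lan})$ layers with $\Theta(\Delta^{10})$ vertices each, so $\Theta(\Delta^{10+\lan})$ copies of $G_{inr}$ appear inside $G$, each contributing $|E'| = O(\bufs)$ shortcut edges, giving $|E^*| = O(\Delta^{10+\lan}\bufs)$, matching the target shortcut size $s = \Delta^{10+\lan}\bufs$.

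The remaining content is the diameter bound. For any reachable pair $(x,y)$ in $G$, I would exhibit a short $x$-to-$y$ walk in $G \cup E^*$ split into three phases. First, if $x$ lies in a copy $G_v$, I would walk inside $G_v$ from $x$ to an exit vertex $b^v_i \in R$ in at most $\rho$ steps, since $\rho$ is the diameter of $G_{inr}$; symmetrically, I would finish the walk inside the copy $G_u$ containing $y$, routing from some entry $a^u_j \in L$ to $y$ in at most $\rho$ steps. Second, for each intermediate copy $G_w$ crossed by the walk, the incoming connecting edge lands at some $a^w_i \in L$ and the outgoing connecting edge departs from some $b^w_j \in R$; since $(a^w_i, b^w_j)$ is a reachable $L \times R$ pair, the embedded copy of $E'$ routes this transit in $O(1)$ steps. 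Third, each intermediate odd-layer star is crossed in exactly two steps through its center vertex. Since the walk uses at most $\Delta^{\lan}$ layers, the middle phase contributes $O(\Delta^{\lan})$ steps in total, so the walk has length $O(\rho + \Delta^{\lan})$.

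The main obstacle is to collapse $O(\rho + \Delta^{\lan})$ down to $O(\rho)$. Using the parameterizations $M = \Delta^{c_{\Delta}}$ and $\rho = M^{c_D}$ introduced in the enclosing proof of \cref{lem:usegadget}, we have $\rho = \Delta^{c_D c_{\Delta}}$ with $c_D, c_{\Delta} \ge 1$, so any constant $\lan < 1$ (and in particular the ``sufficiently small'' $\lan$ fixed in the construction of $G$) guarantees $\Delta^{\lan} \le \rho$, making the walk length $O(\rho) \le C\rho = d$ for a suitable absolute constant $C$. This shows $E^*$ is a valid \ssss{\Delta^{10+\lan}\bufs}{C\rho} of $G$. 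The only mild care required is for the corner cases where $x$ or $y$ lies in a star or in the first/last layer of $G_{geo}$: then the corresponding endpoint phase is absent or trivially short, which only improves the bound.
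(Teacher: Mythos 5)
Your construction and distance accounting coincide with the paper's proof: embed a private copy of the type-1 shortcut in each copy of $G_{inr}$ (giving size $\AB^{10+\lan}\cdot\bufs$), split any $x$-to-$y$ path into at most $\AB^{\lan}$ segments through consecutive copies, pay at most $\rho$ inside the first and last copy, $O(1)$ per intermediate transit (length $2$ through a star), and conclude a distance of $O(\rho+\AB^{\lan})$. Up to this point the argument is fine and is essentially the paper's.

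The flaw is in the final comparison $\AB^{\lan}\le\rho$. You justify it by asserting $c_D\ge 1$, but \cref{def:gadget} only promises that the diameter of $G_{inr}$ is \emph{some} polynomial in $M$, i.e.\ $\rho=M^{c_D}$ for a positive constant $c_D$; since a diameter never exceeds the number of vertices, in fact $c_D\le 1$, and it may be an arbitrarily small positive constant. (Only $c_\Delta\ge 1$ is guaranteed, because $\AB=|L|\le M$.) So, for instance, with $c_\Delta=2$ and $c_D=0.1$ one has $\rho=\AB^{0.2}$, and a choice such as $\lan=1/2<1$ gives $\AB^{\lan}\gg\rho$, so ``any constant $\lan<1$'' does not suffice and $O(\rho+\AB^{\lan})$ is not $O(\rho)$. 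The correct statement, which is what the paper uses, is that $\lan$ must be chosen below $c_\Delta c_D$ (i.e.\ sufficiently small as a function of the constants of the \gadget{\cs{}}{\epsilon} instance family); with that choice $\AB^{\lan}\le\rho$ and the bound collapses to $O(\rho)$ as desired. This is a local repair---the rest of your proof stands---but as written the key inequality rests on a false premise rather than on the actual choice of $\lan$.
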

\begin{proof}
	For each copy of $G_{inr}$ (denoted as $G_v$) in $G$, we add $\bufs$ edges to the \ss{} to make all reachable pairs $(a^v_{i},b^v_j)$ have distance $O(1)$. The \ss{} has size at most $\AB^{10+\lan}\cdot \bufs$. Consider a reachable $(u,v)$ in $G$, we first argue that they have distance $O(\rho{}+\AB^\lan)$ after adding the \ss{}. To see this, notice that a path connecting $u,v$ can only be in the form of $(u,p_1,p_2,...,p_{\ell},v)$ where $p_i$ is a path inside $G_v$ for some $v$, and $\ell\le\AB^{\lan}$. %
	Each $p_i$ with $1<i<\ell$ will be in the form $(a^v_i,...,b^v_j)$ for some $v,i,j$. Notice that $(a^v_i,b^v_j)$ has distance $O(1)$ after adding the shortcut if $v$ is in the even layer (if $v$ is in the odd layer then there is already an existing length 2 paths connecting $(a^v_i,b^v_j)$). Therefore, $u$ has distance $O(\rho{}+\AB^\lan)$ to $v$. Remember that $\rho{}=M^{c_D}=\Delta^{c_\Delta c_D}$. By setting $\lan{}<c_\Delta c_D$, the distance is at most $O(\rho{})$.
\end{proof}

\begin{lemma}\label{lem:type2}
	If the \gadget{\cs{}}{\epsilon} instance is type 2, then $G$ has no \ssss{n^{\gamma\epsilon}s}{n^{\gamma\epsilon}\rho{}}.
\end{lemma}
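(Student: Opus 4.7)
The plan is to argue by contradiction. Suppose $G$ admits a shortcut $E^* \subseteq V(G)\times V(G)$ with $|E^*| \le n^{\gamma\epsilon}s = n^{\gamma\epsilon}\AB^{10+\lan}\bufs$ achieving diameter at most $n^{\gamma\epsilon}\rho$. The goal is to extract from $E^*$ a shortcut on $G_{inr}$ of size at most $\bufs \cdot M^{\epsilon}$ that reduces an $\Omega(1)$-fraction of the pairs in $P'$ to distance at most $\rho/3$, contradicting Type 2. First, an averaging argument: for each even-layer vertex $w$ of $G_{geo}$, set $\ell(w) = |\{e \in E^* : e \cap V(G_w) \ne \emptyset\}|$. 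Handshaking gives $\sum_w \ell(w) \le 2|E^*|$, and with $\Theta(\AB^{10+\lan})$ even-layer vertices, a majority are \emph{light}: $\ell(w) = O(n^{\gamma\epsilon}\bufs)$. Since $n = \Theta(\AB^{10+\lan}(\AB+M))$ is polynomial in $M$, for $\gamma$ small enough we have $n^{\gamma\epsilon} \le M^{\epsilon}$, hence $\ell(w) \le \bufs \cdot M^{\epsilon}$ for every light $w$.

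Next, I would harvest canonical paths. For each $(x,y) \in [\AB]^2$ with $(a_x, b_y) \in P'$, property~\ref{geoitem4} of Lemma~\ref{lem:geograph} supplies $\Omega(\AB^{10})$ canonical pairs $(i,j) \in P$ whose paths $p_{i,j}$ use index $x$ on every even-layer in-edge and $y$ on every even-layer out-edge. In $G$ this forces the corresponding natural path from $s_i$ to $t_j$ to enter each even-layer copy at $a^w_x$ and exit at $b^w_y$. The diameter hypothesis yields a path $P^*_{i,j}$ in $G + E^*$ of length at most $n^{\gamma\epsilon}\rho$. For each light $w$, I would then construct a local shortcut $E^*_w \subseteq V(G_{inr})\times V(G_{inr})$ consisting of (i) every $E^*$-edge both of whose endpoints lie in $V(G_w)$, identified with vertices of $G_{inr}$, and (ii) a ``detour-projection'' edge $(u,u')$ for each maximal subwalk of $P^*_{i,j}$ starting at $u \in V(G_w)$, exiting $V(G_w)$, and returning at $u' \in V(G_w)$. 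Each projected edge charges to a distinct boundary $E^*$-edge incident to $V(G_w)$, so $|E^*_w| = O(\ell(w)) \le \bufs M^{\epsilon}$. By construction, the subwalk of $P^*_{i,j}$ between its first and last visits to $V(G_w)$ projects to a walk from $a^w_x$ to $b^w_y$ in $G_w + E^*_w$ of the same length.

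The proof would close with a double counting. Since $|P^*_{i,j}| \le n^{\gamma\epsilon}\rho$ while $p_{i,j}$ contains $\Omega(\AB^\lan)$ even-layer vertices, for $\gamma$ small enough (so that $n^{\gamma\epsilon} \ll \AB^\lan$) a constant fraction of the visited $w$ yield local walks of length at most $\rho/3$. Summing across the $\Omega(\AB^{10})$ canonical pairs per $(x,y)$ produces $\Omega(\AB^{10+\lan})$ ``reducing'' triples $(w, (i,j), (x,y))$, and property~\ref{geoitem3} (unique $p_{i,j}$ sharing at most one edge with any other) implies that at most one canonical $(i,j)$ with index $(x,y)$ passes through any fixed $w$, so the triples translate to that many \emph{distinct} pairs $(w,(x,y))$. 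Aggregating over $(x,y) \in P' \cap [\AB]^2$ yields $\Omega(|P'|\AB^{10+\lan})$ reduced pairs $(w,(x,y))$; dividing among the light $w$ produces some light $w^*$ with $\Omega(|P'|)$ reductions, so $E^*_{w^*}$ is a shortcut of $G_{inr}$ of size at most $\bufs M^\epsilon$ reducing an $\Omega(1)$-fraction of $P'$ to distance at most $\rho/3$, contradicting Type 2.

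The main obstacle is the detour-projection step: globally $P^*_{i,j}$ may leave $V(G_w)$ and re-enter multiple times through long-range shortcut edges, and one must set the projection up so that it simultaneously (a) gives $|E^*_w| = O(\ell(w))$ via a clean charging to distinct boundary $E^*$-edges, and (b) preserves the length of the relevant portion of $|P^*_{i,j}|$ in $G_w + E^*_w$. A secondary subtlety is canonical pairs whose $P^*_{i,j}$ skips $V(G_w)$ entirely via a shortcut: such $(i,j)$ contribute no reduction at $w$, but the length-based averaging remains valid since we only need a constant fraction of the $\Omega(\AB^\lan)$ canonical visits per path to land below $\rho/3$.
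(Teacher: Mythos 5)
There is a genuine gap, and it is exactly at the point you flag as the ``main obstacle''. Your per-copy projection only introduces a detour edge $(u,u')$ when the global path leaves $V(G_w)$ and \emph{returns} to $V(G_w)$; but in this layered construction such returns never happen (once a path leaves a copy it moves to strictly later layers and can never re-enter, even via transitive-closure shortcut edges), so your case (ii) is vacuous and does nothing about the real problem: shortcut edges whose endpoints lie in \emph{different} copies let $P^*_{i,j}$ skip covered copies entirely. The budget $n^{\gamma\epsilon}s=n^{\gamma\epsilon}\AB^{10+\lan}\bufs$ is large enough that, for instance, a single shortcut edge $(s_i,t_j)$ can serve a critical pair, in which case $P^*_{i,j}$ visits none of its $\Omega(\AB^{\lan})$ covered copies. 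Your closing averaging only guarantees that a constant fraction of the \emph{visited} copies get local walks of length $\le\rho/3$, whereas the double count needs a constant fraction of all $\Omega(\AB^{\lan})$ \emph{covered} copies per critical pair; with skipping, the number of reducing triples can be $o(|P'|\AB^{10+\lan})$ and no contradiction with Type 2 follows. (A secondary problem: even if detours existed, charging each projected edge to a distinct boundary $E^*$-edge fails once you aggregate over many canonical paths, since different paths can exit through the same boundary edge yet re-enter at different vertices, so $|E^*_w|=O(\ell(w))$ is not justified.)

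The paper closes this gap with a normalization step that your proposal lacks: every shortcut edge crossing layers is replaced by at most $\AB^{\lan}$ edges, each with both endpoints inside a single copy, obtained by cutting an underlying $u\to v$ path of $G$ at layer boundaries. This multiplies the shortcut size by $\AB^{\lan}$ (absorbed by working with the slack $M^{\epsilon/2}$ versus $M^{\epsilon}$) and adds only $O(\AB^{\lan})$ to distances. After normalization, for a critical pair $(i,j)$ every $s_i\to t_j$ path in the shortcutted graph is forced through each covered copy from $a^v_x$ to $b^v_y$, so the length decomposes layer by layer: if fewer than half of the covered copies have local distance $\le\rho/3$, the path length exceeds the diameter bound. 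Only then does the disjointness of the cover sets $T_{i,j}$ (your use of \cref{lem:geograph}, item~\ref{geoitem3}, which is fine) yield $\sum I_{v,x,y}=\Omega(|P'|\AB^{10+\lan})$, contradicting the upper bound obtained from the light/heavy split. To repair your argument you would need to add such a normalization (or some other mechanism that charges skipped copies), at which point it essentially becomes the paper's proof.
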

\begin{proof}
	
	Before we prove our lemma, we need to do some preparations. %
	Remember that $P'$ is the input to \gadget{\cs{}}{\epsilon} such that for any $(x,y)\in P'\subseteq[\AB]\times[\AB]$, $a^v_x$ can reach $b^v_y$ for any $v$. Also recall that according to~\cref{lem:geograph} item~\ref{geoitem4}, for every $(x,y)\in P'\subseteq [\AB]\times[\AB]$, in $G_{geo}$ there exists $\Omega(\Delta^{10})$ paths $p_{i,j}$ (which is the unique path connecting $s_i,t_j$) that all edges $(u,v)$ in this path is the in-edge indexed by $I$ as $x$ of $v$ if $v$ is in the even layer, or the out-edge indexed by $I$ as $y$ of $u$ if $u$ is in the even layer. That means in $G$, there is a path from $s_i$ to $t_j$ in the form of $p'_{i,j}=(s_i,a^{v_{2,q_2}}_{x},...,b^{v_{2,q_2}}_{y}, a^{v_{3,q_3}}_x,...,b^{v_{3,q_3}}_y,...,t_j)$. We say $p'_{i,j}$ covers vertices $v_{2,q_2},v_{3,q_3},...$ at $(x,y)$. Moreover, for $(i',j')\not=(i,j)$, we know from~\cref{lem:geograph} item~\ref{geoitem3} that $p_{i,j},p_{i',j'}$ shares at most one edge. Thus, $p'_{i,j}$ and $p'_{i',j'}$ cannot cover the same vertex $v$ at the same $(x,y)$, otherwise they share two edges: the in-edge indexed as $x$ of $v$ and the out-edge indexed as $y$ of $v$. In summary, we have $\Omega(\Delta^{10}|P'|)$ pairs $(i,j)$, which we call critical pairs, satisfying the following properties.
	\begin{enumerate}
		\item $s_i$ can reach $t_i$ in $G$, where all paths from $s_i$ to $t_i$ must be in the form $(s_i,a^{v_{2,q_2}}_{x},...,b^{v_{2,q_2}}_{y}, a^{v_{3,q_3}}_x,$ $...,b^{v_{3,q_3}}_y,...,t_j)$. 
		\item Any path from $s_i$ to $t_j$ cover one vertex $v$ in each layer at some pair $(x,y)\subseteq P'$. For different critical pairs $(i,j)\not=(i',j')$, they will not cover the same vertex $v$ at the same pair $(x,y)\subseteq P'$.
	\end{enumerate}

	Now we are ready to prove \cref{lem:type2}. We will prove that $G$ does not have a \ssss{\AB^{10+\lan}\cdot \bufs\cdot M^{\epsilon/2}}{\rho{}\AB^{\lan}/9}. We first show why this implies~\cref{lem:type2}. Remember that $s=\AB^{10+\lan}\cdot \bufs$, thus, $\AB^{10+\lan}\cdot \bufs\cdot M^{\epsilon/2}>m^{\gamma\epsilon}s)$ for sufficiently small constant $\gamma$. Also remember that $m$ is polynomial in $\Delta$, thus, $\rho{}\Delta^{\lan{}}/9>m^{\gamma\epsilon}\rho{}$ for sufficiently small constant $\gamma$. 
	
	Suppose to the contrary, $G$ has a \ssss{\AB^{10+\lan}\cdot \bufs\cdot M^{\epsilon/2}}{\rho{}\AB^{\lan}/9}, we will make a contradiction. We first claim that $G$ also has a \ssss{\AB^{10+\lan}\cdot \bufs\cdot M^{\epsilon/2}\cdot \AB^{\lan}}{\rho{}\AB^{\lan}/8}, denoted by $E'$, such that both end points for every edge in $E'$ is in the same layer. We can achieve this by taking any edge $(u,v)$ in the original \ss{} that crosses different layers, finding the path from $u$ to $v$ denoted by $p$, and cut $p$ into at most $\AB^\lan$ segments $p_1,p_2,...,p_\ell$ where each segment is in the same layer, and replace $(u,v)\in E'$ by $\Delta^{\lan}$ new edges each connecting two end points of $p_i$ from $i=1$ to $i=\ell$. In this way, the \ss{} increase by a factor of at most $\AB^\lan$. The distance between two vertices increases by at most an additive factor of $O(\AB^\lan)$. Another property of $E'$ is that every edge in $E'$ has both endpoints in $G_v$ for some $v$. That is because different $G_v$ where $v$ in the same layer are not reachable from each other. 
	
	Now with $E'$ added to $G$, we denote the new graph by $G'$. We define an indicator variable $I_{v,i,j}$ to be $1$ if $\distt{G'}{a^v_i,b^v_j}\le \rho{}/3$. We know from the property of type 2 that for any $v$ and arbitrary small constant $\eta$, $\sum_{i,j\in P'}I_{v,i,j}>\eta|P'|$ implies $E'$ has at least $\bufs\cdot M^{\epsilon}$ edges in $G_v$. Since $|E'|\le \AB^{10+\lan}\cdot \bufs\cdot M^{\epsilon/2}\cdot \AB^{\lan}$, by setting $\lan<1/2$, at most $o(\AB^{10+\lan})$ vertices $v$ has the property that $\sum_{i,j\in P'}I_{v,i,j}>\eta|P'|$, which means $\sum_{v,(i,j)\in P'}I_{v,i,j}\le 2\eta\AB^{10+\lan}|P'|$. However, for each critical pair $(i,j)$, suppose $T_{i,j}=\{(v,x,y)\mid (i,j)\text{ covers }v\text{ at }(x,y)\}$, then we have $\sum_{(v,x,y)\in T_{i,j}}I_{v,x,y}\ge (1/2)\AB^\lan$, otherwise the distance $\distt{G'}{s_i,t_j}$ would be at least $(\AB^\lan)/2\cdot \rho{}/3>\rho{}\AB^\lan/6$. We also know that $T_{i,j}$ is disjoint for different $i,j$. Thus, we get $\sum_{v,i,j}I_{v,i,j}=\Omega(\Delta^{10}|P'|)\cdot \AB^\lan/2=\Omega(|P'|\AB^{10+\lan})$, contradicts the fact that $\sum_{v,i,j}I_{v,i,j}\le 2\eta\AB^{10+\lan}|P'|$ for sufficiently small constant $\eta$. 
	\yonggang{I think this proof is very hard to parse and I should try to refine it.}
	
\end{proof}

If the \gadget{\cs{}}{\epsilon} instance is type 1, then according to~\cref{lem:type1}, $G$ admits a \ssss{s=\AB^{10+\lan}\cdot \bufs}{O(\rho{})}, so $\mathcal{A}$ will output a \ssss{n^{\gamma\epsilon}s}{n^{\gamma\epsilon}\rho{}}; on the other hand, if the \gadget{\cs{}}{\epsilon} instance is type 2, then according to~\cref{lem:type1}, $G$ cannot output a \ssss{n^{\gamma\epsilon}s}{n^{\gamma\epsilon}\rho{}}. Since checking whether an edge set is a \ssss{n^{\gamma\epsilon}s}{n^{\gamma\epsilon}\rho{}} or not is in polynomial time, the output of $\mathcal{A}$ distinguish type 1 from type 2.

\end{proof}

\subsection{Lower Bound when \texorpdfstring{$\apxD>1$}{paxDge1} and \texorpdfstring{$s=\omega(m)$}{cs>1}}\label{subsec:largebicriteria}

In this section, we prove the following lemma.

\begin{lemma}\label{lem:larges}
	Assuming \conj{} (\cref{con:pgc}), there exist two constants $0<\delta,\epsilon<1$ such that \gadget{1+\delta}{\epsilon} cannot be solved in polynomial time.
\end{lemma}
By combining~\cref{lem:larges,lem:usegadget}, we can get the proof of~\cref{thm:main}.
\begin{proof}[Proof of~\cref{thm:main}]
	According to~\cref{lem:larges}, under \conj{}, there exists two constants $0<\delta,\epsilon<1$ such that \gadget{1+\delta}{\epsilon} cannot be solved in polynomial time. According to~\cref{lem:usegadget}, there are no polynomial algorithm solving \oss{m^{\gamma\epsilon}}{m^{\gamma\epsilon}} when input $s=\Omega(m^{1+\gamma(1+\delta-1)})=\Omega(m^{1+\gamma\delta})$ for some constant $\gamma$. 
\end{proof}

\begin{proof}[Proof of~\cref{lem:larges}]
	Given a \labcov{} instance $\I=(A,B,E,\L,(\pi_e)_{e\in E})$ described in~\cref{def:labcov}, \conj{} implies we cannot distinguish the following two cases in polynomial time for some small constant $\epsl{}$ according to~\cref{lem:pgc}. Remember that $N$ is the input bit length of $\I$.
	
	\begin{itemize}
		\item (Completeness:) There is a labeling that covers every edge. 
		
		\item (Soundness:) Any multilabeling of cost at most $N^{\epsl{}}(|A|+|B|)$ covers at most $N^{-\epsl{}}$ fraction of edges.      	
	\end{itemize}

	Assuming there is a polynomial time algorithm $\mathcal{A}$ solving \gadget{1+\delta}{\epsilon} for sufficiently small constant $\epsilon$, we will show how to distinguish the above two cases in polynomial time, which will lead to a contradiction.

	\paragraph{Definition of $G$.} First, we construct a graph $G$ (see~\cref{fig:large}) according to the instance $\I$. Let $\Delta=N^{c_{\Delta}}$ for a sufficiently large constant $c_{\Delta}$. Denote the graph described by~\cref{lem:geograph} with parameter $\Delta$ as $G_{geo}$. Remember that $G_{geo}$ has $\Delta^{\lan{}}$ layers, each with size $\Theta(\Delta^{10})$, where the first layer is $S=\{s_1,s_2,...,s_{K_1}\}$, the last layer is $T=\{t_1,t_2,...,t_{K_2}\}$.
	\begin{enumerate}
		\item $G$ has two sides: $A$ side and $B$ side. Each side contains $K_1$ \emph{batches}, each batch is composed by $|A|=|B|$ \emph{fans}. Let the reversed graph of $G_{geo}$ (where we reverse all the edge directions in this graph) as $G^R_{geo}$. On the $A$ side, each fan is composed of $\L$ copies of graph $G^R_{geo}$; on the $B$ side, each is composed of $\L$ copies of a graph $G_{geo}$. All $G^R_{geo}$ or $G_{geo}$ in the same fan share the same $T$ vertex set. On the $A/B$ side, the $k$-th copied graph in the $i$-th batch, $j$-th fan is denoted by $G^{A/B}_{i,j,k}$, where the node $s_\ell$ in $G_{geo}$ is denoted by $s^{A/B}_{(i,j),\ell}$ for $1\le \ell\le K_1$ (recall that for different $k$, they share the same $s_\ell$), and the node $t_\ell$ in $G_{geo}$ is denoted by $t^{A/B}_{(i,j,k),\ell}$ for $1\le \ell\le K_2$. 
		\item $G$ has edges defined by the \labcov{} instance $\I$ from $A$ side to $B$ side described as follows. Remember that $\I=(A,B,E,\L,(\pi_e)_{e\in E})$ where $A=\{1,2,...,|A|\},B=\{1,2,...,|B|\},\L=\{1,2,...,|\L|\}$. For every $(j,j')\in E,(k,k')\in\pi_{(j,j')}$ and $i,\ell\in[K_1]$, there is an edge from $s^A_{(i,j,k),\ell}$ to $s^B_{(\ell,j',k'),i}$. 
		An intuition of why the edges are defined in this way is that, now if a path is from the $A$ side of the $i$-th batch to $B$ side of the $\ell$-th batch, the path must go through $s^A_{(i,j,k),\ell}$ to $s^B_{(\ell,j',k'),i}$ for some $(j,j')\in E,(k,k')\in\pi_{(j,j')}$.
		\item For every $\ell\in[K_2]$, create a node $p^A_\ell$ and edges $\{(t^A_{(i,j),\ell},p^A_\ell)\mid 1\le i\le K_1,1\le j\le |A|)\}$. For every $i\in[K_1]$, create a node $q^B_i$ and edges $\{(q^B_i,t^B_{(i,j),\ell})\mid 1\le j\le |A|,1\le \ell\le K_2\}$. Remember that in~\cref{lem:geograph}, $P$ is a set of pairs between first and last layer indexes of nodes in $G_{geo}$. For every $\ell\in[K_2],i\in[K_1]$ where $(i,\ell)\not\in P$, we create an edge $(p^A_{\ell},q^B_{i})$. 
		
		We also create a mirror of the above nodes and edges (\cref{fig:large} does not draw) by reversing $A$ and $B$. i.e., for every $\ell\in[K_2]$, create a node $p^B_\ell$ with edges $\{(p^B_\ell,t^B_{(i,j),\ell})\mid 1\le i\le K_1,1\le j\le |A|)\}$. For every $i\in[K_1]$, create a node $q^A_i$ with edges $\{(t^A_{(i,j),\ell},q^A_i)\mid 1\le j\le |A|,1\le \ell\le K_2\}$. For every $\ell\in[K_2],i\in[K_1]$ where $(i,\ell)\not\in P$, we create an edge $(q^A_{i},p^B_{\ell})$. 
	\end{enumerate} 
	
	Remember that the number of edges in $G_{geo}$ is $M=O(\Delta^{11+\lan{}})$. Let $n$ be the number of nodes in $G$. Let $m$ be the number of edges in $G$. $m$ can be calculated as
	\[m={\color{gray}M\cdot K_1|A||\L|}+{\color{blue}O(|\L|^2|A|^2)\cdot K_1^2}+{\color{orange}4K_1K_2|A|}+{\color{red}O(K_1K_2)}=\Theta(M\cdot K_1|A||\L|)\]
	The third and fourth terms are trivially less than the first term. The second term is less than the first term since we set $\Delta=N^{c_\Delta}$ for sufficiently large constant $c_\Delta$. 
	\begin{figure}
		
		\begin{center}
			
			\begin{tikzpicture}[x={(0.5cm,0.3cm)}, y={(-0.5cm,0.3cm)}, z={(0cm,0.5cm)}]
				
				\newcommand{\con}{3}
				
				\newcommand{\drawpiece}[4]{
					\coordinate (A) at (0 * #1 + #2, -0.3 + #3, 0 + #4);
					\coordinate (B) at (0 * #1 + #2, 2.3 + #3, 0 + #4);
					\coordinate (A1) at (3 * #1 + #2, 2 + #3, 1 + #4);
					\coordinate (B1) at (3 * #1 + #2, 0 + #3, 1 + #4);
					\coordinate (A2) at (3 * #1 + #2, 2 + #3, 0 + #4);
					\coordinate (B2) at (3 * #1 + #2, 0 + #3, 0 + #4);
					\coordinate (A3) at (3 * #1 + #2, 2 + #3, -1 + #4);
					\coordinate (B3) at (3 * #1 + #2, 0 + #3, -1 + #4);
					\draw[fill=gray!30] (A) -- (B) -- (A3) -- (B3) -- cycle;
					\draw[fill=gray!30] (A) -- (B) -- (A2) -- (B2) -- cycle;
					\draw[fill=gray!30] (A) -- (B) -- (A1) -- (B1) -- cycle;
					
				}
				\newcommand{\drawstack}[5]{
					\fill[gray] (1.5 * #1 + #2,1 + #3,-2 + #5  + #4) circle (1pt);
					\fill[gray] (1.5 * #1 + #2,1 + #3,-2.5 + #5  + #4) circle (1pt);
					\fill[gray] (1.5 * #1 + #2,1 + #3,-3 + #5  + #4) circle (1pt);
					\drawpiece{#1}{#2}{#3}{#4 + #5}
					\drawpiece{#1}{#2}{#3}{#4}
				}
				\newcommand{\drawbatch}[6]{
					\fill[gray] (1.5 * #1 + #2,7 + #3,-1.5) circle (1pt);
					\fill[gray] (1.5 * #1 + #2,7.5 + #3,-1.5) circle (1pt);
					\fill[gray] (1.5 * #1 + #2,8 + #3,-1.5) circle (1pt);
					\drawstack{#1}{#2}{#3 + #6}{#4}{#5}
					\drawstack{#1}{#2}{#3}{#4}{#5}
				}
				
				\coordinate (qB1) at (15, 1.5,-1.5);
				
				\draw  (qB1) node[circle, fill, inner sep=1pt, label={[label distance=-0.1cm, color = orange]right:\fontsize{10}{0}\selectfont$q^B_{1}$},  color=orange] {};

				\draw[->, >=stealth, color=orange, line width=0.5pt] (qB1) -- (11,-0.3,0);
				\draw[->, >=stealth, color=orange, line width=0.5pt] (qB1) -- (11,2.3,0);
				
				\draw[->, >=stealth, color=orange, line width=0.5pt] (qB1) -- (11,-0.3,-3);
				\draw[->, >=stealth, color=orange, line width=0.5pt] (qB1) -- (11,2.3,-3);
				\coordinate (qB1) at (15, 1.5,-1.5);
				
				\coordinate (qB2) at (15, 1.5 + 3,-1.5);
				\draw  (qB2) node[circle, fill, inner sep=1pt, label={[label distance=-0.1cm, color = orange]above:\fontsize{10}{0}\selectfont$q^B_{2}$},  color=orange] {};

				\draw[->, >=stealth, color=orange, line width=0.5pt] (qB2) -- (11,-0.3 + 3,0);
				\draw[->, >=stealth, color=orange, line width=0.5pt] (qB2) -- (11,2.3 +  3,0);
				
				\draw[->, >=stealth, color=orange, line width=0.5pt] (qB2) -- (11,-0.3 +  3,-3);
				\draw[->, >=stealth, color=orange, line width=0.5pt] (qB2) -- (11,2.3 +  3,-3);
				
				\drawbatch{-1}{11}{0}{0}{-3}{3}
				
				\draw[blue, dashed, dash pattern=on 3pt off 3pt] (3,0,-6) -- (3,0,1) -- (5 + \con,0,1) -- (5 + \con,0,-6);
				\draw[blue, dashed, dash pattern=on 3pt off 3pt] (3,0.5,-6) -- (3,0.5,1) -- (5 + \con,3,1) -- (5 + \con,3,-6);
				\draw[blue, dashed, dash pattern=on 3pt off 3pt] (3,3,-6) -- (3,3,1) -- (5 + \con,0.5,1) -- (5 + \con,0.5,-6);
				\draw[blue, dashed, dash pattern=on 3pt off 3pt] (3,3.5,-6) -- (3,3.5,1) -- (5 + \con,3.5,1) -- (5 + \con,3.5,-6);
				
				\drawbatch{1}{0}{0}{0}{-3}{3}
				
				\node at (1.5, 1, 0.5) {$G^A_{1,1,1}$};
				
				\node at (1.5, 4, 0.5) {$G^A_{2,1,1}$};
				
				\node at (1.5, 1, -2.5) {$G^A_{1,2,1}$};
				
				\node at (9.5, 1, 0.5) {$G^B_{1,1,1}$};
				
				\draw (3,0,1) node[circle, fill, inner sep=1pt, label={[label distance=-0.1cm, color = blue]right:\fontsize{5}{0}\selectfont$s^A_{(1,1,1),1}$},  color=blue] {};
				
				\draw (3,0.5,1) node[circle, fill, inner sep=1pt, label={[label distance=-0.1cm, color = blue]left:\fontsize{5}{0}\selectfont$s^A_{(1,1,1),2}$},  color=blue] {};
				
				\draw (3,0,0) node[circle, fill, inner sep=1pt, label={[label distance=-0.1cm, color = blue]right:\fontsize{5}{0}\selectfont$s^A_{(1,1,2),1}$},  color=blue] {};
				
				\draw (3,0,-2) node[circle, fill, inner sep=1pt, label={[label distance=-0.1cm, color = blue]right:\fontsize{5}{0}\selectfont$s^A_{(1,2,1),1}$},  color=blue] {};
				
				\draw (3,3,1) node[circle, fill, inner sep=1pt, label={[label distance=-0.1cm, color = blue]left:\fontsize{5}{0}\selectfont$s^A_{(2,1,1),1}$},  color=blue] {};
				
				\draw (0,-0.3,0) node[circle, fill, inner sep=1pt, label={[label distance=-0.1cm, color = blue]above:\fontsize{5}{0}\selectfont$t^A_{(1,1),1}$},  color=blue] {};

				\draw (8,0,1) node[circle, fill, inner sep=1pt, label={[label distance=-0.1cm, color = blue]right:\fontsize{5}{0}\selectfont$s^B_{(1,1,1),1}$},  color=blue] {};
				\draw (11,-0.3,0) node[circle, fill, inner sep=1pt, label={[label distance=-0.1cm, color = blue]right:\fontsize{5}{0}\selectfont$t^B_{(1,1,1),1}$},  color=blue] {};
				
				\coordinate (pA1) at (-4,0,-1.5);
				
				\draw  (pA1) node[circle, fill, inner sep=1pt, label={[label distance=-0.1cm, color = orange]left:\fontsize{10}{0}\selectfont$p^A_{1}$},  color=orange] {};

				\draw[->, >=stealth, color=orange, line width=0.5pt] (0,-0.3,0) -- (pA1);
				\draw[->, >=stealth, color=orange, line width=0.5pt] (0,2.7,0) -- (pA1);
				
				\draw[->, >=stealth, color=orange, line width=0.5pt] (0,-0.3,-3) -- (pA1);
				\draw[->, >=stealth, color=orange, line width=0.5pt] (0,2.7,-3) -- (pA1);

				\coordinate (pA2) at (-4,2,-1.5);
				
				\draw  (pA2) node[circle, fill, inner sep=1pt, label={[label distance=-0.1cm, color = orange]below:\fontsize{10}{0}\selectfont$p^A_{2}$},  color=orange] {};

				\draw[->, >=stealth, color=orange, line width=0.5pt] (0,0.2,0) -- (pA2);
				\draw[->, >=stealth, color=orange, line width=0.5pt] (0,3.2,0) -- (pA2);
				
				\draw[->, >=stealth, color=orange, line width=0.5pt] (0,0.2,-3) -- (pA2);
				\draw[->, >=stealth, color=orange, line width=0.5pt] (0,3.2,-3) -- (pA2);
				\fill[gray] (-4,3,-1.5) circle (1pt);
				\fill[gray] (-4,3.5,-1.5) circle (1pt);
				\fill[gray] (-4,4,-1.5) circle (1pt);

				\coordinate (start) at (-4,0,-1.5);
				\coordinate (end) at (15,4.5,-1.5);
				
				\coordinate (cp1) at (-3,-4,-1.5);
				\coordinate (cp2) at (12,-5,-1.5);
				\coordinate (cp6) at (16,0,-1.5);
				
				\draw[->, color=red, line width=0.5pt] plot[smooth, tension=0.5] coordinates{(start) (cp1) (cp2) (cp6) (end)};

			\end{tikzpicture}
		\end{center}
		\caption{$G^A_{i,j,k}$ is one copy of the graph described in~\cref{lem:geograph} in the $i$-th batch (from right to left), $j$-th fan (from top to bottom) and $k$-th piece. For different $k$ and fixed $i,j$, $G^A_{i,j,k}$ shares the same first layer nodes (which are $s^A_{(i,j),\ell}$), but have different last layer nodes (which are $t^A_{(i,j,k),\ell}$). By changing $A$ to $B$ we get another side of the graph. Each dashed rectangle specifies the graph similar to the middle part in~\cref{fig:minrepgraph} according to the input \labcov{} instance $\I$. Fix $j,k$, for any $i,\ell$, we have $t^A_{(i,j,k),\ell}$ and $t^B_{(\ell,j,k),i}$ in the same dashed rectangle. }\label{fig:large}
	\end{figure}

	\paragraph{Solve \labcov{} using \gadget{1+\delta}{\epsilon}} Now we run the \gadget{1+\delta}{\epsilon} algorithm $\mathcal{A}$ with the following inputs. We need to verify that the inputs satisfy the requirements specified by~\cref{def:gadget}.
	\begin{enumerate}
		\item Graph $G$ with $m$ edges. The diameter of $G$ is $d=2\Delta^{\lan{}}$, which is polynomial in $m$.
		\item Two sets $L=\{t^A_{(i,j),k}\mid 1\le i\le K_1,1\le j\le |A|,1\le k\le |\L|\}, R=\{t^B_{(i,j),k}\mid  1\le i\le K_1,1\le j\le |A|,1\le k\le |\L|\}$. The size of each set is $K_1|A||\L|$ which is polynomial in $m$.
		\item A set of reachable vertex pairs $P'\subseteq L\times R$ defined as follows. Recall that in the \labcov{} instance, we have $A=\{1,...,|A|\},B=\{1,...,|B|\}$, and $P$ is the set of pairs defined in~\cref{lem:geograph}. For every $i\in[K_1]$ let $I_{i}$ contain all indexes $x\in[K_2]$ such that $(i,x)\in P$. Let $I_{i}[\ell]$ be the $\ell$-th element in $I_{i}$. 
		Now we define our $P'$.
		\[P'=\left\{\left(t^A_{(i,j),I_{i'}[\ell]},t^B_{(i',j'),I_{i}[\ell]}\right)\mid 1\le i,i'\le K_2,(j,j')\in E,1\le\ell\le\min(|I_{i'}|,|I_{i}|)\right\}\]
		We need to argue that $P'$ only contains reachable pairs. Notice that $t^A_{(i,j),I_{i'}[\ell]}$ can reach $s^A_{(i,j,k),i'}$ for any $k$ since $(i',I_{i}[\ell])\in P'$ (recall that $G^A_{(i,j,k)}$ is the reversed graph $G^R_{geo}$); for the same reason $t^B_{(i',j'),I_{i}[\ell]}$ can be reached from $s^B_{(i',j',k'),i}$ for any $k'$. Now we only need to argue that there exists $k,k'$ such that $t^A_{(i,j,k),i'}$ can reach $s^B_{(i',j',k'),i}$. We can take the $(k,k')\in \pi_{(j,j')}$ where $\pi_{(j,j')}$ must be non-empty since $(j,j')\in E$.
	\end{enumerate}
	Remember that the output of $\mathcal{A}$ will distinguish the following two types of instances. 
	\begin{description}
		\item[Type 1.] There exists a \ss{} $E'$ of $G$ with size $O(m^{1+\delta})$ such that all reachable pairs $(u,v)\in L\times R$ have distance $O(1)$ after adding $E'$ to $G$. %
		\item[Type 2.] By adding any \ss{} with size $O(m^{1+\delta+\epsilon})$, at most $o(1)$ fraction of pairs in $P$ have distance at most $d/3$.  
	\end{description}
	
	Next we show the output can already distinguish the \labcov{} instance (completeness) from (soundness).
	
	\paragraph{(Completeness) implies Type 1.} %
	Suppose the \mlab{} covering all edges in (completeness) is $\psi$. Recall that $P'$ is the pair set described in~\cref{lem:geograph}. We create a \ss{} $E'$ defined as
	\begin{equation*}
		\begin{aligned}
			E' &= \{(t^A_{(i,j),\ell},s^A_{(i,j,k),\ell'}) \mid i\in[K_1],j\in[|A|],k\in\psi(j),(\ell',\ell)\in P'\} \\
			&\cup \{(s^B_{(i,j,k),\ell},t^B_{(i,j),\ell'}) \mid i\in[K_1],j\in[|B|],k\in\psi(j),(\ell,\ell')\in P'\}
		\end{aligned}
	\end{equation*}
	We have $|E'|=O(K_1|A|\Delta^{12})$. Remember that $m=\Theta(M\cdot K_1|A||\L|)$, we have $|E'|=\Theta(m^{1+\delta})$ for some constant $\delta$. Now we prove that all reachable pairs $(t^A_{(i,j),\ell},t^B_{(i',j'),\ell'})$ has distance $O(1)$ after adding $E'$. If $(i',\ell)\not\in P$ or $(i,\ell')\not\in P$, then $t^A_{(i,j),\ell}$ can reach $t^B_{(i',j'),\ell'}$ in $3$ steps. Thus, we only consider the case when $(i',\ell),(i,\ell')\in P$. If $(j,j')\not\in E$, then $(t^A_{(i,j),\ell},t^B_{(i',j'),\ell'})$ is not reachable. Suppose $(j,j')\in E$, let $k\in\psi(j),k'\in\psi(j')$, since $(j,j)$ is covered by $\psi$, there is a blue edge $(s^A_{(i,j,k),i'},s^B_{(i',j',k'),i})$. Besides, we have $(t^A_{(i,j),\ell},s^A_{(i,j,k),i'}),(s^B_{(i',j',k'),i},t^B_{(i',j'),\ell'})\in E'$ due to the fact that $(i',\ell),(i,\ell')\in P$. Thus, the distance between $(t^A_{(i,j),\ell},t^B_{(i',j'),\ell'})$ is $3$. 
	
	\paragraph{(Soundness) implies Type 2.} We will prove it by contradiction. Suppose there exists a \ss{} $E'$ with size $O(m^{1+\delta+\epsilon})=O(K_1|A|\Delta^{12}\cdot n^\epsilon)$, after adding which, not $o(1)$ fraction of pairs in $P$ have distance at most $d/3=(2/3)\Delta^{\lan{}}$. We first turn $E'$ into another \ss{} $E''$ where each shortcut in $E''$ is totally in a copy of graph $G_{geo}$ in the following way. Suppose $(u,v)\in E'$ where $u,v$ are not in the same copy of $G_{geo}$, a path from $u$ to $v$ must cross at most two copies of $G_{geo}$, one on thie $A$ size and another on the $B$ size. We split this path into the two copies, and creat two edges connected two end points of both path
	. $E''$ will have size twice of $E'$, and the distance after adding $E''$ will be at most $(2/3)\Delta^{\lan{}}+2$.
	
	Recall that for every $i\in[K_1]$, we defined $I_{i}$ as all indexes $x\in[K_2]$ such that $(i,x)\in P$. We create \mlab{} $\psi_{i,i',\ell}$ in the following way: for every $j\in A,j'\in B$, we let 
	\[\psi_{i,i',\ell}(j)=\left\{k\in\L\mid\distt{G+E''}{t^A_{(i,j),I_{i'}[\ell]},s^A_{(i,j,k),i'}}<\Delta^{\lan{}}\right\}\]
	\[\psi_{i,i',\ell}(j)=\left\{k\in\L\mid\distt{G+E''}{s^B_{(i',j,k),i},t^B_{(i',j),I_{i}[\ell]}}<\Delta^{\lan{}}\right\}\]
	
	One importance observation is, for every $\psi_{i,i',\ell}$ and $k\in\L$, there is a unique path from $t^A_{(i,j),I_{i'}[\ell]}$ to $t^A_{(i,j,k),i'}$ or from $s^B_{(i',j',k),I_i[\ell]}$ to $t^B_{(i',j'),\ell}$; any two of these unique paths shares at most one edge. The reason is $(i,I_i[\ell])\in P$ (see~\cref{lem:geograph}). As a result, each edge in $E''$ can make at most one pair of vertices $(t^A_{(i,j),I_{i'}[\ell]},s^A_{(i,j,k),i'})$ or $(s^B_{(i',j,k),i},t^B_{(i',j),I_{i}[\ell]})$ to have distance less that the original distance $\Delta^{\lan{}}$. Therefore,
	\[\sum_{i,i'\in[K_1],\ell\le \min(|I_{i'}|,|I_{i}|)}\left(|\psi_{i,i',\ell}(j)|+|\psi_{i,i',\ell}(j)|\right)\le |E''|=O(K_1|A|\Delta^{12}\cdot n^\epsilon)\]
	
	According to~\cref{lem:geograph}, we know $\min(|I_{i'}|,|I_{i}|)=\Omega(\Delta^2)$, which means the number of \mlab{} $\psi_{i,i',\ell}$ is $T=\Theta(K_1^2\Delta^2)=\Theta(K_1\Delta^{12})$. Thus, only $o(K_1\Delta^{12})$ of them will have size at least $N^{\epsl{}}(|A|+|B|)$ for sufficiently small constant $\epsilon$, where $N$ is the input size of $\I$ which is polynomial in $n$. Let $C(\psi_{i,i',\ell})$ be the number of edges covered by $\psi_{i,i',\ell}$ for instance $\I$. We have 
	\[\sum_{i,i',\ell}C(\psi_{i,i',\ell})\le o(K_1\Delta^{12})\cdot |E|+T\cdot o(|E|)\le o(T|E|)\]
	
	One can see that if $\psi_{i,i',\ell}$ does not cover an edge $(j,j')$, then the distance from $t^A_{(i,j),I_{i'}[\ell]}$ to $t^B_{(i',j),I_{i}[\ell]}$ is at least $\Delta^{\lan{}}>(2/3)n^{c_D}$, where $(t^A_{(i,j),I_{i'}[\ell]},t^B_{(i',j),I_{i}[\ell]})\in P'$. That is because the only path from $t^A_{(i,j),I_{i'}[\ell]}$ to $t^B_{(i',j),I_{i}[\ell]}$ must be the concatenation of paths between $t^A_{(i,j),I_{i'}[\ell]},s^A_{(i,j,k),i'}$ and between $s^B_{(i',j,k'),i},t^B_{(i',j),I_{i}[\ell]}$ for some $k,k'\in\L$, where at least one of them has length $\Delta^{\lan{}}$. Therefore, we have at least $(1-o(1))T|E|=(1-o(1))|P'|$ pairs in $P'$ that have distance at least $\Delta^{\lan{}}$, which is a contradiction.%
	
\yonggang{the notations in this proof are disasters, I need to refine it}
\end{proof}

\newpage

\appendix

\section{Upper Bounds}\label{sec:upperbound}
In this section, we will prove \Cref{thm:upperbound}. We restate it here.
\UpperBoundThm*

We use the algorithm idea of a previous spanner approximation algorithm~\cite{BermanBMRY13}. We explain our algorithm in the language of the shortcut problem. 

\subsection{An overview}
In this section, we give an overview of \Cref{thm:upperbound}. Our goal is to find an \ssss{\apxS s}{\apxD d} in a given graph $G$ and integers $d$ and $s$ such that $G$ admits an \ssss{s}{d}. Our algorithm incorporates many ideas from the previous algorithm of \cite{BermanBMRY13}. Below, we present an overview and discuss how our techniques differ from theirs.

Denote by $\pset \subseteq V(G) \times V(G)$ the (ordered) set of reachable pairs in $G$. 
The shortcut edges are chosen from $E^T \setminus E$ where $E^T$ are the edges in the transitive closure of $G$. We say that $F \subseteq (E^T \setminus E)$ $d'$-settles the pair $(u,v)$ if the distance between them in $E \cup F$ is at most $d'$.  
Therefore, to get $(\alpha_D, \alpha_S)$-approximation, it suffices to compute a subset $F \subseteq E^T \setminus E$ so that every pair in $\pset$ is $(\alpha_D \cdot d)$-settled by $F$.  
We will handle two types of pairs in $\pset$ separately. 
We say that  $(u,v) \in \pset$ is \textit{thick} if the total number of vertices $w$ that are reachable from $u$ and can reach $v$ is at least $\beta$; otherwise, the pair $(u,v)$ is \textit{thin}. Intuitively, a thick pair is ``very well connected''. 
Divide $\pset$ into $\pset = \pset_{thick} \cup \pset_{thin}$.

The high-level ideas in dealing with these cases roughly follow~\cite{BermanBMRY13}. At a high-level, a generic approach to turn a (single-criteria) approximation algorithm into a bicriteria one is to prove a certain ``scaling advantage'' result, e.g., proving that a $(1,\alpha_S)$-approximation implies $(\alpha_D, O(\alpha_S/\alpha_D))$-approximation (so when $\alpha_D=1$, we achieve roughly the same result).

\paragraph{Settling thick pairs.} Sample $V' \subseteq V(G): |V'| =  \tilde{\Theta}(n/\beta)$ uniformly. For each sampled vertex $v \in V'$, add edges from $v$ to all $v'$ such that $v'$ is reachable from $v$ or can reach $v$. 
Denote by $F_1$ the set of edges that are added by this process, so $|F_1| = \tilde{\Theta}(n^2/\beta)$. It is easy to see (via a hitting set argument) that, with constant probability, this set of shortcut edges reduces the diameter to two and the total number of shortcut edges is at most $\tilde O(n^2/\beta)$.

This simple sampling strategy has already been explored by \cite{BermanBMRY13} in the context of spanner construction.\footnote{In the context of spanners, instead of connecting a sampled $v$ with in- and out-edges with vertices in $V$, they consider in- and out-arborescence rooted at $v$. In our context (of TC-spanners and shortcuts), these arborescences are exactly the edges we described above.} 
To incorporate the scaling advantage, we adapt the technique used in a recent paper demonstrating how to construct an \ssss{n}{\tOh(n^{1/3})} for any graph~\cite{KoganP22}. This adaptation reduces the size of the shortcut set to $\tilde{O}\left(n^2/\beta(\apxD d)^2\right)$ while still ensuring that the endpoints of all thick edges   have a distance of at most $\apxD d$.

\paragraph{Settling the thin pairs.} Thin pairs are handled via LP-rounding techniques. 
We say that a set $A\subseteq E^T \backslash E$ is $d'$-\textit{critical} for thin pair $(u,v) \in \pset_{thin}$ if in $E^T \setminus A$, the distance from $u$ to $v$ is larger than $d$; in other words, not taking any edge from $A$ would make the solution infeasible. 
Denote by $\aset_{d'}$ the set of all minimal and $d'$-critical sets.  
Our definition of critical set is analogous to the notion of antispanners in~\cite{BermanBMRY13}. The following claim is intuitive: It asserts an alternative characterization of shortcuts as a set of edges that hit every critical set.

\begin{claim}[Adapted from \cite{BermanBMRY13}] \label{clm:shortcut-hittingset}
    An edge set $E'$ is a $d'$-shortcut set for all thin pairs if and only if $E' \cap A \neq \emptyset$  for all $A\in\mathcal{A}_{d'}$. 
\end{claim}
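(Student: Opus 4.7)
The plan is to prove both directions by contrapositive/contradiction, exploiting the fact that (by the standing convention in this section) $E' \subseteq E^T \setminus E$ and $A \subseteq E^T \setminus E$, so that they live in the same ``universe'' of potential shortcut edges.

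For the forward direction, assume $E'$ is a $d'$-shortcut set for every thin pair and fix any $A \in \aset_{d'}$. By definition, $A$ is $d'$-critical for some thin pair $(u,v) \in \pset_{thin}$, i.e.\ $\dist_{E^T \setminus A}(u,v) > d'$. Suppose for contradiction that $E' \cap A = \emptyset$. Since $E \subseteq E^T$ is disjoint from $A$ (because $A \subseteq E^T \setminus E$), and $E' \subseteq E^T$ is disjoint from $A$ by assumption, we have $E \cup E' \subseteq E^T \setminus A$. Distances only decrease when we add edges, so $\dist_{E \cup E'}(u,v) \geq \dist_{E^T \setminus A}(u,v) > d'$, contradicting the assumption that $E'$ is a $d'$-shortcut set for $(u,v)$.

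For the backward direction, assume $E' \cap A \neq \emptyset$ for every $A \in \aset_{d'}$, and suppose for contradiction that there is a thin pair $(u,v)$ with $\dist_{E \cup E'}(u,v) > d'$. Define
\[
A_0 := (E^T \setminus E) \setminus E'.
\]
A quick set-theoretic check gives $E^T \setminus A_0 = E \cup E'$ (using $E, E' \subseteq E^T$ and $E' \subseteq E^T \setminus E$), so $\dist_{E^T \setminus A_0}(u,v) > d'$, i.e.\ $A_0$ is $d'$-critical for $(u,v)$. Now shrink $A_0$ greedily: repeatedly remove any edge whose deletion still leaves the resulting set $d'$-critical for $(u,v)$; this terminates at a minimal $d'$-critical set $A \subseteq A_0$ for $(u,v)$, so $A \in \aset_{d'}$. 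By construction $A \subseteq A_0$ is disjoint from $E'$, contradicting the hypothesis that $E'$ hits every element of $\aset_{d'}$.

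The argument is essentially bookkeeping with set inclusions, so I do not anticipate any serious obstacle. The only subtlety worth flagging is that $\aset_{d'}$ must be taken as the union of minimal $d'$-critical sets over \emph{all} thin pairs (not a single pair); both directions use this implicitly. I would also double-check that the definition of ``$d'$-critical'' in the excerpt is correctly interpreted as $\dist_{E^T \setminus A}(u,v) > d'$ (the ``$d$'' in the excerpt appears to be a typo for ``$d'$''), since otherwise the statement of the claim would not parse consistently.
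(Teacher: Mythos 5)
Your proposal is correct and follows essentially the same argument as the paper: the forward direction is identical (disjointness gives $E\cup E'\subseteq E^T\setminus A$, so the critical pair stays unsettled), and the backward direction uses the same complement set $E^T\setminus(E\cup E')$ as the witnessing critical set. Your extra greedy-shrinking step to pass to a \emph{minimal} critical set is a small refinement the paper's sketch glosses over (its witness need not be minimal as required by the definition of $\mathcal{A}_{d'}$), and your reading of the ``$d$'' in the overview's definition as a typo for ``$d'$'' matches the formal definition in the appendix.
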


The above claim allows us to write the following LP constraints for finding a $(d,s)$-shortcut.

\begin{align*}
	\text{(LP)}: \qquad\sum_{e\in E^T\backslash E}x_e &\le s \\
	\sum_{e\in A}x_e &\ge 1\qquad \forall A\in \mathcal{A}_d\\
	x_e &\geq 0\qquad \forall e\in E^T \setminus E
\end{align*}

Each variable $x_e$ indicate whether edge $e \in E^T \setminus E$ is included into the shortcut solution. Each constraint $\sum_{e \in A} x_e \geq 1$ asserts that critical edge set $A$ must be ``hit'' by the solution. There can be exponentially many constraints, but an efficient separation oracle exists, as we sketch below.

The high-level idea of~\cite{BermanBMRY13} (when translated into our setting) is a ``round-or-cut'' procedure that, from a feasible solution $\textbf{x} \in [0,1]^{E^T \setminus E}$, randomly computes $F_2 \subseteq E^T \setminus E$ of  size $O(\beta s)$ that either (i) successfully $d$-settles all thin pairs in $\pset_{thin}$ or (ii) can be used to find a critical set $A \in \aset_{d}$ such that $\sum_{e \in A} x_e < 1$. 
If Case (i) happens, we have successfully computed the solution $F_2$. Otherwise, when Case (ii) happens, we have a separation oracle. 
In order to incorporate the scaling advantage into this algorithm, we prove a decomposition lemma (\cref{lem:findantispanner}) which asserts that each $(\alpha_D \cdot d)$-critical set $A \in\mathcal{A}_{\apxD d}$ can be partitioned into $\apxD$  sets in $\mathcal{A}_{d}$, and such decomposition can be computed efficiently.
This decomposition lemma allows us to generalize~\cite{BermanBMRY13} to the bicriteria setting, leading to the set of size $O(\beta s/\alpha_D)$ that $(\alpha_D d)$-settles thin pairs.

\subsection{Preliminaries}

\paragraph{Large diameter case:} If we aim for a relatively large diameter, i.e., $d \cdot \alpha_D \geq n^{0.34}$, we can simply use the known result. 

\begin{theorem}[\cite{KoganP22}]
There is an efficient algorithm that, given input graph $G$, computes a $(d,s)$-shortcut for $d = \tilde{O}(n^{1/3})$ and $s = \tilde{O}(n)$.    
\end{theorem}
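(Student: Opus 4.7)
The plan is to invoke the Kogan--Parter construction, which builds the shortcut set by covering long reachable pairs with a small family of ``shortcut trees'' mounted on long directed paths. First I would pass to the DAG condensation of $G$: contract each strongly connected component and spend $O(n)$ edges on an arbitrary Hamiltonian cycle of each SCC. This reduces to the case where $G$ is a DAG on at most $n$ vertices at an additive $O(n)$ cost in the final shortcut budget, and it lets me use a topological ordering freely.

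Fix the target parameter $h = \tilde\Theta(n^{1/3})$. The basic primitive is that any directed path $P = v_0 \to v_1 \to \cdots \to v_\ell$ present in the current graph admits an $O(\ell)$-edge ``balanced binary shortcut tree'' that reduces the intra-$P$ diameter from $\ell$ to $O(\log \ell)$; so if one can cover the set of all long-distance reachable pairs by $\tilde O(n^{2/3})$ directed paths of length $\tilde\Theta(n^{1/3})$, the total budget is $\tilde O(n)$ and the diameter drops to $\tilde O(h)$ on each covered pair. Producing such a path family is the heart of the argument: sample $S \subseteq V$ with probability $p = \tilde\Theta(n^{-1/3})$, so $|S| = \tilde O(n^{2/3})$ w.h.p.\ and every directed path of length $\geq h$ in $G$ hits $S$ w.h.p. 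Then, for each $v \in S$, extract a single longest ``witness'' shortest path through $v$ (one in-path of length $\leq h$ and one out-path of length $\leq h$) and install a shortcut tree on the concatenation.

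After this step, every pair $(u,w)$ whose shortest $u$-$w$ path in $G$ has length $\geq h$ is routed through some hub $v \in S$, and inside $v$'s shortcut tree reaches the midpoint in $O(\log n)$ hops. The remaining pairs have shortest-path length $< h$ already, so no action is needed for them. Running this routine once gives a $(\tilde O(h),\tilde O(n))$-shortcut with the required parameters. Everything here is implementable in polynomial time: sampling, BFS/DFS in the DAG for the in- and out-reachability trees, and laying down balanced binary trees are all straightforward.

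The main obstacle is the budget: naively adding, for each hub $v \in S$, the full set of in/out edges to vertices reachable within distance $h$ costs $\tilde\Theta(n^{5/3})$, and even per-hub ``tree'' shortcuts up to depth $h$ cost $\tilde\Theta(n^{4/3})$. To get down to $\tilde O(n)$, one must argue (as Kogan and Parter do) that a \emph{single} witness path per hub suffices to cover all long-distance pairs; this uses the DAG property together with the observation that two sampled hubs' witness paths can be merged when they overlap, so the final family consists of $\tilde O(n^{2/3})$ internally edge-disjoint paths of length $\tilde\Theta(n^{1/3})$. Verifying the covering claim -- that every long shortest path in $G$ is dominated, up to a $\operatorname{polylog}(n)$ factor, by a concatenation of $O(1)$ witness paths from this family -- is the nontrivial technical core, and is handled by the hitting-set property of $S$ applied along consecutive length-$h$ segments of any offending path.
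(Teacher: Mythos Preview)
The paper does not prove this theorem; it is simply cited as a black-box result from \cite{KoganP22}. So there is no ``paper's own proof'' to compare against, and your proposal must stand or fall on its own.

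Your sketch has a genuine gap at precisely the point you flag as the ``nontrivial technical core.'' The assertion that a \emph{single} witness path per sampled hub $v$---one longest in-path concatenated with one longest out-path---suffices to cover all long-distance reachable pairs through $v$ is not correct as stated, and the suggested fix (merging overlapping witness paths into an edge-disjoint family, then invoking the hitting-set property on length-$h$ segments) does not go through. A hub $v$ in a DAG can sit at the intersection of many topologically incomparable in-paths and out-paths; routing a pair $(u,w)$ through $v$ requires that both $u$ lie on $v$'s chosen in-path and $w$ lie on $v$'s chosen out-path (up to $O(\log n)$ hops via the tree shortcut), and there is no reason this holds for more than a single pair per hub. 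The ``merge overlapping witness paths'' idea does not rescue this: the witness paths of different hubs need not overlap at all, and even if they do, edge-disjointness alone gives no covering guarantee.

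The actual Kogan--Parter construction sidesteps this difficulty by a completely different mechanism: a \emph{global} decomposition of the DAG into $O(n/h)$ directed paths plus $O(h)$ antichains (the chain--antichain decomposition, as in \cite{GrandoniILPU21}, which this paper itself invokes in Section~\ref{subsec:thickedges}). One then samples both vertices and paths from this fixed decomposition, and for each (vertex, path) pair adds at most two shortcut edges to the first and last reachable points on that path. The antichain layers contribute only $O(h)$ hops, and the sampled paths---because they come from a fixed global cover rather than being chosen locally per hub---are guaranteed to be hit by any long shortest path. This is the missing ingredient in your outline; without it, the budget analysis collapses at exactly the $\tilde\Theta(n^{4/3})$ barrier you identified.
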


Using the above theorem, if $\apxD{}d=\Omega(n^{0.34})$, we can easily achieve $(\alpha_D, \alpha_S)$-approximation for all $\alpha_S = \tilde{O}(1)$. 
Therefore, we make the following assumption throughout this section. 

\begin{remark}\label{rem:assumption}
We can assume w.l.o.g. that $\alpha_D d = O(n^{0.34})$. 
\end{remark}

\paragraph{Reduction to DAGs:} We argue that DAGs, in some sense, capture hard instances for our problems. This will allow us to focus on DAGs in the subsequent sections.
The formal statement is encapsulated in the following lemma. 

\begin{lemma}
If there exists an efficient $(\alpha_D,\alpha_S)$ approximation algorithm for DAGs, then there exists an efficient $(3\alpha_D, 3\alpha_S)$-approximation algorithm for all directed graphs.  
\end{lemma}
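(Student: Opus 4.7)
The plan is to reduce general directed graphs to DAGs by condensing strongly connected components (SCCs), invoking the hypothesized DAG approximation on the condensation, and lifting the result back at the cost of a constant factor in both parameters. Given input $G$ along with $d, s$ such that $G$ admits a $(d,s)$-shortcut, I would first compute the SCCs $C_1, \ldots, C_k$ of $G$, fix a representative $r_i \in C_i$ per component, and form the condensation DAG $G'$ whose vertices are the $C_i$'s.

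First, I would verify that $G'$ admits a $(d,s)$-shortcut. Given any $(d,s)$-shortcut $E^*$ of $G$, project each edge $(u,v) \in E^*$ to $(C(u), C(v))$, obtaining a set of at most $s$ edges in $G'$; any $u$-to-$v$ walk in $G + E^*$ of length at most $d$ projects to a walk of length at most $d$ in the augmented $G'$ (within-SCC edges become self-loops), so the projected set is a $(d,s)$-shortcut. Then I would run the hypothesized efficient $(\apxD, \apxS)$-approximation on $G'$ with parameters $(d,s)$ to obtain a shortcut $E'$ of $G'$ with $|E'| \leq s \apxS$ and diameter at most $d\apxD$.

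Next, I would lift $E'$ back to a shortcut $F$ of $G$ as follows. Include hub edges $(v, r_i)$ and $(r_i, v)$ for every $v \in C_i$ and every $i$; this contributes at most $2n$ edges and realises a within-SCC shortcut of diameter $2$. In addition, for each $(C_a, C_b) \in E'$ include the representative edge $(r_a, r_b)$, which lies in the transitive closure of $G$ since $C_a$ reaches $C_b$; this contributes at most $s \apxS$ edges. Thus $|F| \leq 2n + s \apxS \leq 3 s \apxS$, using $s \geq n$ and $\apxS \geq 1$. For the diameter bound, given any reachable pair $u \in C_i, v \in C_j$, take the shortest walk $C_i = C_{i_0} \to \cdots \to C_{i_k} = C_j$ in $G' + E'$ with $k \leq d\apxD$ and simulate it in $G + F$: start with $u \to r_i$ (one hub step), then for each DAG transition take one step if it is a lifted shortcut edge in $E'$ (directly $r_{i_t} \to r_{i_{t+1}}$) or three steps $r_{i_t} \to a \to b \to r_{i_{t+1}}$ if it is an original cross-SCC edge $(a,b) \in E(G)$ (via hub, original edge, hub), and finish with $r_j \to v$ (one hub step). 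This yields total length at most $3k + 2 \leq 3 d\apxD + O(1)$, which is absorbed into $3 d\apxD$ under the trivial assumption $d\apxD \geq 2$.

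The main obstacle is the diameter accounting in the lifting step: a generic cross-SCC edge in $G'$ corresponds to an edge of $G$ whose endpoints are specific, non-representative vertices, so simulating it forces a detour through the representatives on both sides and costs up to three steps per DAG edge. This is exactly the source of the multiplicative factor $3$ in the diameter bound; tightening the constant would require within-SCC shortcuts that support arbitrary entry and exit vertices without a hub-based detour, which is not provided for free by SCC condensation.
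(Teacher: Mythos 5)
Your proposal is correct and follows essentially the same route as the paper's proof: contract SCCs to form the condensation DAG, run the assumed DAG algorithm on it, add a two-way hub star at a representative (center) inside each SCC (at most $2n \le 2s$ edges), lift each returned shortcut edge to an edge between representatives, and simulate each condensation step by at most three steps in $G$. The small additive $+2$ you flag in the diameter accounting is implicitly present in the paper's "easy to verify" step as well and is harmless for how the lemma is used.
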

\begin{proof}
Assume that we are given access to the algorithm $\aset(G,s,d)$ that produces $(\alpha_D,\alpha_S)$ approximation algorithm for the DAG case. 
Let $G$ be an input digraph, together with the input parameters $(d,s)$.  
Compute a collection ${\mathcal S}$ of strongly connected components (SCC) of $G$, and let $G'$ be the DAG obtained by contracting each SCC into a single node. Invoke the algorithm $\aset(G',s,d)$. Let $E' \subseteq E(G')$ be the shortcut edges so that $|E'| \leq \alpha_S \cdot s$ and the diameter of $G' \cup E'$ is at most $\alpha_D \cdot d$. 
These edges would be responsible for connecting the pairs whose endpoints are in distinct SCCs. 

For each SCC $C \in {\mathcal S}$, we pick an arbitrary ``center'' $u_C \in V(C)$ and connect it to every other vertex in $V(C)$. These edges are called $E''_C$. Define $E'' = \bigcup_{C \in {\mathcal S}} E''_C$. 
The final shortcut $F$ can be constructed by combining these two sets $E'$ and $E''$: Edges in $E''$ can be added into $F$ directly. For each edge that connects $C$ to $C'$ in $E'$, we create the corresponding edge $(u_C, u_C')$ connecting the centers. Notice that $|F| \leq |E'| + |E''| \leq \alpha_S \cdot s + 2n \leq 3\alpha_S \cdot s$ (here we used the assumption that $s \geq n$). Moreover, it is easy to verify that, for each reachable pair $(v,w)$ in $G$ where $v \in C$ and $w \in C'$, there is a path from $v$ to $w$ of length at most $3 \alpha_D \cdot d$.     
\end{proof}

\subsection{Framework} \label{sec:ub-overview}

Suppose $G=(V,E)$ is a directed acyclic graph and $G^T=(V,E^T)$ be its transitive closure, i.e., $(u,v)\in E^T$ if $u$ has a directed path with length at least $1$ to $v$ in $G$. Since $G$ is acyclic, $G^T$ contains no self loop. 

\begin{definition}[Local graphs]\label{def:localgraph}
For a pair $u,v \in V(G)$, we define the local graph $G^{u,v}$. 
 Let $G^{u,v}=(V^{u,v},E^{u,v})$ be the subgraph of $G^T$ induced by the vertices that can reach $v$ and can be reached from $u$ (i.e., these vertices lie on at least one path from $u$ to $v$). 
\end{definition}
\begin{definition}[Thick and thin pairs]\label{def:thickthinedge}
 Let $u,v \in V(G)$. 
 If $|V^{u,v}|\ge \beta$ ($1\le \beta\le n$ will be determined later), the corresponding edge $(s,t)$ is said to be $\beta$-thick, and otherwise, it is $\beta$-thin. When $\beta$ is clear from context, we will simply write thick and thin pairs respectively. 
\end{definition}

Denote by $\pset$ the set of pairs of vertices that are reachable in $G$. We can partition $\pset$ into $\pset_{thick} \cup \pset_{thin}$. 

\begin{definition}
	Let $d' \in {\mathbb N}$. A set $E'\subseteq E^T \setminus E$ is said to $d'$-settle a pair $(u,v)\in E$ if $(V,E \cup E')$ contains a path of length at most $d'$ from $u$ to $v$. 
\end{definition}

Our algorithm will find two edge sets $F_1,F_2\subseteq E^T \setminus E$ such that the set $F_1$ is responsbile for $(\apxD{} d)$-settling all thick pairs, while $F_2$ will $(\apxD{} d)$-settle all thin pairs. The final solution be $F_1\cup F_2$ (Notice that $F_1\cup F_2$ $(\apxD{} d)$-settles all the pairs.)  These are encapsulated in the following two lemmas: 

\begin{restatable}[thick pairs]{lemma}{settlethick}\label{lem:settlethickedges}
We can efficiently compute $F_1$ such that $|F_1| \le O\left(\frac{n^2\log^2 n}{\beta\apxD{}^2 d^2}+
n \log n
\right)$ and it $(\apxD{} d)$-settles all thick pairs w.h.p. 
\end{restatable}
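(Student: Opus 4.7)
The plan is to combine uniform hub sampling with a per-hub bounded-diameter shortcut that adapts the Kogan-Parter construction. The output $F_1$ will be the union of in- and out-shortcuts anchored at a random set of sampled hubs, and each thick pair $(u,w)$ will be $(\apxD d)$-settled by routing through any hub that lies in its witness set $V^{u,w}$.

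First, include each vertex of $V$ in $V'$ independently with probability $p = C \log n / \beta$ for a sufficiently large constant $C$. A standard Chernoff bound gives $|V'| = O((n/\beta)\log n)$ whp. For any thick pair $(u,w)$ we have $|V^{u,w}| \ge \beta$, hence
\[
\Pr\!\left[V' \cap V^{u,w} = \emptyset\right] \;\le\; (1-p)^{\beta} \;\le\; n^{-\Omega(C)}.
\]
Taking $C$ large and union-bounding over the at most $n^2$ reachable pairs, whp every thick pair contains some hub $v \in V'$ in its witness set.

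Second, for each hub $v \in V'$ I will construct $F^{\rm in}_v, F^{\rm out}_v \subseteq E^T \setminus E$ of combined size $\tOh(n/(\apxD d)^2)$ so that in $E \cup F^{\rm in}_v$ every vertex reaching $v$ in $G$ has $\dist$-at-most $\apxD d/2$ to $v$, and in $E \cup F^{\rm out}_v$ every vertex reachable from $v$ in $G$ has $\dist$-at-most $\apxD d/2$ from $v$. The plan is to adapt the recursive sampling construction of~\cite{KoganP22}: since only one distinguished endpoint needs to be reached (rather than every pair), the top level of their recursion collapses, transforming their all-pairs bound of $\tOh(n^2/D^3)$ into a single-source bound of $\tOh(n/D^2)$ where $D = \apxD d/2$.

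Third, I set $F_1 := \bigcup_{v \in V'}(F^{\rm in}_v \cup F^{\rm out}_v)$. Correctness is immediate: for each thick pair $(u,w)$ any guaranteed hub $v \in V' \cap V^{u,w}$ gives
\[
\dist_{E \cup F_1}(u,w) \;\le\; \dist_{E \cup F^{\rm in}_v}(u,v) + \dist_{E \cup F^{\rm out}_v}(v,w) \;\le\; \apxD d.
\]
For the size bound, summing the per-hub cost yields
\[
|F_1| \;\le\; |V'| \cdot \tOh\!\left(\tfrac{n}{(\apxD d)^2}\right) + O(|V'|) \;=\; O\!\left(\tfrac{n^2 \log^2 n}{\beta \apxD^2 d^2} + n \log n\right),
\]
where the additive $n \log n$ absorbs $O(|V'|) \le O(n \log n)$ (used when $\beta$ is small and the per-hub term is dominated by the bookkeeping cost of having $|V'|$ hubs).

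The main obstacle is executing step two. The Kogan-Parter theorem is originally stated in the all-pair regime with size $\tOh(n^2/D^3)$, and upgrading to a single-source-at-$v$ guarantee of $\tOh(n/D^2)$ requires carefully revisiting their recursive hub-sampling hierarchy and verifying that fixing one endpoint eliminates exactly one level of the recursion (equivalently, saves a factor of $n/D$ compared with the all-pair cost). I expect this to be the most delicate piece of the argument; the remaining pieces (sampling, union bound, triangle-inequality routing through a hub, and size arithmetic) are routine once that single-source shortcut is in hand.
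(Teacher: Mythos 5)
Your hub-sampling and union-bound step matches the paper, but the load-bearing claim in your step two is not just unproven (you defer it) --- it is false as stated, and no amount of ``collapsing one level of the Kogan--Parter recursion'' can deliver it. You ask, per hub $v$, for a set $F^{\rm out}_v$ of size $\tOh\!\left(n/(\apxD d)^2\right)$ after which \emph{every} vertex reachable from $v$ is within distance $\apxD d/2$ of $v$. Take $G$ a directed path $v_1\to v_2\to\cdots\to v_n$ and $v=v_1$, and set $D=\apxD d/2$. Any path of length at most $D$ from $v_1$ in the augmented graph ends within $D$ path-steps after the head of the last shortcut edge it uses (or within $D$ of $v_1$ if it uses none), so the intervals of length $D+1$ following the heads of the added edges must cover all $n$ vertices; hence at least $n/(D+1)-1=\Omega(n/D)$ edges are required. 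So the true per-hub single-source cost is $\Theta(n/D)$, not $\tOh(n/D^2)$, and in the regime $\apxD d=\omega(\log n)$ that the lemma targets your per-hub budget is off by a factor of $\apxD d$; even with the correct per-hub cost your union over $\tOh(n/\beta)$ hubs only gives $\tOh\!\left(n^2/(\beta\apxD d)\right)$, which is weaker than the claimed bound.

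The paper's proof gets around exactly this obstruction by never demanding that a hub reach every vertex within $\apxD d/2$. It first applies the path/antichain decomposition of \cite{GrandoniILPU21} with $k=n/(\apxD d)$, obtaining $O(n/(\apxD d))$ directed paths and $O(\apxD d)$ independent sets of $G^T$, and spends the $n\log n$ budget (via \cite{Raskhodnikova10}) shortcutting every decomposition path to diameter $2$ --- this is shared infrastructure, not per-hub bookkeeping. It then samples $\tOh(n/\beta)$ hub vertices \emph{and} $\tOh\!\left(n/(\apxD d)^2\right)$ of the decomposition paths, and adds only two edges per (hub, sampled path) pair: hub to the first vertex of the path it can reach, and last vertex of the path that reaches the hub to the hub. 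Correctness is argued pairwise: for a thick pair $(s,t)$, the shortest $s$--$t$ path in the augmented graph meets each decomposition path at most three times and each independent set at most once, so its first and last $\Theta(\apxD d)$ vertices touch $\Omega(\apxD d)$ distinct decomposition paths, one of which is sampled w.h.p.; routing then goes $s\to$ sampled path $\to$ hub $\to$ sampled path $\to t$ within $\apxD d$ steps. If you want to salvage your outline, you must replace your per-hub ``all reachable vertices within $\apxD d/2$'' guarantee with this kind of shared-decomposition, pair-specific argument; as written, the central step of your proposal cannot be made to work.
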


\begin{lemma}[thin pairs] \label{lem:settlethinpairs}
We can efficiently compute $F_2$ such that $|F_2| \leq O\left(\frac{\beta \log^2 n s}{\alpha_D}\right)$ and $(\alpha_D d)$-settles all thin pairs with high probability.      
\end{lemma}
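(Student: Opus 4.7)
The plan is to follow the LP-rounding framework of \cite{BermanBMRY13}, adapted to the bicriteria setting via the decomposition lemma sketched in \Cref{sec:ub-overview}. I would first formulate the LP whose feasibility is witnessed by the promised $(d,s)$-shortcut, and then apply a ``round-or-cut'' scheme inside the ellipsoid method to either produce $F_2$ or find a violated LP constraint.

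First I would solve
\begin{align*}
\sum_{e \in E^T \setminus E} x_e &\le s, \\
\sum_{e \in A} x_e &\ge 1 \quad \text{for every } A \in \mathcal{A}_d, \\
x_e &\ge 0 \quad \text{for every } e \in E^T \setminus E,
\end{align*}
using the ellipsoid method, where the separation oracle is the following randomized procedure. Given a candidate $\mathbf{x}$, sample each edge $e$ independently into $F_2$ with probability $p_e = \min\{1, \tau x_e\}$ for $\tau = \Theta(\beta \log^2 n / \apxD)$. Then check, pair by pair, whether $F_2$ $(\apxD d)$-settles every thin pair; if yes, output $F_2$. Otherwise, take an unsettled pair $(u,v)$, find a minimal $(\apxD d)$-critical set $A$ that $F_2$ fails to hit (which must exist by \Cref{clm:shortcut-hittingset}), and examine its decomposition.

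Second, the analysis both of the rounding and of the separation step relies on the decomposition lemma (\cref{lem:findantispanner}): every $A \in \mathcal{A}_{\apxD d}$ partitions into $\apxD$ sets $A_1, \ldots, A_{\apxD} \in \mathcal{A}_d$. If $\mathbf{x}$ is LP-feasible, then $\sum_{e \in A_i} x_e \ge 1$ for every $i$, so $\sum_{e \in A} x_e \ge \apxD$ and $\Pr[F_2 \cap A = \emptyset] \le \exp(-\tau \apxD)$. A union bound over the $\exp(O(\beta))$ minimal $(\apxD d)$-critical sets in each thin local graph (bounded by a standard layered-cut counting argument on a graph with at most $\beta$ vertices) and over the $O(n^2)$ thin pairs then shows that $F_2$ settles every thin pair with high probability. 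Conversely, if the rounding fails, then by pigeonhole either some $A_i$ in the decomposition has $\sum_{e \in A_i} x_e < 1$, which is a violated LP constraint to feed the ellipsoid method, or all $A_i$ satisfy their constraints and the failure event had probability $\exp(-\Omega(\beta \log n))$, in which case we simply resample.

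Third, since the unknown $(d,s)$-shortcut provides a feasible integral point, the ellipsoid method terminates in polynomial time with a valid $F_2$. The expected size is $\mathbb{E}[|F_2|] = \sum_e p_e \le \tau s = O(\beta \log^2 n \cdot s / \apxD)$, and a Chernoff concentration bound turns this into a high-probability size guarantee; Las Vegas correctness follows by repeating the randomized rounding $O(\log n)$ times if needed and outputting $\bot$ only if every attempt fails. The main obstacle I anticipate is establishing the decomposition lemma cleanly and bounding the number of minimal critical sets inside a $\beta$-vertex local graph; both ingredients are conceptually standard from the directed spanner literature, but they must be rephrased in the shortcut setting while carefully tracking how the bicriteria factor $\apxD$ propagates through both the LP constraints and the rounding.
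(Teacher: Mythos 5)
Your overall framework is the same as the paper's: the identical hitting-set LP over $\mathcal{A}_d$, randomized rounding with probabilities scaled by $\Theta(\beta\,\polylog (n)/\apxD)$, a round-or-cut separation oracle inside the ellipsoid/cutting-plane method, termination guaranteed by the feasible integral $(d,s)$-shortcut, and a union bound over minimal critical sets using the thinness bound (the paper cites Claim 2.5 of \cite{BermanBMRY13}, which gives $|E|\cdot\beta^{\beta}=\exp(O(\beta\log\beta))$ rather than your $\exp(O(\beta))$ -- harmless given your extra $\log$ factors). So the architecture is right and matches the paper.

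The genuine gap is the decomposition lemma, which you invoke as a black box and describe as ``conceptually standard from the directed spanner literature.'' It is not: it is precisely the new ingredient that lets \cite{BermanBMRY13} be turned into a bicriteria algorithm, and the paper spends the bulk of this section proving it (\cref{lem:findantispanner}). Moreover, the form you state -- every $A\in\mathcal{A}_{\apxD d}$ \emph{partitions} into $\apxD$ sets that are themselves members of $\mathcal{A}_d$ (hence minimal $d$-critical sets) -- is stronger than what is actually established and is not obviously true; the paper proves the weaker but sufficient statement that from an $A'\in\mathcal{A}_{\apxD d}$ of fractional weight $<\apxD/9$ one can efficiently extract a single $d$-critical set of weight $<1$ (the batch pieces $A_i$ overlap, each edge lying in up to two of them, and only $\Omega(\apxD)$ batches are formed). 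The hard part, which your proposal gives no argument for, is showing that such a batch piece $A_i$ is genuinely $d$-critical: removing only $A_i$ (rather than all of $A'$) can reopen short detours through the rest of $E^T$, so one must layer $E^T\setminus A'$ by a shortest-path tree, choose the endpoints of the new critical pair carefully (a ``sink'' vertex of the induced subgraph on the boundary layer), and run an induction over layers to rule out shortcuts of length at most $d$. Without that argument (or some substitute), the cut step of your oracle is unjustified, and with it your proof would essentially coincide with the paper's.
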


We will prove these two lemmas later. Meanwhile, we complete the proof of Theorem~\ref{thm:upperbound}. 
We minimize the sum $|F_1| + |F_2|$ by setting the value of $\beta=\frac{n}{d\sqrt{s\apxD{}}}$.\footnote{The only problem is that this value could be much less than $1$ when $d\sqrt{s\apxD{}}>n$, which leads to $(d\apxD{})^2s>n^2$. However, in that case,  we can use the known tradeoff~\cite{KoganP22} to construct a \ssss{s}{\apxD{}d} when $d\sqrt{s\apxD{}}>n$.} 
	Now we have 
 \[|F_1\cup F_2|=O\left(\frac{ns^{0.5}\log^{2}n}{d\apxD{}^{1.5}}+n\log n\right)\le s\cdot O\left(\frac{n\log^{2}n}{d\apxD{}^{1.5}s^{0.5}}\right).\]

\subsection{Settling the thick pairs}\label{subsec:thickedges}

In this Section, we prove Lemma~\ref{lem:settlethickedges}.  For convenience we assume $\apxD{}d=\omega(\log n)$. We will show the case when $\apxD{}d=O(\log n)$ later.
We will use the idea from~\cite{KoganP22} to construct $F_1$. First, the following lemma allows us to decompose a DAG into a collection of paths and independent sets. 

\begin{lemma}[Theorem 3.2 \cite{GrandoniILPU21}]
	There is a polynomial time algorithm given an $n$-vertex acyclic graph $G=(V,E)$ and an integer $k \in[1,n]$, partition $G$ into $k$ directed paths  $P_1,...,P_{k}$ and at most $2n/k$ independent set  $Q_1,...,Q_{2n/k}$ in $G^T$. In other words, $P_1,...,P_{k},Q_1,...,Q_{2n/k}$ are disjoint and $\left(\cup_{i\in[k]}P_i\right)\cup \left(\cup_{i\in[2n/k]}Q_i\right)=V$.
\end{lemma}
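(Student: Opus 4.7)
The plan is to combine a greedy longest-path extraction with a level-based antichain decomposition (Mirsky's theorem). The guiding observation is that in a DAG, the minimum number of antichains in the reachability order needed to cover the vertices equals the length of the longest directed path, and an antichain in the reachability order is precisely an independent set in the transitive closure $G^T$.

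First I would repeatedly compute a longest directed path in the current residual DAG and extract it, as long as the length (in vertices) of this path is strictly greater than $2n/k$. A longest directed path in a DAG is computable in polynomial time by dynamic programming over a topological ordering, so each iteration is efficient. Because every extracted path removes more than $2n/k$ vertices from a graph containing at most $n$ vertices, the loop terminates after fewer than $k/2$ iterations; call the extracted paths $P_1,\dots,P_m$ with $m < k/2$.

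Next I would decompose the residual DAG via Mirsky's theorem: assign to each remaining vertex $v$ a level $\ell(v)$ equal to the maximum number of vertices on a directed path in the residual DAG ending at $v$. Because $\ell$ is strictly increasing along every edge of the residual DAG, the vertex set at each fixed level is an antichain in the residual reachability order, and therefore also an antichain in $G^T$. The number of levels is exactly the length of the longest residual directed path, which by construction is at most $2n/k$, producing independent sets $Q_1,\dots,Q_r$ in $G^T$ with $r \le 2n/k$. Finally, to reach exactly $k$ paths as stipulated, I would promote $k-m$ arbitrary single vertices from the $Q_j$'s to trivial one-vertex directed paths; stripping vertices from an antichain preserves the antichain property, so the resulting collection is still a valid partition.

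The main obstacle is the choice of the cutoff: it must be large enough (at least roughly $n/k$) to keep the number of extracted long paths below $k$, and small enough ($\le 2n/k$) so that Mirsky's theorem certifies at most $2n/k$ antichains on the residual. The value $2n/k$ satisfies both simultaneously, and this balancing is the only place where the numerical bound in the statement is nontrivial. A minor additional point is checking that antichains of the residual DAG remain antichains in the full $G^T$; this is automatic, since the residual vertex set is a subset of $V$ and deleting vertices from the universe can only destroy comparabilities, never create new ones.
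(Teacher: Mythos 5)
There is a genuine gap in the antichain step. Your level decomposition is applied to the \emph{induced subgraph} of $G$ on the residual vertices, but the lemma requires the $Q_j$'s to be independent sets in $G^T$, i.e., antichains of the \emph{original} reachability order. These are not the same thing, and your closing remark gets the direction backwards: deleting vertices can only destroy comparabilities, which means the residual DAG's reachability order is \emph{weaker} than the original one, so an antichain of the residual DAG may well contain pairs that are comparable in $G$ (comparable only via deleted vertices). Concretely, take the path $a_1\to a_2\to a_3\to a_4\to a_5$ together with $u\to a_3$ and $a_3\to v$, with $n=7$ and $k=4$ (so the cutoff $2n/k=3.5$). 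Your greedy step extracts the $a$-path, the residual is $\{u,v\}$ with no edges, and both vertices receive level $1$, so you output $Q_1=\{u,v\}$; but $(u,v)\in E^T$ via $a_3$, so $Q_1$ is not independent in $G^T$. The same phenomenon also breaks the counting: the longest \emph{chain} of the original order surviving in the residual can be much longer than the longest residual \emph{path}, so bounding the latter by $2n/k$ does not bound the number of genuine $G^T$-antichains you need.

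The fix is to run the whole argument in the reachability order rather than in the induced subgraph: repeatedly extract a longest path of $G^T$ restricted to the remaining vertices (a longest chain), and afterwards set $\ell(v)$ to be the length of the longest chain \emph{within the remaining vertex set} ending at $v$. Restricting a partial order to a subset preserves all comparabilities among remaining pairs, so these level sets are antichains of the original order, hence independent in $G^T$, and the stopping rule bounds their number by $2n/k$ exactly as you intended; the extracted $P_i$ are then paths in $G^T$ (chains), which is what the paper's application (shortcutting each $P_i$ to diameter $2$ with transitive-closure edges) needs. Your termination count and the padding to exactly $k$ paths are fine modulo allowing empty paths when the residual runs out of vertices. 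Note also that the paper does not prove this lemma itself but imports it from Theorem~3.2 of \cite{GrandoniILPU21}, so any self-contained proof must be careful on precisely this path-versus-chain distinction.
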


We first apply the following lemma with $k=n/(\apxD{} d)$ to get $P_1,...,P_{8n/(\apxD{} d)}$ and $Q_1,...,Q_{\apxD{} d/4}$. Notice that $\apxD d=\omega(\log n)$ and $\apxD d= O(n^{0.34})$, we can safely assume $8n/(\apxD{} d)$ and $\apxD{} d/4$ are integers without loss of generality.
We will use Lemma 1.1~\cite{Raskhodnikova10}.
\begin{lemma}[Lemma 1.1~\cite{Raskhodnikova10}]\label{lem:pathshortcut}
    For any integer $n\ge 3$, the directed path with length $\ell$ has a $2$-shortcut with at most $\ell\log \ell$ edges.
\end{lemma}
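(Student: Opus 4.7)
The plan is to give a divide-and-conquer construction for the shortcut. Let the path be $v_0 \to v_1 \to \cdots \to v_\ell$. First I would pick the middle vertex $v_m$ with $m = \lfloor \ell/2 \rfloor$, and add to the shortcut the (at most) $\ell$ transitive-closure edges
$\{(v_i, v_m) \mid 0 \le i < m\} \cup \{(v_m, v_j) \mid m < j \le \ell\}$.
After this step, every reachable pair $(v_i, v_j)$ with $i \le m \le j$ is settled within $2$ hops, via the detour $v_i \to v_m \to v_j$. I would then recurse on the two subpaths $v_0 \to \cdots \to v_m$ and $v_m \to \cdots \to v_\ell$, each of length at most $\lceil \ell/2 \rceil$; this handles all remaining reachable pairs, which lie entirely inside one of the two halves.

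For the edge count, let $T(\ell)$ denote the number of shortcut edges produced on a path of length $\ell$. The construction yields the recurrence $T(\ell) \le 2 T(\lceil \ell/2 \rceil) + \ell$ with base case $T(1) = 0$, which I claim solves to $T(\ell) \le \ell \log_2 \ell$. When $\ell$ is a power of two this follows directly by induction: $T(\ell) \le 2 \cdot (\ell/2) \log_2(\ell/2) + \ell = \ell(\log_2 \ell - 1) + \ell = \ell \log_2 \ell$. For general $\ell$ I would either embed the path into a path of length $2^{\lceil \log_2 \ell \rceil}$ and use monotonicity of $T$, or carry out a slightly more careful induction splitting into $\lfloor \ell/2 \rfloor$ and $\lceil \ell/2 \rceil$ directly.

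I do not anticipate any real obstacle here: this is a classical construction (which is why the lemma is attributed to Raskhodnikova~\cite{Raskhodnikova10}), and the only bookkeeping is handling odd $\ell$ in the recurrence, which is taken care of by either of the two routes described above. A minor sanity check: for $\ell = 2$ the path already has diameter $2$ and the bound $\ell \log_2 \ell = 2$ is slack; for $\ell = 3$ we add one shortcut edge (e.g.\ $(v_1, v_3)$) and the bound $3 \log_2 3 > 1$ comfortably accommodates it.
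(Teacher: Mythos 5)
Your construction is correct and is exactly the classical divide-and-conquer argument behind the cited bound; note that the paper itself does not prove this lemma but imports it verbatim from \cite{Raskhodnikova10}, so there is no internal proof to compare against. One small caveat: of your two fallback routes for non-powers of two, the first (embedding into a path of length $2^{\lceil\log_2\ell\rceil}$ and using monotonicity) only yields roughly $2\ell(\log_2\ell+1)$, not $\ell\log_2\ell$; the second route does work, e.g.\ by observing that each split adds at most $\ell-2$ \emph{new} edges (two of the $\ell$ edges already lie on the path) and proving the strengthened hypothesis $S(\ell)\le \ell\log_2\ell-\ell$ by induction on the recurrence $S(\ell)\le S(\lfloor\ell/2\rfloor)+S(\lceil\ell/2\rceil)+\ell-2$, where the odd case only needs $\ell\log_2\bigl(1+1/\ell\bigr)\le 2$.
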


We first add $n\log n$ edges to $F_1$ and reduce the diameter of every path $P_i$ to $2$. To accomplish this, for each path, we use \cref{lem:pathshortcut}. Since different paths are disjoint regarding vertices, $n\log n$ edges suffice.

Next, let $R \subseteq V$ be obtained by sampling $\min\left((9\log n)\cdot n/\beta, n\right)$ vertices from $V$ uniformly at random. The algorithm also samples $\min\left((999\log n)\cdot n/(\apxD{} d)^2,n/(\apxD{} d)\right)$ paths from $P_1,...,$ $P_{n/(\apxD{} d)}$ uniformly at random; let ${\mathcal Q}$ denote the set of sampled paths. For each vertex $u\in R$ and path $p\in {\mathcal Q}$, add $(u,v_1)$ to $F_1$ where $v_1$ is the first node in $p$ that $u$ can reach (if it exists), and add $(v_2,u)$ to $F_1$ where $v_2$ is the last node in $p$ that can reach $u$ (if it exists). Observe that $\beta$ is between $1$ and $n$, $(\apxD d)$ is between $\omega(\log n)$ and $O(n^{0.34})$, so both $(9\log n)\cdot n/\beta$ and $(999\log n)\cdot n/(\apxD{} d)^2$ will not be too small, and we can safely assume they are integers without loss of generality.

\settlethick*
\begin{proof}
It is straightforward to verify that $|F_1| \le (999n^2\log^2 n)/(\beta\apxD{}^2 d^2)+n\log n$ by the construction.

Suppose $(s,t)\in E^T$ is a thick pair. Since $|V^{s,t}|\ge \beta$, a vertex $u\in R\cap V^{s,t}$ exists with high probability (w.h.p.). %
Let $G'$ be the graph obtained after adding all edges that reduce the diameter for each $P_i$. Denote the shortest path between $s$ and $t$ in $G'$ as $P_{s,t}$. 
The path $P_{s,t}$ intersects each path $P_i$ at no more than three nodes. Otherwise, there are four nodes in $P_i\cap P_{s,t}$ $v_1,v_2,v_3,v_4$ such that $v_1$ has distance at least $3$ to $v_4$ since $P_{s,t}$ is a shortest path, which contradict the fact that $P_i$ has diameter $2$. The path $P_{s,t}$ intersects each $Q_i$ at no more than one node since $Q_i$ is an independent set on $E^T$.
Therefore, if we disregard all nodes in $Q_i$ from $i=1$to $i=\apxD{} d/ 4$ (which incurs an additional $\apxD{} d/4$ steps) and examine the first $\apxD{} d/4$ vertices and the last $\apxD{} d/4$ vertices of $P_{s,t}$, both of them will intersect a path in ${\mathcal Q}$ w.h.p. since we sample $(999\log n)\cdot n/(\apxD{} d)^2$ paths into ${\mathcal Q}$ among all $8n/(\apxD d)$ paths. This implies that $s$ can first use a $\apxD{} d/4+\apxD{} d/9+1$ path to reach $u$, and then $u$ can use another $1+\apxD{} d/9+\apxD{} d/4$ path to reach $t$. 
\end{proof}

For the case when $\apxD{}d=O(\log n),\apxD{}d>1$, to get similar result as~\cref{lem:settlethickedges}, we want $|F_1|=O(n^2\log n/\beta+n)$. This is easy to construct by sampling $(n/\beta)\log n$ nodes, where w.h.p. one of them will be in $V^{s,t}$ for every thick pair $s,t$. For every sample node $u$, we just need to include $(v,u)$ to $F_1$ for every $v$ that can reach $u$, and $(u,v)$ to $F_1$ for every $v$ that $u$ can reach. Now every thick pair has distance $2$ in $F_1$. 

For the extreme case when $\apxD{}d=1$, there is a unique way of adding edges, which is connecting every reachable pair by one edge, thus, we ignore this situation.

\subsection{Settling the thin pairs}\label{subsec:thinedges}

\begin{definition}[Critical sets]\label{def:antispanner}
A set $A\subseteq E^T$ is a $k$-critical set of a pair $(u,v)\in E^T$ if $E^T\backslash A$ contains no path from $u$ to $v$ with a length of at most $k$. If there does not exist an $A'\subset A$ such that $A'$ is also a $k$-critical, then we say $A$ is minimal.
\end{definition}

\begin{definition}[$\mathcal{A}_k$]\label{def:antispannerset}
Let $\mathcal{A}_k$ consist of all sets $A$ satisfying both (i) $A$ is a minimal critical set of some thin pair, and (ii) $A\cap E=\emptyset$.
\end{definition}

This gives an alternate characterization of a shortcut set. 

\begin{claim}[Adapted from \cite{BermanBMRY13}] 
    An edge set $E'$ is a $d$-shortcut set for all thin pairs if and only if $E' \cap A \neq \emptyset$  for all $A\in\mathcal{A}_{d}$. 
\end{claim}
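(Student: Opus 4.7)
The plan is to prove the two directions by contrapositive/direct argument respectively, using the fact that any edge set $E'\subseteq E^T\setminus E$ satisfies $E\cup E'\subseteq E^T$, so every path in $E\cup E'$ is a path in $E^T$.

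For the forward direction ($\Rightarrow$), I would fix any $A\in\mathcal{A}_d$ and let $(u,v)$ be a thin pair for which $A$ is a minimal $d$-critical set. Since $A\cap E=\emptyset$ by definition of $\mathcal{A}_d$, if we also had $E'\cap A=\emptyset$, then $E\cup E'\subseteq E^T\setminus A$. By $d$-criticality there is no $u\!\to\!v$ path of length $\le d$ in $E^T\setminus A$, hence no such path in $E\cup E'$, contradicting that $E'$ is a $d$-shortcut for thin pairs. Thus $E'\cap A\neq\emptyset$.

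For the reverse direction ($\Leftarrow$), I would argue by contradiction: assume some thin pair $(u,v)$ is not $d$-settled by $E'$, and set $B:=E^T\setminus(E\cup E')$. Since $E\cup E'=E^T\setminus B$ contains no $u\!\to\!v$ path of length $\le d$, the set $B$ is $d$-critical for $(u,v)$, and clearly $B\cap E=\emptyset$. Now shrink $B$ to a minimal $d$-critical subset $A\subseteq B$ by iteratively removing edges whose removal preserves criticality. The subset $A$ is still disjoint from $E$ (as $A\subseteq B$) and is a minimal $d$-critical set for the thin pair $(u,v)$, so $A\in\mathcal{A}_d$. But $A\subseteq B$ means $A\cap E'=\emptyset$, contradicting the hypothesis that $E'$ hits every member of $\mathcal{A}_d$.

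The argument is essentially bookkeeping: the only mild subtlety is the shrinking step in the $\Leftarrow$ direction, where one must verify that minimality can always be achieved while keeping the set disjoint from $E$ and $d$-critical for the same pair $(u,v)$. This is immediate because removing edges from $B$ can only enlarge $E^T\setminus B$, and we stop as soon as criticality would break, yielding a minimal critical set that still lives inside $B$ (and hence inside $E^T\setminus E$). I do not expect any real technical obstacle beyond carefully tracking the definitions of $\mathcal{A}_d$ (minimality, the $A\cap E=\emptyset$ condition, and the requirement that the witnessing pair be thin).
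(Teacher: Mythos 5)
Your proof is correct and follows essentially the same argument as the paper: the forward direction is identical, and the reverse direction uses the same contradiction via $B=E^T\setminus(E\cup E')$. Your explicit shrinking of $B$ to a minimal critical set is a small extra care step that the paper's proof glosses over (it is needed because $\mathcal{A}_d$ contains only minimal critical sets), and it is handled correctly.
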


\begin{proof}
We briefly sketch the proof of this claim below.
\begin{itemize}
    \item[($\Rightarrow$)] We first argue that $E'$ must intersect each $A\in\mathcal{A}_d$ with at least one edge. Suppose, on the contrary, that $E'$ is disjoint from a set $A\in\mathcal{A}_d$, which is a $d$-critical set of some pair $(u,v)$. Then, $E'\cup E\subseteq E^T \backslash A$, and the distance between $u$ and $v$ in $E'\cup E$ is larger than $d$, which contradicts our assumption. 
    
    \item[($\Leftarrow$)] On the other hand, assume that $E'$ intersects every $A\in\mathcal{A}_d$ with at least one edge then it must be a $d$-shortcut. Otherwise, there exists a pair $(u,v)\in E^T $ such that, in $E'\cup E$, $u$ has a distance larger than $d$ to $v$. This implies that $A= E^T\backslash(E'\cup E)$ is a $d$-critical set of $u,v$ and $A \cap E' = \emptyset$ (a contradiction). 
\end{itemize}
\end{proof}

We define the following polytope $\cP_{G,d}$. It includes at most $n^2$ variables: for each $e\in E^T$, there is a corresponding variable $x_e$. The polytope has an exponentially large number of constraints. However, it must be non-empty since we assume $G$ admits a \ssss{s}{d}.

\begin{align*}
	\text{Polytope }\cP_{G,d}: \qquad\sum_{e\in E^T\backslash E}x_e &\le s \\
	\sum_{e\in A}x_e &\ge 1\qquad \forall A\in \mathcal{A}_d\\
	x_e &\geq 0\qquad \forall e\in E^T
\end{align*}

The algorithm employs the cutting-plane method to attempt to find a point within the polytope. According to established analyses, the running time of the cutting-plane method is polynomial with respect to the number of variables, provided that a separation oracle with a polynomial running time exists based on the number of variables. A separation oracle either asserts that the point lies within the polytope or returns a violated constraint, which can function as a cutting plane. Since the constraints $\sum_{e\in E^T\backslash E}x_e\le s$ and $x_e\ge 0$ can be verified in polynomial time, identifying a cutting plane is straightforward if either of these constraints is violated. Consequently, we assume $\sum_{e\in E^T\backslash E}x_e\le s$ and $x_e\ge 0$.

In the subsequent algorithm, we accept a point as input and either return a non-trivial violated constraint (one of $\sum_{e\in A}x_e\ge 1$), which can act as a separation plane, or output a set $E_2$ that settles all thin pairs.

\begin{algorithm}[H]
	
	\caption{{\sc Cut-or-Round}$({\bf x})$}\label{alg:seperationoracle}
	
	\KwData{A vector ${\bf x} \in [0,1]^{E^T \setminus E}$ satisfying $\sum_{e\in E^T\backslash E}x_e\le s$ and $x_e\ge 0$.}
	\KwResult{A set $A \in  \mathcal{A}_d$ s.t. $\sum_{e\in A}x_e<1$, or a set $F_2\subset E^T \setminus E$ that $(\apxD{} d)$-settles all thin pairs.}
	Include edge $e\in E^T$ independently into $F_2$ with probability $(500\log n)(\beta/\apxD{})x_e$\;
	
	\If{all thin pairs are $(\apxD{} d)$-settled}{\If{$|F_2\backslash E|\le (1000\log^2 n)(\beta/\apxD{})s$}{\Return{$F_2$}\;}\Else{\Return{fail}\;}}
	\Else{
		Use $F_2$ to find a $(\apxD{} d)$-critical set $A'\in\mathcal{A}_{\apxD{} d}$ with $F_2\cap A'=\emptyset$ using the algorithm from Claim 2.4 in~\cite{BermanBMRY13}\;
		\If{$\sum_{e\in A'}x_e<\apxD{}/9$}{
			Find a $d$-critical set $A\subseteq \mathcal{A}_d$  such that $\sum_{e\in A}x_e\ge 1$ using~\cref{lem:findantispanner}\;
			\Return violated constraint $\sum_{e\in A}x_e\ge 1$\;}
		\Else{\Return{fail}\;}
	}
\end{algorithm}

\begin{claim}\label{lem:notfail}
	The algorithm will fail with probability at most $\frac{1}{n^{\omega(1)}}$.
\end{claim}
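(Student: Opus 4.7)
The proof mirrors the failure analysis of the analogous separation oracle in~\cite{BermanBMRY13}. \Cref{alg:seperationoracle} can fail only in two disjoint scenarios, which I plan to bound separately: (A) $F_2$ does $(\apxD d)$-settle every thin pair but is rejected because $|F_2 \setminus E| > (1000 \log^2 n)(\beta/\apxD) s$; or (B) some thin pair is left unsettled while the minimal antispanner $A'\in\mathcal{A}_{\apxD d}$ extracted from $F_2$ satisfies $\sum_{e\in A'} x_e \ge \apxD/9$. It suffices to show each has probability $n^{-\omega(1)}$ and take a union bound.

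For case (A), the approach is a direct multiplicative Chernoff bound on $|F_2 \setminus E|$. Since each edge of $E^T\setminus E$ is included independently with probability $(500\log n)(\beta/\apxD) x_e$ and $\sum_e x_e\le s$, the expectation $\mu := E[|F_2\setminus E|]$ satisfies $\mu\le (500\log n)(\beta/\apxD)s$, while the rejection threshold is exactly $2\log n\cdot \mu$ (with this $\mu$ upper bound plugged in). I will verify that the assumptions $s\ge n$, $\beta\ge 1$, and $\apxD\le n$ force $\mu=\Omega(\log n)$, after which the standard form $\Pr[X>(1+\delta)\mu]\le \exp(-\delta\mu/3)$ applied with $\delta\approx 2\log n$ yields $\exp(-\Omega(\mu\log n))=n^{-\omega(1)}$.

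For case (B), the approach is a union bound over candidate minimal antispanners. The key per-set bound is that for any fixed $A'\in\mathcal{A}_{\apxD d}$ with $\sum_{e\in A'} x_e\ge\apxD/9$, independence of the sampling gives
\[
\Pr[A'\cap F_2=\emptyset]\le \exp\!\left(-(500\log n)(\beta/\apxD)\cdot(\apxD/9)\right)=n^{-\Omega(\beta)}.
\]
I then combine this with a bound on the number of minimal $(\apxD d)$-critical sets across all thin pairs. The structural observation I rely on is that any minimal $(\apxD d)$-critical set for a thin pair $(u,v)$ must be contained in $E^{u,v}$ (otherwise some edge is redundant, contradicting minimality), and by thinness $|V^{u,v}|<\beta$. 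The decomposition of~\cref{lem:findantispanner} then bridges the $\mathcal{A}_{\apxD d}$-analysis with the $\mathcal{A}_d$-constraints of the LP, exactly as in~\cite{BermanBMRY13}.

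The main obstacle is making the union bound quantitatively tight across the entire parameter regime of \Cref{thm:upperbound}, in particular when $\beta$ is as large as $\sqrt{n}$: a naive count of $2^{O(\beta^2)}$ minimal antispanners per thin pair would swamp the $n^{-\Omega(\beta)}$ per-set tail. I plan to resolve this by importing the refined antispanner-counting argument of~\cite{BermanBMRY13}, which groups minimal critical sets by their LP weight (or equivalently appeals to a flow-based LP duality) and exploits the layered DAG structure to absorb the exponential count into the tail, so that the total probability for case (B) remains $n^{-\omega(1)}$.
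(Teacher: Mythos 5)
Your overall architecture is the same as the paper's: split the failure event into (A) the size test and (B) the existence of a heavy set in $\mathcal{A}_{\apxD d}$ missed by $F_2$, handle (A) by a Chernoff bound (your calculation matches the paper's, which gets $e^{-(\log n)(\beta/\apxD)s}=n^{-\omega(1)}$ using $s\ge n$ and $\apxD=O(n^{0.34})$), and handle (B) by the per-set miss probability $\exp\bigl(-(500\log n)(\beta/\apxD)\cdot(\apxD/9)\bigr)=n^{-\Omega(\beta)}$ followed by a union bound over all of $\mathcal{A}_{\apxD d}$ (which is the right event to union over, since the set extracted by the algorithm is adaptive).

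The gap is at the crux of case (B): how the union bound is closed. You correctly flag that an unstructured count of subsets of $E^{u,v}$ would be $2^{O(\beta^2)}$ and would swamp the $n^{-\Omega(\beta)}$ tail, but your proposed fix --- ``grouping minimal critical sets by their LP weight'' or ``flow-based LP duality'' exploited on the layered DAG --- is not an argument that exists in \cite{BermanBMRY13}, and as described it gives you nothing to execute. The ingredient the paper actually uses (Claim 2.5 of \cite{BermanBMRY13}) is a purely combinatorial consequence of \emph{minimality}: a minimal $(\apxD d)$-critical set of a thin pair $(u,v)$ is determined by a level assignment to the fewer than $\beta$ vertices of the local graph $G^{u,v}$, so $|\mathcal{A}_{\apxD d}|\le |E|\cdot \beta^{\beta}=\exp(O(\beta\log\beta))$. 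Since $\beta\le n$, this count is absorbed by the per-set tail $\exp(-\Omega(\beta\log n))$ directly; no weight grouping or duality is needed. Until you state and use this (or an equivalent) bound on the number of \emph{minimal} critical sets, case (B) is not proved. A minor additional point: \cref{lem:findantispanner} is part of the oracle's correctness (decomposing a low-weight $(\apxD d)$-critical set into $\mathcal{A}_d$ sets) and plays no role in the failure-probability bound, so invoking it here is a red herring.
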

\begin{proof}
	The first fail condition is $|F_2\backslash E|>(1000\log^2 n)(\beta/\apxD{})s$. Notice that each edge $e$ is included in $F_2$ independently with probability $(500\log n)(\beta/\apxD{})x_e$ where $\sum_{e\in E^T\backslash E}x_e\le s$. By Chernoff bound, $|F_2\backslash E|>(1000\log^2 n)(\beta/\apxD{}) s$ happens with probability at most $e^{-(\log n)(\beta/\apxD{}) s}$. Remember that $s\ge n$ and $\apxD{}=O(n^{0.34})$, which means we have $e^{-(\log n)(\beta/\apxD{}) s}\le n^{-\omega(1)}$.
	
	Now we show that the second fail condition happens with small probability. We define event $\mathcal{E}$ as ``all $B\subseteq A_{\apxD{} d}$ with $\sum_{e\in B}x_e\ge \apxD{}/9$ satisfies $B\cap F_2\not=\emptyset$''. If the second fail is triggered, then there exists a set $A'\in\mathcal{A}_{\apxD{} d}$ with $\sum_{e\in A'}x_e\ge \apxD{}/9$ satisfies $A'\cap F_2=\emptyset$, which means $\mathcal{E}$ does not happen. Thus, the probability that the second fail is triggered is bounded by $1-\Pr[\mathcal{E}]$. For $B\subseteq A_{\apxD{} d}$ with $\sum_{e\in B}x_e\ge \apxD{}/9$, the probability that $B\cap F_2=\emptyset$ is bounded by $\exp\left(-9\beta\log n\right)$ according to Chernoff bound (remember that each edge is included in $F_2$ with probability $(500\log n)(\beta/\apxD{})x_e$). Now we count the size of $\mathcal{A}_{\apxD{} d}$. According to Claim 2.5 in~\cite{BermanBMRY13}, we have $|\mathcal{A}_{\apxD{} d}|\le |E|\cdot \beta^\beta\le \exp\left(2\beta\log\beta\right)$. Finally, by using union bound, $\Pr[\mathcal{E}]\ge 1-\frac{1}{n^{\omega(1)}}$.
\end{proof}
\begin{lemma}[critical set decomposition]\label{lem:findantispanner}
	There exists a polynomial time algorithm that, given $A'\in\mathcal{A}_{\apxD{} d}$ with $\sum_{e\in A'}x_e< \apxD{}/9$, outputs $A\in\mathcal{A}_{d}$ such that $\sum_{e\in A}x_e< 1$. 
\end{lemma}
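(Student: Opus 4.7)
The plan is to construct $A$ by ``slicing'' $A'$ along a BFS structure relative to the thin pair $(u,v)$ for which $A'$ is a minimal $(\apxD d)$-critical set. I would first identify $(u,v)$ and compute $f(w) = \distt{E^T\setminus A'}{u,w}$ for every $w$. Two structural observations follow from minimality: $f(v) > \apxD d$, and every edge $(x,y) \in A'$ satisfies $f(x) \leq \apxD d - 1$---otherwise the witness $u$-to-$v$ path of length at most $\apxD d$ in $E^T \setminus (A' \setminus \{(x,y)\})$ that uses $(x,y)$ (guaranteed by minimality of $A'$) could not exist.

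Next I would partition $A'$ into $\apxD$ disjoint slabs,
\[A_i = \{(x,y) \in A' : (i-1)d \le f(x) < id\}, \qquad i = 1,2,\ldots,\apxD.\]
The central claim is that each $A_i$ contains a $d$-critical set for some thin pair $(u_i,v_i)$ with both endpoints in $V^{u,v}$; such pairs are automatically thin because $V^{u_i,v_i}\subseteq V^{u,v}$ and $|V^{u,v}| < \beta$. The natural choice is to take $u_i$ and $v_i$ as two consecutive ``BFS markers'' at layers roughly $(i-1)d$ and $id$ along a shortest $u$-to-$v$ witness path in $E^T\setminus A'$. The intuition is that in $E^T\setminus A_i$, any short $u_i$-to-$v_i$ path is forced either to climb through the removed slab using $E^T\setminus A'$ edges (each of which raises $f$ by at most one, so requires $\geq d$ steps) or to detour via $A'\setminus A_i$ edges, whose sources lie outside $[(i-1)d, id)$.

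Given this decomposition, averaging yields $\sum_{i=1}^{\apxD} \sum_{e \in A_i} x_e = \sum_{e \in A'} x_e < \apxD/9$, so pigeonhole provides some $i^\star$ with $\sum_{e \in A_{i^\star}} x_e < 1/9$. I would then greedily extract a minimal $d$-critical subset $A \subseteq A_{i^\star}$ for $(u_{i^\star},v_{i^\star})$ by removing edges one at a time and testing criticality via BFS in $E^T$ (polynomial time). Since $A\subseteq A'$ we have $A\cap E = \emptyset$, and $A$ is minimal $d$-critical for a thin pair, so $A\in\mathcal{A}_d$ and $\sum_{e\in A}x_e \leq \sum_{e \in A_{i^\star}} x_e < 1$, as required.

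The main obstacle will be rigorously justifying the claim that each slab $A_i$ is $d$-critical for some suitable pair. The subtle point is that edges in $A'\setminus A_i$, whose source has $f$-value outside $[(i-1)d,id)$, can in principle offer a short ``bypass'' around the slab---for example by first descending to some $w$ with $f(w)<(i-1)d$ and then leaping via an $A'$ edge of the form $(w,y)$ directly close to $v_i$. Ruling out such bypasses is the technical heart of the argument and will likely require a careful choice of $(u_i,v_i)$, possibly exploiting the symmetric distance $g(w) = \distt{E^T\setminus A'}{w,v}$ or augmenting each slab with a few ``interface'' edges; everything else in the proof is routine bookkeeping.
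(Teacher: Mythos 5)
Your overall skeleton matches the paper's: layer the vertices by distance from the source of the pair in $E^T\setminus A'$, cut $A'$ into about $\apxD$ pieces along this layering, argue that a piece is $d$-critical for some (thin) pair, and finish by pigeonhole plus a greedy pass to restore minimality. The bookkeeping parts of your write-up (the averaging, $A\cap E=\emptyset$, minimality by greedy pruning, and the nice remark that any pair chosen on a witness path is automatically thin because $V^{u_i,v_i}\subseteq V^{u,v}$) are fine. But the one step you explicitly leave open --- that each slab is $d$-critical for a suitable pair --- is the entire content of the lemma, and your specific setup gives no way to close it. Defining $A_i=\{(x,y)\in A':(i-1)d\le f(x)<id\}$ by the \emph{source} layer only means that every edge of $A'$ whose tail lies below the slab stays present in $E^T\setminus A_i$; since such edges are deleted when $f$ is computed, their heads can sit at arbitrarily high layers, so they genuinely can leap over or into the slab, and nothing about taking $u_i,v_i$ as consecutive markers on a shortest witness path prevents $u_i$ from reaching such a tail (a vertex of smaller $f$-value can perfectly well be reachable from $u_i$ in a DAG). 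You name this bypass problem yourself and defer it to ``a careful choice of $(u_i,v_i)$'', but that deferred choice is the technical heart, so as written the proof has a genuine gap.

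The paper closes the gap with two devices you would need (or equivalents). First, the batches have width $2d$ and $A_i$ consists of \emph{all} edges of $A'$ with at least one endpoint in the batch; each edge is then counted at most twice, and with at least $\apxD/3$ batches and $\sum_{e\in A'}x_e<\apxD/9$ the pigeonhole still yields a batch of weight less than $1$, while inside the batch no edge of $A'$ survives in $E^T\setminus A_i$. Second, the pair is not taken from the witness path: letting $S\subseteq L_{2(i-1)d}$ be the vertices having an $E^T$-edge into $L_{2id-1}$, acyclicity of $G^T$ provides some $u\in S$ with no $E^T$-edge to any other vertex of $S$, and $v$ is any out-neighbour of $u$ in $L_{2id-1}$. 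An induction on the distance $x$ from $u$ in $E^T\setminus A_i$ then shows that every reached vertex either cannot reach $v$ or lies exactly in layer $2(i-1)d+x$: the sink property of $u$ is precisely what kills the ``descend below the slab and re-enter'' bypass (a re-entry point would be another vertex of $S$ that $u$ can reach), and the inclusion in $A_i$ of every batch-incident $A'$-edge is what prevents skipping layers inside the batch. This gives $\distt{E^T\setminus A_i}{u,v}>d$, which is the statement your proposal assumes rather than proves.
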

\begin{proof}
	The algorithm first use polynomial time to find $(s,t)\in E^T$ such that $A'$ is a $(\apxD{} d)$-critical set of $(s,t)$. Then the algorithm constructs a shortest path tree rooted at $s$ on the subgraph $E^T\backslash A'$, where all the nodes in the $i$-th layer of the tree has distance $i$ from $s$. Denote the vertex set of the $i$-th layer as $L_i$. According to the definition of critical set, $t$ is on at least the $(\apxD{} d+1)$-th layer. The algorithm devides the first $\apxD{} d$ layers into at least $\apxD{}/3$ batches, where the $i$-th batch contains all vertices between layer $2(i-1)d$ to $2i\cdot d-1$. Denote the set of all the edges in $A'$ that has at least one endpoint in the $i$-th batch as $A_i$. Since each edge in $A'$ will be included in at most two $A_i$, it is easy to see that at least one of $A_i$ has the property $\sum_{e\in A_i}x_e<1$. 
	
	Now we prove that for any $i$, $A_i\in\mathcal{A}_d$. Since $A_i$ is a subset of $A'$, the second condition, i.e., $A_i\cap E=\emptyset$ is satisfied trivially. We only need to verify that $A_i$ is a $d$-critical set of some edge in $E^T$. The idea is to choose a vertex in $L_{2(i-1)d}$ and another vertex in $L_{2id-1}$, and argue that they are reachable and have distance more than $d$ in $E^T\backslash A_i$. Let $S$ contain all the vertices in $L_{2(i-1)d}$ that has at least one edge towards a vertex in $L_{2id-1}$. Notice that $G$ (and also $G^T$) is acyclic (\cref{rem:assumption}), which also means the induced subgraph $G^T[S]$ is acyclic. Thus, there exists a vertex $u\in S$ such that it has no edge in $E^T$ towards other vertices in $S$. Take an arbitrary vertex $v\in L_{2id-1}$ such that $(u,v)\in E^T$, now we argue that $u$ has distance more than $d$ to $v$ in $E^T\backslash A_i$.  
	
	We prove it by induction. Induction hypothesis: any vertex that $u$ has distance $x$ to in $E^T\backslash A_i$ is one of the following two types (i) a vertex that cannot reach $v$ in $G$ (ii) a vertex in $L_{2(i-1)d+x}$. If the induction hypothesis is correct for any $0\le x\le d+10$, then $u$ cannot have distance at most $d$ to $v$ in $E^T\backslash A_i$. Now we prove the induction hypothesis. When $x=0$, the hypothesis is correct. For $x>0$, suppose $w_2$ is a vertex with $\distt{E^T\backslash A_i}{u,w_2}=x$, then there exists an edge $(w_1,w_2)\in E^T\backslash A_i$ such that $\distt{E^T\backslash A_i}{u,w_1}=x-1$. According to the induction hypothesis, we can assume $w_1$ is either type (i) or type (ii). If $w_1$ is type (i), then $w_2$ cannot reach $v$ in $G$, which means $w_2$ is also type (i). Now suppose $w_1$ is type (ii) and can reach $v$ in $G$. First of all, $w_2$ cannot be in the first $2(i-1)d$-th layer, otherwise $w_2$ has a path $p$ in $G$ to $v$ where $p$ must contain a vertex in $L_{2(i-1)d}$ (two consecutive vertices in $p$ cannot skip a layer since all edges in $p$ is in $E$, which is also in $E^T\backslash A'$), which means $u$ has an edge in $E^T$ to another vertex in $L_{2(i-1)d}$, leading to a contradiction. Then, $w_2$ cannot be a vertex in $L_{2(i-1)d+b}$ where $1\le b<x$ since $u$ has distance less than $x$ to them according to induction hypothesis. $w_2$ cannot be a vertex in $L_{2(i-1)d+b}$ where $b>x$, otherwise $(w_1,w_2)$ is not in $A_i$, also not in $A'$, which contradicts the fact that layers are constructed by shortest path tree on $E^T\backslash A'$. $w_2$ cannot be a vertex outside the tree because $s$ can reach $w_2$ in $E^T\backslash A'$. Finially, $w_2\in L_{2(i-1)d+x}$ and the induction hypothesis holds. 
\end{proof}

As mentioned in \Cref{sec:ub-overview}, Theorem~\ref{thm:upperbound} now follows combining the size of sets $F_1$ and $F_2$ from \Cref{lem:settlethickedges} and \Cref{lem:settlethinpairs} respectively.

\bibliography{Bibliography}

\end{document}